\def \R {\mathbb{R}}
\def \W {\mathbf{W}}
\def \one {\mathbf{1}}
\def \x {\mathbf{x}}
\def \E {\mathbb{E}}
\def \z {\mathbf{z}}
\def \y {\mathbf{y}}
\def \N {\mathcal{N}}
\def \V {\mathcal{V}}
\def \D {\mathcal{D}}
\def \C {\mathcal{C}}
\newtheorem{theorem}{Theorem}
\newtheorem{lemma}{Lemma}
\newtheorem{definition}{Definition}
\newtheorem{assumption}{Assumption}
\newtheorem{remark}{Remark}
\title{Near-Optimal Online Learning for Multi-Agent Submodular Coordination: Tight Approximation and Communication Efficiency}
\author{Qixin Zhang$^{1}$\quad Zongqi Wan$^{4}$\quad Yu Yang$^{2}$\quad Li Shen$^{3}$\quad Dacheng Tao$^{1}$\\
$^{1}$Nanyang Technological University\quad $^{2}$City University of Hong Kong\quad $^{3}$Sun Yat-sen University\\$^{4}$Institute of Computing Technology, Chinese Academy of Sciences\\
\texttt{qxzhang4-c@my.cityu.edu.hk};\quad \texttt{wanzongqi20s@ict.ac.cn};\quad\texttt{yuyang@cityu.edu.hk}\\\texttt{mathshenli@gmail.com};\quad \texttt{dacheng.tao@ntu.edu.sg}}
\begin{document}
	\maketitle

\begin{abstract}
Coordinating multiple agents to collaboratively maximize submodular functions in unpredictable environments is a critical task with numerous applications in machine learning, robot planning and control. The existing approaches, such as the OSG algorithm,  are often hindered by their poor approximation guarantees and the rigid requirement for a fully connected communication graph. To address these challenges, we firstly present a $\textbf{MA-OSMA}$ algorithm, which employs the multi-linear extension to transfer the discrete submodular maximization problem into a continuous optimization, thereby allowing us to reduce the strict dependence on a complete graph through consensus techniques. Moreover, $\textbf{MA-OSMA}$ leverages a novel surrogate gradient to avoid sub-optimal stationary points. To eliminate the computationally intensive projection operations in $\textbf{MA-OSMA}$, we also introduce a projection-free $\textbf{MA-OSEA}$ algorithm, which effectively utilizes the KL divergence by mixing a uniform distribution. Theoretically, we confirm that both algorithms achieve a regret bound of $\widetilde{O}(\sqrt{\frac{C_{T}T}{1-\beta}})$ against a  $(\frac{1-e^{-c}}{c})$-approximation to the best comparator in hindsight, where $C_{T}$ is the deviation of maximizer sequence, $\beta$ is the spectral gap of the network and $c$ is the joint curvature of submodular objectives. This result significantly improves the $(\frac{1}{1+c})$-approximation provided by the state-of-the-art OSG algorithm. Finally, we demonstrate the effectiveness of our proposed algorithms through simulation-based multi-target tracking.
\end{abstract}
	\section{Introduction}
 Recent years have witnessed an upsurge in research focused on leveraging submodular functions to coordinate the actions of multiple agents in accomplishing tasks that are spatially distributed. A compelling example is the dynamic deployment of mobile sensors, particularly unmanned aerial vehicles (UAVs), for multi-target tracking~\citep{zhou2018resilient,corah2021scalable} as depicted in Figure~\ref{figue_intro_target}. In this scenario, at each critical moment of decision, every mobile sensor needs to determine its trajectory and velocity through interactions with others to effectively track all moving points of interest. The primary challenges of this tracking challenge lie in the unpredictability of the targets' movements and the limited sensing capabilities of agents. To address these issues, various modeling techniques have been developed, including one based on dynamically maximizing a sequence of submodular functions that capture the spatial relationship between sensors and moving targets~\citep{xu2023online,rezazadeh2023distributed,robey2021optimal}. As a result, the problem of target tracking can be cast into a specific instance of multi-agent online submodular maximization(MA-OSM) problem. Besides target tracking, the MA-OSM problem also offers a versatile framework for a variety of complex tasks such as area monitoring~\citep{schlotfeldt2021resilient,li2023submodularity}, environmental mapping~\citep{atanasov2015decentralized,liu2021distributed}, data summarization~\citep{mirzasoleiman2016fast,mirzasoleiman2016distributed} and task assignment~\citep{qu2019distributed}.  Motivated by these practical use cases, this paper delves into the multi-agent online submodular maximization problem.

	To tackle the aforementioned MA-OSM problem, \citet{xu2023online} have recently proposed an \emph{online sequential greedy} (OSG) algorithm, building upon the foundations of the classical greedy method ~\citep{fisher1978analysis}. Nevertheless, this online algorithm suffers from two notable limitations: \textbf{i) Sub-optimal Approximation:} In contrast with the tight $(\frac{1-e^{-c}}{c})$-approximation ratio~\citep{vondrak2010submodularity,bian2017guarantees}, OSG only can guarantee a sub-optimal $(\frac{1}{1+c})$-approximation where $c\in[0,1]$ is the joint curvature of submodular objectives; \textbf{ii) Requirement of a Fully Connected Communication Network:} OSG begins by assigning a unique order to each agent and then  requires every agent to have full access to the decisions made by all predecessors, which leads to a \emph{complete} directed acyclic communication graph (Refer to the left side of Figure~\ref{figue_intro_target}). As the number of agents grows, the communication overheads associated with this operation may become prohibitively high. Furthermore, \citet{grimsman2018impact} have pointed out that the approximation guarantee of OSG continuously degrades as the communication graph becomes less dense. This highlights the necessity of a \emph{complete} communication graph for maintaining the effectiveness of OSG. Given these disadvantages of OSG algorithm, the objective of this paper is to address the following  question:	\vspace{-0.6em}
	\begin{center}
		\emph{Is it possible to devise an online algorithm with tight $(\frac{1-e^{-c}}{c})$-approximation for \emph{MA-OSM} problem over a connected and sparse communication network?}
\end{center}
		\begin{figure}
		\vspace{-2.6em}
		\centering
		\includegraphics[scale=0.62]{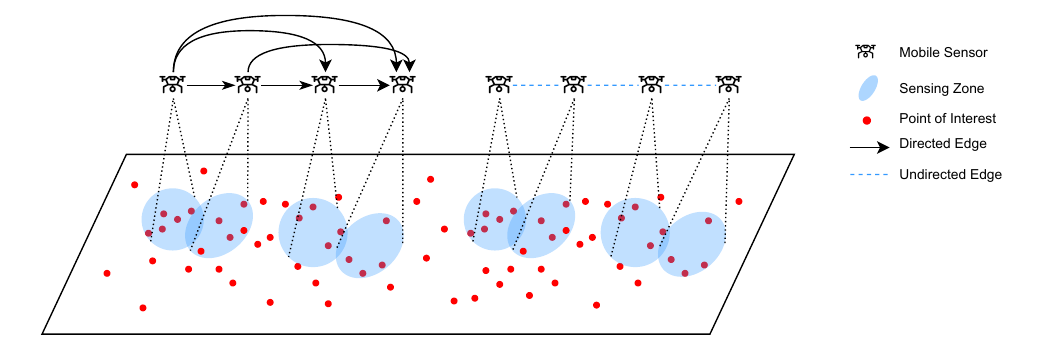}
		\caption{Left: Multi-target tracking with $4$ mobile sensors over a \emph{complete} directed acyclic communication network. Right: Multi-target tracking with $4$ sensors over a \emph{connected} undirected graph.}\label{figue_intro_target}
  	\vspace{-1.0em}
	\end{figure}
	 In this paper, we provide an affirmative answer to this question by presenting two online algorithms, i.e., \textbf{MA-OSMA} and \textbf{MA-OSEA}, both of which not only can achieve the optimal $(\frac{1-e^{-c}}{c})$-approximation guarantee but also reduce the strict requirement for a \emph{complete} communication graph. 
  
  Specifically, our proposed algorithms incorporate three key innovations. First, we utilize the multi-linear extension to convert the discrete submodular maximization into a continuous optimization problem, which enables us to reduce the rigid requirement for a complete communication graph via the well-established consensus techniques in the field of decentralized optimization. Second, we develop a surrogate function for the multi-linear extension of submodular functions with curvature $c$, which empowers us to move beyond the sub-optimal $(\frac{1}{1+c})$-approximation stationary points. Last but not least, for each agent, we implement a distinct strategy to update the selected probabilities associated with its own actions and those of other agents, which only requires agents to assess the marginal gains of actions within their own action sets, thereby reducing the practical requirement on the observational scope of each agent. To summarize, we make the following contributions.
  \vspace{-0.2em}
	\begin{itemize}[leftmargin=*]
	\item We construct a surrogate function for the multi-linear extension of submodular functions with curvature $c\in[0,1]$. The stationary points of this surrogate can guarantee a tight $(\frac{1-e^{-c}}{c})$-approximation to the maximum value of the multi-linear extension, significantly outperforming the $(\frac{1}{1+c})$-approximation provided by the stationary points of the original multi-linear extension itself.
\item We propose a new algorithm \textbf{MA-OSMA}, which seamlessly integrates consensus techniques, lossless rounding and the surrogate function previously discussed. Moreover, we prove that \textbf{MA-OSMA} enjoys a regret bound of $O\Big(\sqrt{\frac{C_{T}T}{1-\beta}}\Big)$ against a  $(\frac{1-e^{-c}}{c})$-approximation to the best comparator in hindsight, where $C_{T}$ is the deviation of maximizer sequence and $\beta$ is the spectral gap of the communication network. Subsequently, we present a \emph{projection-free} variant of \textbf{MA-OSMA}, titled \textbf{MA-OSEA}, which effectively utilizes the KL divergence by mixing a uniform distribution. We also prove that \textbf{MA-OSEA} can attain a $(\frac{1-e^{-c}}{c})$-regret bound of $\widetilde{O}\Big(\sqrt{\frac{C_{T}T}{1-\beta}}\Big)$. A detailed comparison of our \textbf{MA-OSMA} and \textbf{MA-OSEA} with existing studies is presented in Table~\ref{tab:result}.
	\item 
 We conduct a simulation-based evaluation of our proposed algorithms within a multi-target tracking scenario.
 Our experiments demonstrate the effectiveness of our \textbf{MA-OSMA} and \textbf{MA-OSEA}.
	\end{itemize}
	 \begin{table}[t]
\renewcommand\arraystretch{1.35}
	\centering	\vspace{-0.9em}
	\resizebox{0.9\textwidth}{!}{
		\setlength{\tabcolsep}{1.0mm}{
			\begin{tabular}{cccccc}
				\toprule[1.0pt]
				Method&Approx.Ratio&Graph($G$)&Regret& Projection-free?&Reference \\
				\hline
				OSG &$\Big(\frac{1}{1+c}\Big)$ &\textbf{complete} &$\widetilde{O}\Big(\sqrt{C_{T}T}\Big)$&\ding{52}&\citet{xu2023online}\\
				OSG &$\Big(\frac{1}{1+\alpha(G)}\Big)$ &connected &$\widetilde{O}\Big(\sqrt{C_{T}T}\Big)$&\ding{52}&\citet{grimsman2018impact,xu2023online}\\
				\rowcolor{cyan!18}
				\textbf{MA-OSMA}&$\Big(\frac{1-e^{-c}}{c}\Big)$ &connected &$O\Big(\sqrt{\frac{C_{T}T}{1-\beta}}\Big)$ &\ding{56}&Theorem~\ref{thm:final_one} \& Remark~\ref{Remark:final}\\
				\rowcolor{cyan!18}
				\textbf{MA-OSEA}&$\Big(\frac{1-e^{-c}}{c}\Big)$ &connected &$\widetilde{O}\Big(\sqrt{\frac{C_{T}T}{1-\beta}}\Big)$ &\ding{52}&Theorem~\ref{thm:final_one1} \& Remark~\ref{Remark:final1}\\
				\midrule[1.0pt]
			\end{tabular}
	}}\caption{Comparison with prior works. $T$ is the horizon length, $c\in[0,1]$ is the joint curvature of submodular objectives, $C_{T}$ is the deviation of maximizer sequence, $\beta$ is the second largest magnitude of the eigenvalues of the weight matrix, $\alpha(G)$ is the number of nodes in the largest independent set in communication graph $G$ where $\alpha(G)\ge1$ and $\widetilde{O}(\cdot)$ hides $\log(T)$ term.}\label{tab:result}\vspace{-0.5em}
\end{table}
\textbf{Related Work.} Due to space limits, we only focus on the most relevant studies. A more comprehensive discussion is provided in Appendix~\ref{Appendix:related_work}. Multi-agent submodular maximization (MA-SM) problem involves coordinating multiple agents to collaboratively maximize a submodular utility function, with numerous applications in sensor coverage~\citep{krause2008near,prajapat2022near} and multi-robot planning~\citep{singh2009efficient,zhou2022risk}. A commonly used solution for MA-SM problem heavily depends on the distributed implementation of the classic \emph{sequential greedy} method~\citep{fisher1978analysis}, which can ensure a $(\frac{1}{1+c})$-approximation~\citep{conforti1984submodular}. However, this distributed algorithm requires each agent to have full access to the decisions of all previous agents, thereby forming a \emph{complete} directed communication graph. Subsequently, several studies~\citep{grimsman2018impact,gharesifard2017distributed,marden2016role} have investigated how the topology of the communication network affects the performance of the distributed greedy method. Particularly, \citet{grimsman2018impact} pointed out that the worst-case performance of the distributed greedy algorithm will deteriorate in proportion to the size of the largest independent group of agents in the communication graph. Given that the majority of applications occur in time-varying environments, \cite{xu2023online} proposed the \emph{online sequence greedy}(OSG) algorithm for online MA-SM problem, which also ensures a sub-optimal $(\frac{1}{1+c})$-approximation over a \emph{complete} communication graph.
	\section{Preliminaries and Problem Formulation}
\textbf{Notations.} Throughout this paper, $\R$ and $\R_{+}$  denote the set of real numbers and non-negative real numbers, respectively. For any positive integer $K$, $[K]$ stands for the set $\{1,\dots, K\}$. Let $\|\cdot\|$ represent a norm for vectors and its dual norm be denoted by $\|\cdot\|_{*}$. Specially, $\|\cdot\|_{1}$ and $\|\cdot\|_{2}$ denote the $l_{1}$ norm and $l_{2}$  norm for vectors, respectively.  $\langle\cdot,\cdot\rangle$ denotes the inner product. The lowercase boldface (e.g. $\x$) denotes a column vector with a suitable dimension and the uppercase boldface (e.g. $\W$) for a matrix. The $i$-th component of a vector $\x$ will be denoted $x_{i}$ and the element in the $i$-th row of the $j$-th column of a matrix $\W$ will be denoted by $w_{ij}$. Moreover, $\lambda_{i}(\W)$ denotes the $i$-th largest eigenvalue of matrix $\W$. $\mathbf{I}_{n}$ and $\mathbf{1}_{n}$ represent the identity matrix and the $n$-dimensional vector whose all entries are $1$, respectively. Additionally, for any vector $\x\in\R^{n}$ and $S\subseteq[n]$, the $[\x]_{S}$ denotes the projection of $\x$ onto the set $S$, i.e., $[\x]_{S}=(x_{i_1},\dots,x_{i_{|S|}})\in\R^{s}$ for any $S=\{i_{1},\dots,i_{|S|}\}\subseteq[n]$.
	
	\textbf{Submodularity and curvature.} Let $\V$ be a finite set and $f:2^{V}\rightarrow\R_{+}$ be a set function mapping subsets of $\V$ to the non-negative real line.  The function $f$ is said to be submodular iff $f(S\cup\{e\})-f(S)\ge f(T\cup\{e\})-f(T)$ for any $S\subseteq T\subseteq V$ and $e\in V\setminus T$. In this paper, we will consider submodular functions that are \emph{monotone}, meaning that for any $S\subseteq T\subseteq V$, $f(S)\le f(T)$, and \emph{normalized}, that is, $f(\emptyset)=0$. To better reflect the diminishing returns property of submodular functions, \citet{conforti1984submodular} introduced the concept of \emph{curvature}, which is defined as $c:=1-\min_{S\subseteq\V, e\notin S}\frac{f(S\cup\{e\})-f(S)}{f(\{e\})}$. Moreover, we can infer  $c\in[0,1]$ for submodular functions. 
	\vspace{-0.1em}
\subsection{Problem Formulation}\label{sec:Problem_Formulation}
In this subsection, we introduce the multi-agent online submodular maximization problem, commonly abbreviated as multi-agent submodular coordination.

In MA-OSM, we generally consider a group of $N$ different agents denoted as $\N=\{1,2,\ldots,N\}$, interacting over a connected communication graph $G(\N,\mathcal{E})$. In addition, each agent $i$ within $\N$ is equipped with a \emph{unique} and \emph{discrete} set of actions, denoted by $\V_{i}$. This implies that these action sets are mutually disjoint, i.e., $\V_{i}\cap\V_{j}=\emptyset$ for any two distinct agents $i\in\N$ and $j\in\N$. At each time step $t\in[T]$, every agent $i\in\N$ separately selects an action $a_{t,i}$ from the individual action set $\V_{i}$. After committing to these choices, the environment reveals a monotone submodular function $f_{t}$ defined over the aggregated action space $\V:=\cup_{i\in\N}\V_{i}$. Then, the agents receive the utility $f_{t}(\cup_{i\in\N}\{a_{t,i}\})$. As a result, the objective of agents at any given moment is to maximize their collective gains as much as possible, that is to say, we need to solve the following submodular maximization problem in a multi-agent manner at each round: 
 \begin{equation}\label{equ:problem_t}
\max f_{t}(\mathcal{A}),\ \ \text{ s.t.}\ \  |\mathcal{A}\cap\V_{i}|\le1,\forall i\in\N.
\end{equation} 
Compared to the standard \emph{centralized}  submodular maximization problem, this MA-OSM problem brings additional challenges: \textbf{1) Unpredictable Objectives and Actions:} Agents must make decisions at each moment without prior knowledge of future submodular utility functions and the insight into other agents' actions; \textbf{2) Limited Feedback:} In many real-world scenarios, each agent is typically endowed with a narrow perceptual or detection scope, which only allows it to sense the environmental changes within its surroundings.  For instance, in the target tracking problem of Figure~\ref{figue_intro_target}, every sensor usually overlooks these targets beyond its sensing circle. Broadly speaking, the local information observed by one agent is inadequate for precisely assessing the actions of most other agents who are not in close vicinity. To capture this, various studies~\citep{xu2023online,rezazadeh2023distributed,robey2021optimal,qu2019distributed} related to MA-OSM problem commonly confine each agent $i\in\N$ to a local marginal gain oracle $\mathcal{O}_{t}^{i}:\V_{i}\times2^\V\rightarrow\R_{+}$ after  $f_{t}$ is revealed, where $\mathcal{O}_{t}^{i}(a,\mathcal{A}):=f_{t}(\mathcal{A}\cup\{a\})-f_{t}(\mathcal{A})$  for any $a\in\V_{i}$ and $\mathcal{A}\subseteq\V$. This restriction means that, at each time $t\in[T]$, agents only can receive the limited feedback about the marginal evaluations of actions within their own action set, rather than the full information of $f_{t}$. In this paper, we also impose this restriction on each agent.

Given the NP-hardness of maximizing a  submodular function subject to a general constraint~\citep{vondrak2013symmetry,bian2017guaranteed}, we adopt the dynamic \emph{$\alpha$-regret} to evaluate the algorithm performance for MA-OSM problem in this paper, which is defined as
follows~\citep{kakade2007playing,streeter2008online,chen2018online}:\vspace{-1.0em}
\begin{equation*}
\textbf{Reg}^{d}_{\alpha}(T)=\alpha\sum_{t=1}^{T}f_{t}(\mathcal{A}_{t}^{*})-\sum_{t=1}^{T}f_{t}(\cup_{i\in\N}\{a_{t,i}\}),
\end{equation*}\vspace{-0.5em}where $\mathcal{A}_{t}^{*}$ is the maximizer of Eq.\eqref{equ:problem_t} and  $a_{t,i}$ is the action taken via the agent $i\in\N$ at time $t\in[T]$.
\section{Multi-linear Extension and its Properties}\label{sec:multi-linear}
Compared to discrete optimization, continuous optimization has a plethora of efficient tools and algorithmic frameworks. As a result, a common approach in discrete optimization is based on a continuous relaxation to embed the corresponding discrete problem into a solvable continuous optimization. In the subsequent section,we will present a canonical relaxation technique for submodular functions, known as \emph{multi-linear extension}~\citep{calinescu2011maximizing,chekuri2014submodular}. To better illustrate this extension, we suppose $|\V|=n$ and set $\V:=[n]=\{1,\dots,n\}$  throughout this paper. 
\begin{definition}\label{def1:multi-linear}
For a set function $f:2^{\V}\rightarrow\R_{+}$, we define its multi-linear extension  as 
	\begin{equation}
	\label{equ:multi-linea}
	F(\x)=\sum_{\mathcal{A}\subseteq\V}\Big(f(\mathcal{A})\prod_{a\in\mathcal{A}}x_{a}\prod_{a\notin\mathcal{A}}(1-x_{a})\Big)=\E_{\mathcal{R}\sim\x}\Big(f(\mathcal{R})\Big),
\end{equation} where $\x=(x_{1},\dots,x_{n})\in [0,1]^{n}$ and $\mathcal{R}\subseteq\V$ is a random set that contains each element $a\in\V$ independently with probability $x_{a}$ and excludes it with probability $1-x_{a}$. We write $\mathcal{R}\sim\x$ to denote that $\mathcal{R}\subseteq\V$ is a random set sampled according to $\x$. 
\end{definition}
From the Eq.\eqref{equ:multi-linea}, we can view multi-linear extension at any point $\x\in[0,1]^{n}$ as the expected utility of independently selecting each action $a\in\V$ with probability $x_{a}$.  With this tool, we can cast the previous discrete problem Eq.\eqref{equ:problem_t} into a continuous maximization which learns the selected probability for each action $a\in\V$, that is, for any $t\in[T]$, we consider the following continuous optimization:\vspace{-0.2em}
	\begin{equation}\label{equ:continuous_max}
		\max_{\x\in[0,1]^{n}} F_{t}(\x),\ \ \text{ s.t.}\ \  \sum_{a\in\V_{i}}x_{a}\le1,\forall i\in\N,\vspace{-0.3em} 
	\end{equation}where $F_{t}(\x)$ is the multi-linear extension of $f_{t}$.
When $f_{t}$ is submodular, the maximization problem Eq.\eqref{equ:continuous_max} is both non-convex and non-concave~\citep{bian2020continuous}. Thanks to recent advancements in optimizing complex neural networks, a large body of empirical and theoretical evidence has shown that numerous gradient-based algorithms, such as projected gradient methods and Frank Wolfe, can efficiently address the general non-convex or non-concave problem. Specifically, under certain mild assumptions, many first-order gradient algorithms can converge to a stationary point of the corresponding non-convex or non-concave objective~\citep{nesterov2013introductory,lacoste2016convergence,jin2017escape,agarwal2017finding,hassani2017gradient}. Motivated by these findings, we proceed to investigate the stationary points of the multi-linear extension of submodular functions.
	\subsection{Characterizing stationary points}
	We begin with the definition of a stationary point for maximization problems.
	\begin{definition}A vector $\x\in\C$ is called a stationary point for the differentiable function $G: [0,1]^{n}\rightarrow\R_{+}$ over the domain $\C\subseteq[0,1]^{n}$ if $\max_{\y\in\C}\langle\y-\x,\nabla G(\x)\rangle\le 0$.
	\end{definition}
Stationary points are of great interest as they characterize the fixed points of a multitude of gradient-based methods. Next, we quantify the performance of the stationary points of multi-linear extension relative to the maximum value, i.e., 
	\begin{theorem}[Proof is deferred to \cref{append:proof1}]\label{thm:1} If $f:2^{\V}\rightarrow\R_{+}$ is a monotone submodular function with curvature $c$, then for any stationary point $\x$ of its multi-linear extension $F:[0,1]^{n}\rightarrow\R_{+}$ over domain $\C\subseteq[0,1]^{n}$, we have
	\begin{equation*}
			F(\x)\ge\Big(\frac{1}{1+c}\Big)\max_{\y\in\C}F(\y).
		\end{equation*}
	\end{theorem}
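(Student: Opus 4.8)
The plan is to combine the stationarity of $\x$ with three structural properties of the multi-linear extension $F$ of a monotone submodular $f$ with curvature $c$. The properties I would record first are: (i) \emph{monotonicity}, $\nabla F(\z)\ge\mathbf 0$ entrywise for all $\z\in[0,1]^{n}$; (ii) \emph{concavity along every non-negative direction}, which holds because a multi-linear function has a Hessian with zero diagonal while submodularity forces $\partial^{2}F/\partial z_{a}\partial z_{b}\le 0$ for $a\neq b$, so that $t\mapsto F(\z+t\d)$ is concave whenever $\d\ge\mathbf 0$; and (iii) a \emph{gradient sandwich}: from the identity $\partial F(\z)/\partial z_{a}=\E_{\mathcal{R}\sim\z}\bigl[f(\mathcal{R}\cup\{a\})-f(\mathcal{R}\setminus\{a\})\bigr]$, submodularity gives $\partial F(\z)/\partial z_{a}\le f(\{a\})$ and the curvature assumption gives $\partial F(\z)/\partial z_{a}\ge(1-c)f(\{a\})$, so with $\s=(f(\{1\}),\dots,f(\{n\}))$ we have $(1-c)\s\le\nabla F(\z)\le\s$ at \emph{every} point.

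Let $\y^{*}$ be a maximizer of $F$ over $\C$, so that stationarity reads $\langle\y^{*}-\x,\nabla F(\x)\rangle\le 0$. The first milestone is a symmetric ``lattice'' inequality. Applying the first-order concavity bound from (ii) along the non-negative direction $\x\vee\y^{*}-\x$ based at $\x$, and along $\x-\x\wedge\y^{*}$ based at $\x\wedge\y^{*}$, then adding the two (and using the identity $\x\vee\y^{*}+\x\wedge\y^{*}=\x+\y^{*}$), I would obtain
\[
F(\x\vee\y^{*})+F(\x\wedge\y^{*})-2F(\x)\ \le\ \langle\y^{*}-\x,\nabla F(\x)\rangle\ \le\ 0 ,
\]
hence $2F(\x)\ge F(\x\vee\y^{*})+F(\x\wedge\y^{*})$. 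By itself this is only the $\tfrac12$-approximation one gets from the stationary-point analysis without curvature; the remainder of the argument upgrades the constant $2$ to $1+c$.

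To do that I would introduce the modular surrogate $m:=\langle\x-\x\wedge\y^{*},\s\rangle\ge 0$ and bracket it with property (iii). Since $\x-\x\wedge\y^{*}=\x\vee\y^{*}-\y^{*}$ is a non-negative vector, integrating $\nabla F\ge(1-c)\s$ along the segment from $\y^{*}$ to $\x\vee\y^{*}$ yields $F(\x\vee\y^{*})-F(\y^{*})\ge(1-c)m$, while the concavity bound from (ii) based at $\x\wedge\y^{*}$ together with $\nabla F(\x\wedge\y^{*})\le\s$ yields $F(\x)-F(\x\wedge\y^{*})\le m$. Substituting both into $2F(\x)\ge F(\x\vee\y^{*})+F(\x\wedge\y^{*})$ and using $1-c\ge 0$ and $F(\x\wedge\y^{*})\ge 0$ gives
\[
2F(\x)\ \ge\ F(\y^{*})+(1-c)m+F(\x\wedge\y^{*})\ \ge\ F(\y^{*})+(1-c)\bigl(F(\x)-F(\x\wedge\y^{*})\bigr)+F(\x\wedge\y^{*}),
\]
which rearranges to $(1+c)F(\x)\ge F(\y^{*})+c\,F(\x\wedge\y^{*})\ge F(\y^{*})=\max_{\y\in\C}F(\y)$, finishing the proof.

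The main obstacle, as I see it, is precisely this last step of beating $\tfrac12$. The tempting chain $F(\y^{*})\le F(\x\vee\y^{*})\le F(\x)+\langle\x,\nabla F(\x)\rangle\le 2F(\x)$ does not improve: $\langle\x,\nabla F(\x)\rangle\le F(\x)$ is already tight for modular $f$ (where $c=0$ yet the true ratio is $1$), so no amount of curvature applied to that single term helps. The key is to \emph{keep} the meet term $F(\x\wedge\y^{*})$ instead of discarding it, and to spend the curvature on comparing \emph{both} $F(\x\vee\y^{*})-F(\y^{*})$ and $F(\x)-F(\x\wedge\y^{*})$ against the same modular surrogate $m$; making sure the $(1-c)$ factor attaches to the correct one of these two comparisons is the one genuinely delicate point of bookkeeping.
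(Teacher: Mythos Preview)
Your proposal is correct and follows essentially the same route as the paper: both arguments first establish the lattice inequality $\langle\y-\x,\nabla F(\x)\rangle\ge F(\x\vee\y)+F(\x\wedge\y)-2F(\x)$, then use the curvature bound $\nabla F(\cdot)\ge(1-c)\nabla F(\mathbf 0)=(1-c)\s$ to prove $F(\x\vee\y)-F(\y)\ge(1-c)\bigl(F(\x)-F(\x\wedge\y)\bigr)$ and combine. Your version makes the modular surrogate $m=\langle\x-\x\wedge\y^{*},\s\rangle$ explicit as an intermediate, whereas the paper routes through $\nabla F(\mathbf 0)$ and the antitonicity $\nabla F(\mathbf 0)\ge\nabla F(\x\wedge\y)$ directly, but these are the same inequality chain; one minor quibble is that in your derivation of the lattice inequality the second concavity bound must ultimately be evaluated with the gradient at $\x$ (not at $\x\wedge\y^{*}$) for the two terms to add up to $\langle\y^{*}-\x,\nabla F(\x)\rangle$, which your stated conclusion already reflects.
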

	\begin{remark}
		The ratio $\frac{1}{1+c}$ is tight for the stationary points of the multi-linear extension of submodular function with curvature $c$, because there exists a special instance of multi-linear extension with a $(\frac{1}{2})$-approximation stationary point when $c=1$~\citep{hassani2017gradient}.
  	\end{remark}
  		\begin{wrapfigure}{r}{0.24\textwidth}
      \includegraphics[width=0.24\textwidth]{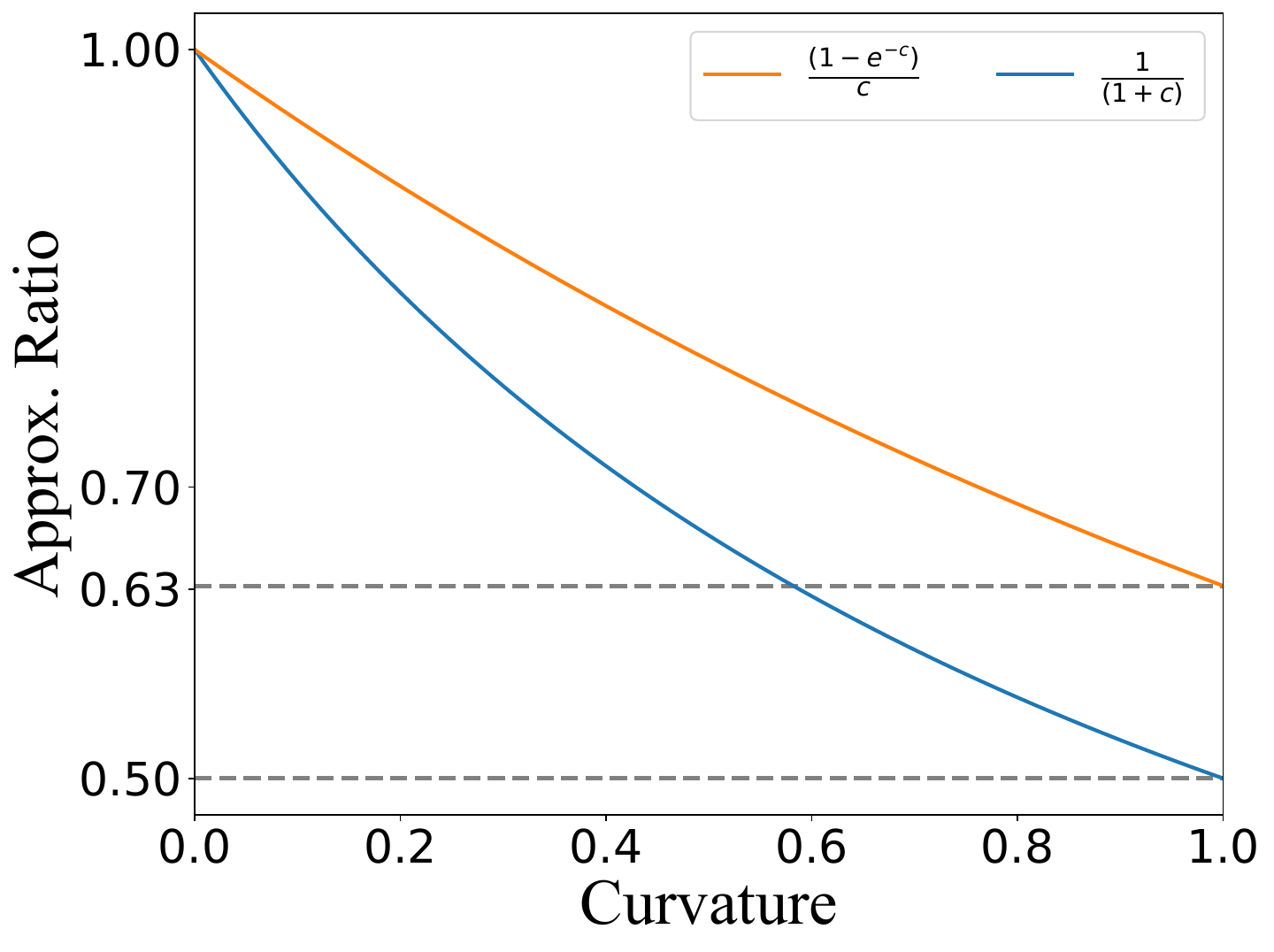}
  		\captionsetup{font=scriptsize}
  		\caption{$\frac{1}{1+c}$ v.s. $\frac{1-e^{-c}}{c}$.}\label{figure:2}
 		\vspace{-1.0em}
  	\end{wrapfigure}
Theorem~\ref{thm:1} suggests that applying various gradient-based methods directly to multi-linear extension only can ensure a $\frac{1}{1+c}$-approximation guarantee. However, the known tight approximation ratio for maximizing a monotone submodular function with curvature $c$ is $\frac{1-e^{-c}}{c}$~\citep{vondrak2010submodularity,bian2017guarantees}. As depicted in Figure \ref{figure:2}, there exists a non-negligible gap between $\frac{1}{1+c}$ and $\frac{1-e^{-c}}{c}$. The question arises: \emph{Is it feasible to bridge this significant gap?} Recently, numerous studies have successfully leveraged a classic technique named \emph{Non-Oblivious Search}~\citep{alimonti1994new,khanna1998syntactic,filmus2012power,filmus2014monotone} to output superior solutions by constructing an effective surrogate function. Inspired by this idea, we also aspire to devise a surrogate function that can enhance the approximation guarantees for the stationary points of multi-linear extension. In line with the works~\citep{zhang2022boosting, zhang2024boosting,wan2023bandit}, we consider a type of surrogate function  $F^{s}(\x)$ whose gradient at point $\x$ assigns varying  weights to the gradient of multi-linear extension at  $z*\x$, given by $\nabla F^{s}(\x)=\int_{0}^{1} w(z)\nabla F(z*\boldsymbol{x})\mathrm{d}z$ where $w(z)$ is the positive weight function over $[0,1]$ and $*$ denotes the multiplication of scalars and vectors.
After carefully
selecting the weight function $w(z)$, we can have that:
	\begin{theorem}[Proof is deferred to \cref{append:proof2}]\label{thm:2}If the weight function $w(z)=e^{c(z-1)}$ and the function  $F:[0,1]^{n}\rightarrow\R_{+}$ is the multi-linear extension of a monotone submodular function $f:2^{\V}\rightarrow\R_{+}$ with curvature $c$, we have, for any $\x,\y\in[0,1]^{n}$,
    \begin{equation}\label{equ:boosting1}
				\langle\y-\x,\nabla F^{s}(\x)\rangle=\left\langle\y-\x,\int_{0}^{1}e^{c(z-1)}\nabla F(z*\x)\mathrm{d}z\right\rangle\ge\Big(\frac{1-e^{-c}}{c}\Big)F(\y)-F(\x).
		\end{equation}
	\end{theorem}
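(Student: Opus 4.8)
\textbf{Proof proposal for Theorem~\ref{thm:2}.}

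The plan is to reduce the claimed inequality to a pointwise statement about the multi-linear extension and its gradient along the ray $z \mapsto z*\x$, and then invoke two classical facts about multi-linear extensions: a lower bound on directional derivatives in terms of function values (the ``concavity along nonnegative directions'' / submodularity inequality), and a curvature-based lower bound on $F(z*\x)$ in terms of $F(\x)$. Concretely, I would first establish the standard identity $\frac{\mathrm d}{\mathrm dz} F(z*\x) = \langle \x, \nabla F(z*\x)\rangle$, and more importantly the inequality
\begin{equation*}
\langle \y - \x, \nabla F(z*\x)\rangle \;\ge\; \langle \y, \nabla F(z*\x)\rangle - \langle \x, \nabla F(z*\x)\rangle \;\ge\; F(\y \vee (z*\x)) - F(z*\x) - \tfrac{\mathrm d}{\mathrm dz}F(z*\x),
\end{equation*}
where the first bound uses monotonicity ($\nabla F \ge 0$ coordinatewise, so $\langle \y, \nabla F\rangle \ge \langle \y - (\y\wedge \text{stuff}), \cdot\rangle$ — I would be careful here and instead use the cleaner route below). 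The cleaner route: it is well known that for the multi-linear extension of a monotone submodular function, $\langle \y, \nabla F(\z)\rangle \ge F(\y \vee \z) - F(\z) \ge F(\y) - F(\z)$ for all $\y,\z\in[0,1]^n$ (monotonicity gives $F(\y\vee\z)\ge F(\y)$), and $\langle \x, \nabla F(z*\x)\rangle = \frac{\mathrm d}{\mathrm dz}F(z*\x)$. Therefore $\langle \y - \x, \nabla F(z*\x)\rangle \ge F(\y) - F(z*\x) - \frac{\mathrm d}{\mathrm dz}F(z*\x)$.

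Next I would multiply this pointwise bound by the weight $w(z) = e^{c(z-1)} > 0$ and integrate over $z\in[0,1]$, giving
\begin{equation*}
\langle \y - \x, \nabla F^s(\x)\rangle \;\ge\; F(\y)\!\int_0^1\! e^{c(z-1)}\mathrm dz \;-\; \int_0^1\! e^{c(z-1)} F(z*\x)\,\mathrm dz \;-\; \int_0^1\! e^{c(z-1)}\tfrac{\mathrm d}{\mathrm dz}F(z*\x)\,\mathrm dz.
\end{equation*}
The first integral is exactly $\frac{1-e^{-c}}{c}$, which produces the desired coefficient on $F(\y)$. For the last two terms I would integrate the third term by parts: $\int_0^1 e^{c(z-1)} \frac{\mathrm d}{\mathrm dz}F(z*\x)\,\mathrm dz = F(\x) - e^{-c}F(0) - c\int_0^1 e^{c(z-1)}F(z*\x)\,\mathrm dz = F(\x) - c\int_0^1 e^{c(z-1)}F(z*\x)\,\mathrm dz$ using $F(0)=F(\emptyset)=0$. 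Substituting, the two $\int_0^1 e^{c(z-1)}F(z*\x)\,\mathrm dz$ terms combine into $(c-1)\int_0^1 e^{c(z-1)}F(z*\x)\,\mathrm dz$, so after rearrangement
\begin{equation*}
\langle \y - \x, \nabla F^s(\x)\rangle \;\ge\; \Big(\tfrac{1-e^{-c}}{c}\Big)F(\y) - F(\x) + (1-c)\!\int_0^1\! e^{c(z-1)}F(z*\x)\,\mathrm dz.
\end{equation*}
Since $c\le 1$, $F\ge 0$, and $w(z)>0$, the trailing integral term is nonnegative and can simply be dropped, yielding the claim. (In fact this shows the bound is slack unless $c=1$; I would remark that this is the source of the improvement over the naive stationary-point analysis.)

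The main obstacle — and the step I would spend the most care on — is the curvature enters only implicitly here, through the choice $w(z)=e^{c(z-1)}$ that makes $\int_0^1 w = \frac{1-e^{-c}}{c}$ while keeping the by-parts residual nonnegative; so the real content is verifying that the two classical multi-linear inequalities I invoked ($\langle \x,\nabla F(z*\x)\rangle = \frac{\mathrm d}{\mathrm dz}F(z*\x)$ and $\langle \y, \nabla F(\z)\rangle \ge F(\y\vee\z) - F(\z)$) hold with the right constants and directions, and checking that $F(0)=0$ kills the boundary term. An alternative, and perhaps what the authors do, is to route through curvature more directly: for monotone submodular $f$ with curvature $c$ one has $F(z*\x) \ge$ (something like) a lower bound that feeds a differential-inequality argument, but I expect the integration-by-parts approach above to be the cleanest, since it isolates exactly where $c<1$ gives slack. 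I would double-check the second inequality $F(\y\vee\z)\ge F(\y)$ uses monotonicity of $F$ (inherited from monotonicity of $f$), which is standard; if one only wants $F(\y)-F(\z)$ on the right that monotonicity step is all that is needed and no lattice structure beyond $\y\vee\z \ge \y$ is used.
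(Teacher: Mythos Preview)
Your proposal contains a genuine gap: a sign error in the ``rearrangement'' step that masks the fact that your pointwise bound is too weak. You correctly compute that the two integral terms combine into $(c-1)\int_0^1 e^{c(z-1)}F(z*\x)\,\mathrm dz$, but then in the displayed inequality you write $+(1-c)\int(\cdots)$; these differ by a sign. With the correct sign you obtain
\[
\langle \y-\x,\nabla F^s(\x)\rangle \;\ge\; \Big(\tfrac{1-e^{-c}}{c}\Big)F(\y) - F(\x) \;-\; (1-c)\!\int_0^1\! e^{c(z-1)}F(z*\x)\,\mathrm dz,
\]
and since $c\le 1$ and $F\ge 0$ the trailing term is \emph{nonpositive}, so it cannot be dropped in the direction you need. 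Your curvature-free pointwise bound $\langle \y,\nabla F(\z)\rangle \ge F(\y)-F(\z)$ is simply not sharp enough to close the argument.

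The paper resolves this exactly as you speculate in your final paragraph: it routes through curvature directly. The key lemma they prove (their Lemma~\ref{lemma:thm2:1}) is the stronger bound $\langle \y,\nabla F(\z)\rangle \ge F(\y)-cF(\z)$, established by combining the concavity inequality $\langle \y\vee\z-\z,\nabla F(\z)\rangle\ge F(\y\vee\z)-F(\z)$ with the curvature inequality $\nabla F(\z)\ge (1-c)\nabla F(\mathbf{0})$ applied to the $\y\wedge\z$ part, and then invoking $F(\x\vee\y)\ge(1-c)(F(\x)-F(\x\wedge\y))+F(\y)$. With this sharper pointwise bound the $-F(z*\x)$ in your computation becomes $-cF(z*\x)$, and after the same integration by parts the residual is $(c-c)\int(\cdots)=0$: the weight $w(z)=e^{c(z-1)}$ is chosen precisely so that the coefficient from integration by parts matches the curvature constant and the cancellation is exact, not merely one-sided. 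So the curvature enters essentially, not just through $\int_0^1 w=\tfrac{1-e^{-c}}{c}$.
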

\begin{remark}
Theorem~\ref{thm:2} illustrate that the stationary points of surrogate function $F^{s}(\x)$ can provide a better $\Big(\frac{1-e^{-c}}{c}\Big)$-approximation than the stationary points of the original multi-linear extension $F$. 
\end{remark}
\begin{remark}
Unlike prior work on surrogate functions regarding the multi-linear extension of submodular functions~\citep{zhang2022boosting,zhang2024boosting}, Theorem~\ref{thm:2} takes into account the impact of curvature. Specifically, when the curvature $c=1$,  our result Eq.\eqref{equ:boosting1} is consistent with those  of ~\citet{zhang2022boosting,zhang2024boosting}. To the best of our knowledge, we are the first to explore the stationary points of the multi-linear extension of submodular functions with different curvatures.
\end{remark}
\subsection{Constructing an Unbiased Gradient for surrogate function}\label{sec:construct_gradient_surrogate_function}
In this subsection, we present how to estimate the gradient $\nabla F^{s}(\x)=\int_{0}^{1}e^{c(z-1)}\nabla F(z*\boldsymbol{x})\mathrm{d}z$ using the function values of $f$.
Given that  $F$ is the multi-linear extension of set function $f$, we can show $\frac{\partial F(\x)}{\partial x_{i}}=\E_{\mathcal{R}\sim\x}\Big(f(\mathcal{R}\cup\{i\})-f(\mathcal{R}\setminus\{i\})\Big)$ \citep{calinescu2011maximizing}. That is to say, the partial derivative of multi-linear extension $F$ at each variable $x_{i}$ equals the expected marginal contribution for the action $\{i\}$. Consequently, after sampling a random number $z$ from the probability distribution of r.v. $\mathcal{Z}$ where $P(\mathcal{Z}\le b)=(\frac{c}{1-e^{-c}})\int_{0}^{b}e^{c(z-1)}\mathrm{d}z=\frac{e^{c(b-1)}-e^{-c}}{1-e^{-c}}$ for any $b\in[0,1]$ and then generating a random set $\mathcal{R}$ according to $z*\x$, we can estimate $\nabla F^{s}(\x)$ by the following equation:
\vspace{-0.5em}
\begin{equation}\label{equ:gradient_surrogate}\widetilde{\nabla}F^{s}(\x)=\Big(\frac{1-e^{-c}}{c}\Big)\Big(f(\mathcal{R}\cup\{1\})-f(\mathcal{R}\setminus\{1\}),\dots,f(\mathcal{R}\cup\{n\})-f(\mathcal{R}\setminus\{n\})\Big)
\end{equation}\vspace{-1.0em}
\section{Methodology}
The mirror method, a sophisticated optimization framework, utilizes the notion of Bregman divergence in lieu of  Euclidean distance for the projection step, thereby unifying a spectrum of first-order algorithms~\citep{nemirovsky1983problem}. In this section, we present two multi-agent variants of the online mirror ascent~\citep{hazan2016introduction,jadbabaie2015online,shahrampour2017distributed}, which is specifically crafted to tackle the MA-OSM problem introduced in Section~\ref{sec:Problem_Formulation}. 
\subsection{Multi-Agent Online Surrogate Mirror Ascent}
	Given the core role of Bregman divergence in the mirror ascent method, we begin with an in-depth
review of this concept, that is, 
	\begin{definition}[Bregman Divergence]
		Let $\phi: \Omega\rightarrow\R$ is a continuously-differentiable, $1$-strongly convex function defined on a convex set  $\Omega\subseteq[0,1]^{n}$. Then the Bregman divergence with respect to $\phi$ is defined as:
  \vspace{-0.1em}
	\begin{equation}\label{equ:Bregman}
\D_{\phi}(\x,\y): =\phi(\x)-\phi(\y)-\langle\nabla\phi(\y),\x-\y\rangle.
		\end{equation}
	\end{definition}
	\vspace{-0.1em}
	Two well-known examples of Bregman divergence include the Euclidean distance, which arises from the choice of $\phi(\x)=\frac{\|\x\|_{2}^{2}}{2}$ and the Kullback-Leibler (KL) divergence, associated with $\phi(\x)=\sum_{i=1}^{n}x_{i}\log(x_{i})$. Note that both forms of $\phi(\x)$ allow for a coordinate-wise decomposition. Without loss of generality, we make the following assumption.\vspace{-0.3em}
	\begin{assumption}\label{ass:2} $\phi(\x)$ is dominated by a one-dimensional strongly convex differentiable function $g:[0,1]\rightarrow\R$, that is, $\phi(\x)=\sum_{i=1}^{n}g(x_{i})$ where $\x=(x_{1},\dots,x_{n})$.
	\end{assumption}\vspace{-0.3em}
	Under this assumption, we can re-define the Bregman divergence between two $n$-dimensional vectors $\mathbf{b}$ and $\mathbf{c}$ as: $\D_{g,n}(\mathbf{b},\mathbf{c}):=\sum_{i=1}^{n}\Big(g(b_{i})-g(c_{i})-g'(c_{i})(b_{i}-c_{i})\Big)$ where $g'$ denotes the derivative of $g$. Specially, we also have $\D_{\phi}(\x,\y)=\D_{g,n}(\x,\y)$ from Eq.\eqref{equ:Bregman}. Building on these foundations, we now introduce the Multi-Agent Online Boosting Mirror Ascent (\textbf{MA-OSMA}) algorithm for MA-OSM problem, as detailed in Algorithm~\ref{alg:BDOMA}.
	
	In Algorithm~\ref{alg:BDOMA}, at every time step $t \in [T]$, each agent $i \in\N$ maintains a local variable $\x_{t,i}\in[0,1]^{|\V|}$, which, to some extent, reflects agent $i$'s current beliefs regarding all actions in $\V$. The core of \textbf{MA-OSMA} algorithm is primarily composed of four interleaved components: Rounding, Information aggregation, Surrogate gradient estimation and Probabilities update. Specifically, at every iteration $t\in[T]$, each agent $i$ first selects an action $a_{t,i}$ from $\V_{i}$ based on its current preferences $\x_{t,i}$. Subsequently, agent $i$ receives $\x_{t,j}$ from all neighboring agents and then computes the aggregated beliefs $\y_{t,i}$ as the weighted average of $\x_{t,j}$ for $j\in\N_{i}$, where $\N_{i}$ denotes the neighbors of agent $i$. Next, agent $i$ estimates the gradient of the surrogate function of the multi-linear extension of $f_{t}$ at each coordinate $a\in\V_{i}$ by employing the methods outlined in Section~\ref{sec:construct_gradient_surrogate_function}. That is, agent $i$ initially samples a random number $z_{t,i}$ from the random variable $\mathcal{Z}$, where $P(\mathcal{Z}\le b)=\frac{e^{c(b-1)}-e^{-c}}{1-e^{-c}}$ for any $b\in[0,1]$, and then approximates $[\nabla F_{t}^{s}(\x_{t,i})]_{a}$ as $\frac{1-e^{-c}}{c} \big(f_{t}(\mathcal{R}_{t,i}\cup\{a\})-f_{t}(\mathcal{R}_{t,i}\setminus\{a\})\big)$ for any $a\in\V_{i}$ where $\mathcal{R}_{t,i}$ is a random set according to $z_{t,i}*\x_{t,i}$. Finally, each agent $i$ adjusts the probabilities of actions in $\V_{i}$ through a mirror ascent along the direction $[\widetilde{\nabla} F_{t}^{s}(\x_{t,i})]_{\V_{i}}$. As for other actions not in $\V_{i}$, 
   their probabilities are straightforwardly updated using the aggregated beliefs $\y_{t,i}$.
   
    The key novelty of Algorithm~\ref{alg:BDOMA} is twofold: first, it integrates a surrogate gradient estimation for the multi-linear extension of $f_{t}$, ensuring a tight approximation guarantee; second, it adopts a divide-and-conquer strategy to update the probabilities of all actions in Lines 12-13, which only requires agents to evaluate the marginal benefits of actions within their own action sets. These tactics not only effectively reduce the computational burden for each agent but also partially offset the practical errors caused by the limited observational capabilities of each agent.
     \begin{algorithm}[t]
		\caption{Multi-Agent Online Surrogate Mirror Ascent~(\textbf{MA-OSMA})}\label{alg:BDOMA}
		\begin{algorithmic}[1]
			\STATE{\bf Input:} Number of iterations $T$, the set of agents $\N$, communication graph $G(\N,\mathcal{E})$,
			weight matrix $\W=[w_{ij}]\in\R^{N\times N}$, $1$-strongly decomposable convex function $\phi(\x)=\sum_{i=1}^{n}g(x_{i})$, the curvature $c\in[0,1]$, step size $\eta_{t}$ for $t\in[T]$;
			\STATE {\bf Initialized:} for any agent $i\in\N$, let $[\x_{1,i}]_{j}=\frac{1}{|\V_{i}|},\ \forall j\in\V_{i}\ \text{ and }\  [\x_{1,i}]_{j}=0,\ \forall j\notin\V_{i}$
			\FOR{$t\in[T]$}
			\FOR{$i\in\N$}
			\STATE Compute $\text{SUM}:=\sum_{a\in\V_{i}}[\x_{t,i}]_{a}$\ \ \ \  \COMMENT{Rounding (Lines 5-6)}
			\STATE Select an action $a_{t,i}$ from the set $\V_{i}$ with probability $\frac{[\x_{t,i}]_{a}}{\text{SUM}}$
			\STATE Exchange $\x_{t,i}$ with each neighboring node $j\in\mathcal{N}_{i}$\ \ \ \COMMENT{Information aggregation (Lines 7-8)}
			\STATE Aggregate the information by setting $ \y_{t,i}=\sum_{j\in\mathcal{N}_{i}\cup\{i\}}w_{ij}\x_{t,j}$
			\STATE Sampling a random number $z_{t,i}$ from r.v. $\mathcal{Z}$\ \ \ \COMMENT{Surrogate gradient estimation (Lines 9-11)}
			\STATE Sampling a random set $\mathcal{R}_{t,i}\sim z_{t,i}*\x_{t,i}$
			\STATE Compute $[\widetilde{\nabla} F_{t}^{s}(\x_{t,i})]_{a}:=\frac{1-e^{-c}}{c} \big(f_{t}(\mathcal{R}_{t,i}\cup\{a\})-f_{t}(\mathcal{R}_{t,i}\setminus\{a\})\big)$ for any $a\in\V_{i}$ 
			\STATE Update $	[\x_{t+1,i}]_{a}=[\y_{t,i}]_{a},\ \forall a\notin\V_{i}$\ \ \ \ \ \COMMENT{ Update the probabilities of actions (Lines 12-13)}
			\STATE Update the probabilities of actions of agent $i$ itself  by 
			\begin{equation}\label{equ:mirror_projection}
				[\x_{t+1,i}]_{\V_{i}}:=\mathop{\arg\min}_{\sum_{k=1}^{n_{i}}b_{k}\le1}\Bigg(-\langle[\tilde{\nabla} F_{t}^{s}(\x_{t,i})]_{\V_{i}},\mathbf{b}\rangle+\frac{1}{\eta_{t}}\D_{g,n_{1}}(\mathbf{b}, [\y_{t,i}]_{\V_{i}})\Bigg),
			\end{equation} where $n_{i}=|\V_{i}|$ and $\mathbf{b}=(b_{1},\dots,b_{n_{i}})\in[0,1]^{n_{i}}$
			\ENDFOR
			\ENDFOR
		\end{algorithmic}
	\end{algorithm}	
    \subsubsection{Regret Analysis for Algorithm~\ref{alg:BDOMA}}
     In this subsection, we present theoretical results for the proposed method \textbf{MA-OSMA}. 
    We begin by introducing some standard assumptions about the communication graph $G(\N,\mathcal{E})$, weight matrix $\mathbf{W}\in\R^{N\times N}$,  Bregman divergence $\mathcal{D}_{\phi}$ and the surrogate gradient estimation $\tilde{\nabla}F_{t}^{s}$.
	\begin{assumption}\label{ass:1}
	The graph $G$ is connected, i.e., there exists a path from any agent $i\in\N$ to any other agent $j\in\N$. Moreover, the weight matrix $\mathbf{W}=[w_{ij}]\in\R^{N\times N}$ is symmetric and doubly 
	stochastic with positive diagonal, i.e., $\W^{T}=\W$ and $\W\one_{N}=\one_{N}$, where $N$ is the number of agents. 
\end{assumption}
\begin{remark}
	The connectivity of communication graph $G$ implies the uniqueness of $\lambda_{1}(\W)=1$ and also warrants that other eigenvalues of $\W$ are strictly less than one in magnitude~\citep{nedic2009distributed,horn2012matrix,yuan2016convergence}. In detail, regarding the eigenvalue of $\W$,i.e., $1=\lambda_{1}(\W)\ge\lambda_{2}(\W)\ge\dots\ge\lambda_{N}(\W)\ge-1$, then $\beta<1$,where $\beta=\max(|\lambda_{2}(\W)|,|\lambda_{N}(\W)|)$ is the second largest magnitude of the eigenvalues of $\W$.
\end{remark}
	\begin{assumption}\label{ass:3}
	Let $\x$ and $\{\y_{i}\}_{i=1}^{N}$ be vectors in $[0,1]^{n}$, the Bregman divergence satisfies the separate convexity in the following sense $\D_{\phi}\left(\x,\sum_{i=1}^{N}\alpha_{i}\y_{i}\right)\le\sum_{i=1}^{N}\D_{\phi}(\x,\alpha_{i}\y_{i})$, where $\sum_{i=1}^{N}\alpha_{i}=1$.
\end{assumption}
\begin{remark}
	The separate convexity~\citep{bauschke2001joint} is commonly satisfied for most used cases of Bregman divergence. For example, the Euclidean distance and KL-divergence.
\end{remark}
\begin{assumption}\label{ass:3+}
	The Bregman divergence satisfies a Lipschitz condition, i.e., there exists a constant $K$ such that for any $\x,\y,\z\in[0,1]^{n}$, we have $|\D_{\phi}(\x,\z)-\D_{\phi}(\y,\z)|\le K\|\x-\y\|$.
	\end{assumption}
\begin{remark}
	When the function $\phi$ is Lipschitz with respect to $\|\cdot\|$, the Lipschitz condition on the Bregman divergence is automatically satisfied. Thus, this assumption evidently holds for Euclidean distance. However, KL divergence is not satisfied with Assumption~\ref{ass:3+}, as its gradient will approach infinity on the boundary. 
\end{remark}
\begin{assumption}\label{ass:4}For any $t\in[T]$ and $\x\in[0,1]^{n}$, the stochastic gradient $\widetilde{\nabla}F_{t}^{s}(\x)$ is bounded and unbiased, i.e., $\E(\widetilde{\nabla}F^{s}_{t}(\x)|\x)=\nabla F^{s}_{t}(\x)\ \ \text{and}\ \  \E(\|\widetilde{\nabla}F^{s}_{t}(\x)\|_{*}^{2})\le G^{2}$.	Here, $\|\cdot\|_{*}$ is the dual norm of  the general norm $\|\cdot\|$. Moreover, $F_{t}$ is also $L$-smooth, i.e., $\|\nabla F^{s}_{t}(\x)-\nabla F^{s}_{t}(\y)\|_{*}\le L\|\x-\y\|$.
	\end{assumption}\vspace{-0.2em}
A detailed discussion regarding Assumption~\ref{ass:4}  will be presented in the \cref{sec:discussion_on_A5}. Now we are ready to show the main theoretical result about Algorithm~\ref{alg:BDOMA}.
	\begin{theorem}[Proof is deferred to \cref{appendix:1}]
		\label{thm:final_one}
		Consider our proposed Algorithm~\ref{alg:BDOMA}, if Assumption \ref{ass:2}-\ref{ass:4} hold and each set function $f_{t}$ is monotone submodular with curvature $c$ for any $t\in[T]$, then we can conclude that, when $\alpha=\frac{1-e^{-c}}{c}$,	\vspace{-0.7em}
		\begin{equation}\label{equ:thm1_equation_uncomplete}
				\E\Big(\textbf{\emph{Reg}}^{d}_{\alpha}(T)\Big)\le C_{1}\Big(\sum_{t=1}^{T}\sum_{\tau=1}^{t}\beta^{t-\tau}\eta_{\tau}\Big)+\frac{NR^{2}}{\eta_{T+1}}+KNC_{2}\sum_{t=1}^{T}\frac{|\mathcal{A}_{t+1}^{*}\Delta\mathcal{A}_{t}^{*}|}{\eta_{t+1}}+\frac{NG}{2}\sum_{t=1}^{T}\eta_{t},
		\end{equation} where $\mathcal{A}_{t}^{*}$ is any maximizer of Eq.\eqref{equ:problem_t}, $\Delta$ is the symmetric difference of two sets, $C_{1}=(4G+LDG)N^{\frac{3}{2}}$, $\|\x\|\le C_{2}\|\x\|_{1}$ for $\x\in[0,1]^{n}$, $D=\sup_{\x,\y\in\C}\|\x-\y\|$, $R^{2}=\sup_{\x,\y\in\C}\mathcal{D}_{\phi}(\x,\y)$, and $\C$ is the constraint set of Eq.\eqref{equ:continuous_max}. 
	\end{theorem}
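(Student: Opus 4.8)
\textbf{Proof proposal for Theorem~\ref{thm:final_one}.}

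The plan is to follow the standard online-mirror-ascent regret analysis, but adapted to (i) the decentralized consensus structure of the weight matrix $\W$, (ii) the surrogate-gradient inequality of Theorem~\ref{thm:2}, and (iii) the divide-and-conquer update that treats coordinates inside and outside each agent's action set differently. First I would introduce the ``virtual centralized'' sequence: for each round $t$ define the average iterate $\bar{\x}_{t}=\frac{1}{N}\sum_{i\in\N}\x_{t,i}$, and track how far each local iterate deviates from this average. Because $\W$ is symmetric, doubly stochastic, and the graph is connected (Assumption~\ref{ass:1}), the consensus error contracts geometrically with rate $\beta$; unrolling the recursion $\x_{t+1}=\W\x_{t}+(\text{local update})$ gives a bound on $\sum_{i}\|\x_{t,i}-\bar{\x}_{t}\|$ of the form $\sum_{\tau\le t}\beta^{t-\tau}\eta_{\tau}\cdot(\text{gradient bound})$, which is exactly the source of the first term $C_{1}\big(\sum_{t}\sum_{\tau\le t}\beta^{t-\tau}\eta_{\tau}\big)$ with $C_{1}=(4G+LDG)N^{3/2}$ --- the $N^{3/2}$ and the $L D G$ piece arising from converting between norms and from the $L$-smoothness used to relate $\nabla F_t^s(\x_{t,i})$ to $\nabla F_t^s(\bar{\x}_t)$.

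Next I would run the per-round mirror-ascent argument. Fix a comparator maximizer $\mathcal{A}_{t}^{*}$ of Eq.\eqref{equ:problem_t} and let $\x^{*}_{t}$ be its indicator vector (feasible for Eq.\eqref{equ:continuous_max}). The optimality condition for the proximal update Eq.\eqref{equ:mirror_projection}, combined with the three-point identity for Bregman divergences, yields the usual inequality
\begin{equation*}
\eta_{t}\langle [\widetilde{\nabla}F_{t}^{s}(\x_{t,i})]_{\V_{i}}, [\x_{t}^{*}]_{\V_{i}}-[\x_{t+1,i}]_{\V_{i}}\rangle \le \D_{g,n_i}([\x_{t}^{*}]_{\V_{i}},[\y_{t,i}]_{\V_{i}}) - \D_{g,n_i}([\x_{t}^{*}]_{\V_{i}},[\x_{t+1,i}]_{\V_{i}}) - \D_{g,n_i}([\x_{t+1,i}]_{\V_{i}},[\y_{t,i}]_{\V_{i}}).
\end{equation*}
Summing the telescoping divergence terms over $t$ produces $\frac{NR^{2}}{\eta_{T+1}}$ after accounting for the time-varying step size, while the mismatch between $[\x^*_{t+1}]$ and $[\x^*_t]$ in the telescoping is controlled by Assumption~\ref{ass:3+} (the Lipschitz property of $\D_\phi$), giving the drift term $KNC_{2}\sum_{t}\frac{|\mathcal{A}_{t+1}^{*}\Delta\mathcal{A}_{t}^{*}|}{\eta_{t+1}}$; the $-\D_{g,n_i}([\x_{t+1,i}]_{\V_{i}},[\y_{t,i}]_{\V_{i}})$ term together with strong convexity absorbs the last-iterate noise and produces $\frac{NG}{2}\sum_t \eta_t$. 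The separate-convexity Assumption~\ref{ass:3} is what lets me pass from $\D_\phi(\x^*_t, \y_{t,i}) = \D_\phi(\x^*_t, \sum_j w_{ij}\x_{t,j})$ to $\sum_j w_{ij}\D_\phi(\cdot)$ so that the consensus machinery applies coordinate-blockwise.

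The crucial conceptual step --- and the main obstacle --- is converting the inner-product bound into a bound on the $\alpha$-regret with $\alpha=\frac{1-e^{-c}}{c}$. Here I would invoke Theorem~\ref{thm:2}: taking expectations so that $\E[\widetilde{\nabla}F^s_t(\x_{t,i})\mid \x_{t,i}]=\nabla F^s_t(\x_{t,i})$ (Assumption~\ref{ass:4}), and applying Eq.\eqref{equ:boosting1} with $\y=\x^*_t$ and $\x=\x_{t,i}$, gives $\langle \x^*_t - \x_{t,i}, \nabla F^s_t(\x_{t,i})\rangle \ge \frac{1-e^{-c}}{c}F_t(\x^*_t) - F_t(\x_{t,i})$. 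The subtlety is that the mirror update only moves the $\V_i$-block of coordinates while the other coordinates are set to $[\y_{t,i}]_a$; I must argue that the divide-and-conquer update is consistent in aggregate --- i.e. that summing the blockwise guarantees over all agents $i\in\N$ reconstructs a full-dimensional ascent guarantee for $\bar{\x}_t$ --- using the disjointness of the action sets $\V_i$ and the fact that $\bigcup_i \V_i = \V$. Then $F_t(\bar\x_t) \ge F_t(\x_{t,i}) - L\cdot(\text{consensus error})$ by smoothness, and the rounding step (Lines 5--6) is lossless in expectation because the constraint $\sum_{a\in\V_i}x_a\le 1$ makes $\E[f_t(\cup_i\{a_{t,i}\})] = F_t(\bar\x_t)$ (up to the standard swap-rounding / pipage argument for partition-matroid-type constraints). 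Assembling: $\alpha\sum_t f_t(\mathcal{A}^*_t) - \sum_t \E[f_t(\cup_i\{a_{t,i}\})] \le \sum_t \sum_i \E\langle \x^*_t - \x_{t,i}, \nabla F^s_t(\x_{t,i})\rangle + (\text{smoothness/consensus corrections})$, and plugging in the mirror-ascent bound from the previous paragraph yields Eq.\eqref{equ:thm1_equation_uncomplete}. I expect the bookkeeping that keeps the per-agent blocks aligned with the global objective, and correctly tracking which error terms get the extra $\sqrt{N}$ or $N$ factors, to be where most of the care is needed.
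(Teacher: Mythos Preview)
Your high-level structure matches the paper's proof closely: consensus-error control via the spectral gap $\beta$, the three-point mirror identity for the proximal step, telescoping with Assumption~\ref{ass:3+} to produce the drift term, separate convexity (Assumption~\ref{ass:3}) to push the averaging through $\D_\phi$, and Theorem~\ref{thm:2} to convert the inner-product bound into the $\frac{1-e^{-c}}{c}$-regret. Two points deserve correction.

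First, the rounding step is \emph{not} lossless in the sense you state. Agent $i$ samples $a_{t,i}$ from its own block $[\x_{t,i}]_{\V_i}$ (after normalization), not from $\bar\x_t$, so there is no reason for $\E[f_t(\cup_i\{a_{t,i}\})]$ to equal $F_t(\bar\x_t)$. What the paper actually proves (its Lemma~\ref{lemma:rounding1}) is the one-sided inequality $\E[f_t(\cup_i\{a_{t,i}\})]\ge F_t\big(\sum_{i\in\N}\x_{t,i}\odot\one_{\V_i}\big)$, and the argument genuinely uses both monotonicity and submodularity of $f_t$ (an inductive peeling of agents, bounding $f_t(\mathcal{R})-f_t(\mathcal{R}\setminus\V_N)$ by $\sum_{a\in\mathcal{R}\cap\V_N} f_t(\{a\}\mid\mathcal{R}\setminus\V_N)$); it is not a generic pipage/swap-rounding fact. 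One then needs an additional consensus term to pass from $\sum_i \x_{t,i}\odot\one_{\V_i}$ to $\bar\x_t$ via $|F_t(\cdot)-F_t(\bar\x_t)|\le G\sum_i\|\x_{t,i}-\bar\x_t\|$, which is precisely why the constant in $C_1$ is $4G+LDG$ rather than $3G+LDG$.

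Second, the paper applies Theorem~\ref{thm:2} once at the \emph{average} $\bar\x_t$, obtaining $\langle\x_t^*-\bar\x_t,\nabla F_t^s(\bar\x_t)\rangle\ge\frac{1-e^{-c}}{c}F_t(\x_t^*)-F_t(\bar\x_t)$, and then decomposes $\nabla F_t^s(\bar\x_t)=\sum_i \nabla F_t^s(\bar\x_t)\odot\one_{\V_i}$ (using that the $\V_i$ partition $\V$) before swapping each block to $\nabla F_t^s(\x_{t,i})\odot\one_{\V_i}$ via $L$-smoothness. Your plan of applying Theorem~\ref{thm:2} at each $\x_{t,i}$ and then ``summing the blockwise guarantees'' would, if done naively, over-count $F_t(\x_t^*)$ by a factor $N$; the global-to-local direction the paper takes avoids this pitfall cleanly and is what makes the bookkeeping you flag as delicate actually go through.
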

\begin{remark} According to the definition of symmetric difference, i.e., $S\Delta T=(S\setminus T)\cup(T\setminus S)$, we can know that the value  $|\mathcal{A}_{t+1}^{*}\Delta\mathcal{A}_{t}^{*}|$  quantifies the deviation between the optimal strategy set at time $t+1$ and the one at time $t$, which, to a certain extent, reflects the environmental fluctuations.
\end{remark}
\begin{remark}\label{Remark:final}
	From Eq.\eqref{equ:thm1_equation_uncomplete}, if we set $\eta_{t}=O\left(\sqrt{\frac{(1-\beta)C_{T}}{T}}\right)$ where $C_{T}:=\sum_{t=1}^{T}|\mathcal{A}_{t+1}^{*}\Delta\mathcal{A}_{t}^{*}|$is the deviation of maximizer sequence, we have that $\sum_{t=1}^{T}\E\Big(f_{t}(\mathcal{A}_{t})\Big)\ge\Big(\frac{1-e^{-c}}{c}\Big)\sum_{t=1}^{T}f_{t}(\mathcal{A}_{t}^{*})-O\left(\sqrt{\frac{C_{T}T}{1-\beta}}\ \right)$, which means that Algorithm~\ref{alg:BDOMA} can attain a dynamic regret bound of $O(\sqrt{\frac{C_{T}T}{1-\beta}})$ against a  $(\frac{1-e^{-c}}{c})$-approximation to the best comparator in hindsight. 
\end{remark}
		\subsection{Projection-free Multi-Agent Online Surrogate Entropic Ascent}
  		The primary computational burden of Algorithm~\ref{alg:BDOMA} lies in Line 13, where each agent is tasked with a single constrained mirror projection. Despite that this projection can be done very efficiently in linear time using standard methods described in \citep{pardalos1990algorithm,brucker1984n}, the optimal solution to Eq.\eqref{equ:mirror_projection} admits an analytical expression when KL-divergence is selected as the metric. That is, we have the following theorem, whose proof is deferred to \cref{append:proof4}.
	\begin{theorem}\label{thm:projection}
		Let $m$ be a positive integer and  $g(x)=x\log(x)$. Then, the optimal solution $\x$ to the problem $\min_{\|\mathbf{b}\|_{1}\le1, \mathbf{b}\in[0,1]^{m}}\Big(\langle\mathbf{z},\mathbf{b}\rangle+\D_{g,m}(\mathbf{b}, \mathbf{y})\Big)$ satisfies the following conditions: if $\sum_{i=1}^{m}y_{i}\exp(-z_{i})\le1$, $x_{i}=y_{i}\exp(-z_{i})$; otherwise,  $x_{i}=\frac{y_{i}\exp(-z_{i})}{\sum_{i=1}^{m}y_{i}\exp(-z_{i})}$ $\forall i\in[m]$.
		\end{theorem}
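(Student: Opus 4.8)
\textbf{Proof proposal for Theorem~\ref{thm:projection}.}

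The plan is to solve the constrained convex program by Lagrangian/KKT analysis, exploiting the fact that the KL-Bregman divergence $\D_{g,m}(\mathbf{b},\mathbf{y})=\sum_{i=1}^{m}\big(b_{i}\log b_{i}-b_{i}\log y_{i}-\log y_{i}\cdot(b_{i}-y_{i})\big)$ decomposes coordinate-wise and has the explicit derivative $\partial_{b_{i}}\D_{g,m}(\mathbf{b},\mathbf{y})=\log b_{i}-\log y_{i}$. First I would write the objective as $\Psi(\mathbf{b})=\langle\mathbf{z},\mathbf{b}\rangle+\sum_{i=1}^{m}\big(b_{i}\log b_{i}-b_{i}\log y_{i}\big)+(\text{terms independent of }\mathbf{b})$, observe that $\Psi$ is strictly convex on the positive orthant (so the minimizer is unique), and introduce a single KKT multiplier $\mu\ge0$ for the constraint $\sum_{i}b_{i}\le1$. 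The nonnegativity constraints $b_{i}\ge0$ are automatically inactive because $\partial_{b_{i}}\Psi\to-\infty$ as $b_{i}\to0^{+}$, so I can ignore them; I would note that the upper bounds $b_{i}\le1$ are likewise never tight at the optimum because $\sum_i b_i\le 1$ already forces each $b_i\le 1$, and in fact the unconstrained-in-$[0,1]^m$ formulation and the $\|\mathbf b\|_1\le 1$ formulation coincide.

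Next I would carry out the two cases dictated by complementary slackness. Setting $\partial_{b_{i}}\Psi+\mu=0$ gives $z_{i}+\log b_{i}-\log y_{i}+\mu=0$, i.e. $b_{i}=y_{i}\exp(-z_{i})\exp(-\mu)$. Case 1: if $\sum_{i=1}^{m}y_{i}\exp(-z_{i})\le1$, then $\mu=0$ is feasible (the constraint is slack, or tight with equality only in the boundary case), yielding $x_{i}=y_{i}\exp(-z_{i})$, and one checks $\sum_i x_i\le 1$ so primal feasibility holds. Case 2: if $\sum_{i=1}^{m}y_{i}\exp(-z_{i})>1$, then $\mu=0$ violates feasibility, so by complementary slackness $\mu>0$ and the constraint must be tight: $\sum_{i}b_{i}=1$. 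Substituting $b_{i}=y_{i}\exp(-z_{i})\exp(-\mu)$ into $\sum_i b_i=1$ solves for $\exp(-\mu)=1/\sum_{i}y_{i}\exp(-z_{i})$, giving $x_{i}=\frac{y_{i}\exp(-z_{i})}{\sum_{i=1}^{m}y_{i}\exp(-z_{i})}$.

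Finally I would verify that the constructed point genuinely satisfies all KKT conditions (stationarity, primal feasibility, dual feasibility $\mu\ge0$, complementary slackness) in each case, which together with convexity of $\Psi$ and the convexity of the feasible set certifies global optimality; since $\Psi$ is strictly convex this minimizer is also unique. The only mildly delicate points are handling the boundary value $\sum_{i}y_{i}\exp(-z_{i})=1$ (both formulas agree there, so the case split is consistent) and justifying that the $b_i\ge 0$ constraints are inactive via the barrier behavior of $x\log x$ at the origin; neither is a real obstacle, so this theorem is essentially a clean KKT computation rather than one with a hard technical core.
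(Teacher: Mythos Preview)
Your proposal is correct and follows essentially the same Lagrangian/KKT route as the paper: form the Lagrangian with a single multiplier for the simplex constraint, obtain the stationarity condition $b_{i}=y_{i}\exp(-z_{i})\exp(-\mu)$, and split into the two cases according to whether the unconstrained solution already satisfies $\sum_{i}b_{i}\le1$. Aside from a harmless slip in your displayed expression for $\D_{g,m}$ (the linear term $-b_{i}$ is missing, though your stated derivative $\partial_{b_{i}}\D_{g,m}=\log b_{i}-\log y_{i}$ is the correct one), your argument matches the paper's proof.
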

		However, KL divergence does not meet with the Lipschitz condition in Assumption~\ref{ass:3+}, as its gradient approaches infinity on the boundary.
		 Fortunately, this drawback can be circumvented by mixing a uniform distribution. As a result, we get the \emph{projection-free} Multi-Agent Online Surrogate Entropic Ascent (\textbf{MA-OSEA}) algorithm for the MA-OSM problem, as shown in Algorithm~\ref{alg:BDOEA}. Similarly, we also can verify the following regret bound for \textbf{MA-OSEA} algorithm.
		\begin{theorem}[Proof deferred to \cref{append:proof5}]
		\label{thm:final_one1}
		Consider our proposed Algorithm~\ref{alg:BDOEA}, if Assumption \ref{ass:2},\ref{ass:1},\ref{ass:3} and \ref{ass:4} hold, $\|\cdot\|$ is $l_{1}$ norm and each set function $f_{t}$ is monotone submodular with curvature $c$, then we can conclude that, when $\alpha=\frac{1-e^{-c}}{c}$,
\begin{equation}\label{equ:th2_uncomplete_equation}
		\E\Big(\textbf{\emph{Reg}}^{d}_{\alpha}(T)\Big)\le C_{1}\Big(\sum_{t=1}^{T}\sum_{\tau=1}^{t}(\beta-\beta\gamma)^{t-\tau}\eta_{\tau}\Big)+\frac{NC_{2}}{\eta_{T+1}}+C_{2}\sum_{t=1}^{T}\frac{|\mathcal{A}_{t+1}^{*}\Delta\mathcal{A}_{t}^{*}|}{\eta_{t+1}}+\frac{NG}{2}\sum_{t=1}^{T}\eta_{t}+\sum_{t=1}^{T}\frac{C_{3}}{\eta_{t}}+GD\gamma,
 		\end{equation} where $\mathcal{A}_{t}^{*}$ is any maximizer of Eq.\eqref{equ:problem_t},  $C_{1}=(4G^{2}+LDG)N^{\frac{3}{2}}$, $C_{2}=N\log(\frac{n}{\gamma})$, $C_{3}=2N^{2}\gamma$, $D=\sup_{\x,\y\in\C}\|\x-\y\|_{1}$ and $\C$ is the constraint set of Eq.\eqref{equ:continuous_max}.
		
	\end{theorem}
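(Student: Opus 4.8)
\textbf{Proof proposal for Theorem~\ref{thm:final_one1}.}

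The plan is to mirror the analysis of Theorem~\ref{thm:final_one}, but to carefully account for the two modifications that distinguish \textbf{MA-OSEA} from \textbf{MA-OSMA}: (i) the KL-divergence choice $g(x)=x\log x$, whose Bregman divergence over the simplex-like constraint is controlled by $\log(n/\gamma)$ rather than a bounded $R^2$, and (ii) the uniform-distribution mixing with parameter $\gamma$, which both keeps iterates away from the boundary (thereby restoring an effective Lipschitz-type property for the analysis in place of Assumption~\ref{ass:3+}) and introduces the additional bias terms $\sum_t C_3/\eta_t$ and $GD\gamma$. First I would set up the per-round decomposition: for each agent $i$ and each round $t$, use the three-point identity for Bregman divergence applied to the mirror update of Eq.\eqref{equ:mirror_projection} (with the analytical form from Theorem~\ref{thm:projection}) to bound $\langle [\widetilde\nabla F_t^s(\x_{t,i})]_{\V_i}, \b - [\x_{t+1,i}]_{\V_i}\rangle$ in terms of divergence differences $\D_{g,n_i}(\b,[\y_{t,i}]_{\V_i}) - \D_{g,n_i}(\b,[\x_{t+1,i}]_{\V_i})$ plus a step-size times squared-gradient error term, exactly as in standard online mirror ascent. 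The comparator $\b$ here will be (the relevant block of) the rounded/mixed maximizer $\x_t^*$ associated with $\mathcal{A}_t^*$.

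Next I would handle the consensus/aggregation error. Because the weight matrix $\W$ is doubly stochastic with spectral gap $1-\beta$, the discrepancy between each local $\x_{t,i}$ and the network average $\bar\x_t$ contracts geometrically; the uniform mixing modifies the effective contraction factor to $\beta-\beta\gamma = \beta(1-\gamma)$, which is where that expression in the first term of Eq.\eqref{equ:th2_uncomplete_equation} comes from. I would bound the accumulated consensus error by $\sum_{t}\sum_{\tau\le t}(\beta(1-\gamma))^{t-\tau}\eta_\tau$ times a constant absorbing $G$, $L$, $D$ and the $N^{3/2}$ dimension factor — this is the $C_1$ term. Then I would invoke Theorem~\ref{thm:2} (the surrogate-gradient inequality $\langle \y-\x,\nabla F^s(\x)\rangle \ge \frac{1-e^{-c}}{c}F(\y)-F(\x)$) together with Assumption~\ref{ass:4} (unbiasedness, so that $\E\langle\widetilde\nabla F_t^s,\cdot\rangle = \langle\nabla F_t^s,\cdot\rangle$) and $L$-smoothness to convert the gradient inner products into a bound on $\frac{1-e^{-c}}{c}F_t(\x_t^*) - F_t(\bar\x_t)$; lossless rounding then transfers this back to $f_t(\mathcal{A}_t^*)$ and $\E f_t(\cup_i\{a_{t,i}\})$. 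Telescoping the divergence terms over $t$ produces $\frac{NC_2}{\eta_{T+1}}$ from the initial/terminal divergence (bounded by $N\log(n/\gamma)$), while the drift term arises because when $\mathcal{A}_t^*$ changes to $\mathcal{A}_{t+1}^*$ the comparator shifts, incurring a cost proportional to $|\mathcal{A}_{t+1}^*\Delta\mathcal{A}_t^*|/\eta_{t+1}$ times the divergence variation — which for KL over this domain is again $O(\log(n/\gamma)) = O(C_2)$, giving the third term.

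Finally I would collect the mixing-bias contributions: replacing $\x_{t,i}$ by its $\gamma$-mixed version $(1-\gamma)\x_{t,i} + \frac{\gamma}{?}\mathbf{1}$ perturbs the multilinear-extension value by at most $O(GD\gamma)$ globally (the $GD\gamma$ term) and perturbs each per-round divergence/gradient estimate by $O(N\gamma)$ which, summed against $1/\eta_t$, yields $\sum_t C_3/\eta_t$ with $C_3 = 2N^2\gamma$; the squared-gradient error term $\frac{NG}{2}\sum_t\eta_t$ is inherited verbatim from the mirror-ascent step as in Theorem~\ref{thm:final_one}. The main obstacle I anticipate is the interplay between the mixing and the consensus step — specifically, verifying that mixing a uniform distribution simultaneously (a) gives a clean geometric contraction with factor $\beta(1-\gamma)$ in the consensus error recursion, (b) keeps all iterates bounded away from $0$ so that $\log(n/\gamma)$ is a valid uniform bound on all the KL-divergence quantities appearing in the telescoping and drift terms, and (c) does so without the bias $GD\gamma + \sum_t C_3/\eta_t$ blowing up the final rate — i.e., choosing $\gamma$ (polynomially small in $T$, absorbed into $\widetilde{O}$) so that all three effects are balanced. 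The rest is a careful but routine bookkeeping exercise combining the online-mirror-ascent regret decomposition with the decentralized-optimization consensus lemma and the surrogate inequality of Theorem~\ref{thm:2}.
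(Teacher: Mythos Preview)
Your proposal is correct and follows essentially the same route as the paper's proof in \cref{append:proof5}: the mirror-descent three-point lemma, the consensus recursion with contraction factor $\beta(1-\gamma)$ induced by mixing, the surrogate inequality of Theorem~\ref{thm:2}, and the KL-specific telescoping using the $\gamma/n$ lower bound on mixed iterates to replace the missing Assumption~\ref{ass:3+}. The only cosmetic difference is in how the bias terms are attributed --- in the paper, $GD\gamma$ emerges from the extra $\gamma D$ offset in the consensus bound on $\y_{t,i}$, and $C_3/\eta_t$ from explicitly bounding $\log\big([\x_{t+1,i}]_j/[\hat\x_{t+1,i}]_j\big)\le 2\gamma$ inside the KL difference --- but your identification of both as mixing-bias contributions is correct.
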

	\begin{remark}\label{Remark:final1}
		From Eq.\eqref{equ:th2_uncomplete_equation}, if we set $\eta_{t}=O\left(\sqrt{\frac{(1-\beta)C_{T}}{T}}\right)$ and $\gamma=O(T^{-2})$ where $C_{T}=\sum_{t=1}^{T}|\mathcal{A}_{t+1}^{*}\Delta\mathcal{A}_{t}^{*}|$, we have that $\sum_{t=1}^{T}\E\Big(f_{t}(\mathcal{A}_{t})\Big)\ge\Big(\frac{1-e^{-c}}{c}\Big)\sum_{t=1}^{T}f_{t}(\mathcal{A}_{t}^{*})-\widetilde{O}\left(\sqrt{\frac{C_{T}T}{1-\beta}}\right)$.
	\end{remark}
			\begin{algorithm}[t]
			\caption{Multi-Agent Online Surrogate Entropic Ascent~(\textbf{MA-OSEA})}\label{alg:BDOEA}
			\begin{algorithmic}[1]
				\STATE{\bf Input:} Number of iterations $T$, the set of agents $\N$, communication graph $G(\N,\mathcal{E})$,
				weight matrix $\W=[w_{ij}]\in\R^{N\times N}$, $1$-strongly decomposable convex function $\phi(\x)=\sum_{i=1}^{n}x_{i}\log(x_{i})$, the curvature $c\in[0,1]$, step size $\eta_{t}$ for $t\in[T]$,mixing parameter $\gamma$;
				\STATE {\bf Initialized:} for any agent $i\in\N$, let $[\x_{1,i}]_{j}=\frac{1}{|\V_{i}|},\ \forall j\in\V_{i}\ \text{ and }\  [\x_{1,i}]_{j}=0,\ \forall j\notin\V_{i}$
				\FOR{$t\in[T]$}
				\FOR{$i\in\N$}
				\STATE Compute $\text{SUM}:=\sum_{a\in\V_{i}}[\x_{t,i}]_{a}$\ \ \ \  \COMMENT{Rounding (Lines 5-6)}
				\STATE Select an action $a_{t,i}$ from the set $\V_{i}$ with probability $\frac{[\x_{t,i}]_{a}}{\text{SUM}}$
				\STATE Compute $\hat{\x}_{t, i}:=(1-\gamma)\x_{t, i}+\frac{\gamma}{n}\one_{n}$;\ \ \ \ \  \COMMENT{Mixing (Line 7)}
				\STATE Exchange $\hat{\x}_{t,i}$ with each neighboring node $j\in\mathcal{N}_{i}$\ \ \ \COMMENT{Information aggregation (Lines 8-9)}
				\STATE Aggregate the information by setting $ \y_{t,i}=\sum_{j\in\mathcal{N}_{i}\cup\{i\}}w_{ij}\x_{t,j}$
				\STATE Sampling a random number $z_{t,i}$ from r.v. $\mathcal{Z}$\ \ \ \COMMENT{Surrogate gradient estimation (Lines 10-12)}
				\STATE Sampling a random set $\mathcal{R}_{t,i}\sim z_{t,i}*\x_{t,i}$
				\STATE Compute $[\widetilde{\nabla} F_{t}^{s}(\x_{t,i})]_{a}:=\frac{1-e^{-c}}{c} \big(f_{t}(\mathcal{R}_{t,i}\cup\{a\})-f_{t}(\mathcal{R}_{t,i}\setminus\{a\})\big)$ for any $a\in\V_{i}$ 
				\STATE Update $	[\x_{t+1,i}]_{a}=[\y_{t,i}]_{a},\ \forall a\notin\V_{i}$\ \ \ \ \ \COMMENT{ Update the probabilities of actions (Lines 13-18)}
				\STATE Compute $\text{SUM}_{1}:=\sum_{a\in\V_{i}}\Big(	[\y_{t,i}]_{a}\exp(\eta_{t}[\tilde{\nabla} F_{t}^{s}(\x_{t,i})]_{a})\Big)$
				\IF{$\text{SUM}_{1}\le 1$}
				\STATE  $[\x_{t+1,i}]_{a}:=[\y_{t,i}]_{a}\exp(\eta_{t}[\tilde{\nabla} F_{t}^{s}(\x_{t,i})]_{a})$ for any $a\in\V_{i}$
				\ELSE \STATE $[\x_{t+1,i}]_{a}:=[\y_{t,i}]_{a}\exp(\eta_{t}[\tilde{\nabla} F_{t}^{s}(\x_{t,i})]_{a})/\text{SUM}_{1}$ for any $a\in\V_{i}$
				\ENDIF
				\ENDFOR
				\ENDFOR
			\end{algorithmic}
		\end{algorithm}
  \section{Numerical Experiments}
	In this section, we evaluate our proposed Algorithm~\ref{alg:BDOMA} and Algorithm~\ref{alg:BDOEA} in simulated multi-target tracking tasks~\citep{corah2021scalable,xu2023online} with multiple agents.
		
		\textbf{Experiment Setup.} We consider a 2D scenario where $20$ agents are pursuing $30$ moving targets with $T=2500$ iterations over $50$ seconds. At every iteration, each agent selects its direction of movement from ``up", ``down", ``left", ``right", or ``diagonally". Concurrently, agents also adjust their speeds from a set of $5$, $10$, or $15$ units/s. As for targets, we categorize them into three distinct types: the unpredictable `Random', the structured `Polyline', and the challenging `Adversarial'. Specifically, at each iteration, a `Random' target randomly changes its movement angle $\theta$ from $[0,2\pi]$ and moves at a random speed between 5 and 10 units/s. A `Polyline' target generally maintains its trajectory and only
	behaves like the `Random' target at $\{0, \frac{T}{k}, \frac{2T}{k},\dots, \frac{(k-1)T}{k}\}$-th iteration where $T$ is the predefined total iterations and $k$ is a random number from $\{1,2,4\}$. As for the `Adversarial' target, it acts like a `Random' target when all agents are beyond 20 units. However, if any agent is within a 20-unit range, the `Adversarial' target escapes at a speed of 15 units/s for one second, pointing to the direction that maximizes the average distance from all agents. In addition, we initialize the starting positions of all agents and targets randomly within 20-unit radius circle centered at the origin.

	\textbf{Objective Function.} We begin by defining the ground set $\V=\{(\theta,s,i):s\in\{5,10,15\}\text{units/s},i\in[20],\theta\in\{\frac{\pi}{4},\frac{\pi}{2},\frac{3\pi}{4},\pi,\dots,2\pi\}\}$ where $\theta,s,i$ represent the movement angle, speed and the identifier of agents, respectively. Moreover, the symbol  $o_{t}(j)$ denotes the 2D location of target $j\in[30]$ at time $t\in[T]$ and $o^{a}_{t}(\theta,s,i)$ stands for the new position of agent $i$ after moving from its location at time $t-1$  in the direction of $\theta$ at a speed of $s$. In order to keep up with all targets, we naturally consider the following submodular objective function for each time $t$: $f_{t}(\mathcal{A})=\sum_{j=1}^{30}\max_{(\theta,s,i)\in\mathcal{A}}\frac{1}{d(o^{a}_{t}(\theta,s,i),o_{t}(j))}$ where $d(\cdot,\cdot)$ is the distance between two locations and $\mathcal{A}\subseteq\V$.
	
	\textbf{Analysis.} In Figure \ref{graph:total}, we assess our proposed \textbf{MA-OSMA} and \textbf{MA-OSEA} against  OSG~\citep{xu2023online} across scenarios with different proportions of `Random', `Polyline', and `Adversarial' targets. The ratios are setting as `R':`A':`P'=$8$:$1$:$1$ in Figure~\ref{graph1}-\ref{graph3}, $6$:$3$:$1$ in Figure~\ref{graph12}-\ref{graph32} and $4$:$5$:$1$ in Figure~\ref{graph13}-\ref{graph33}. The suffixes in \textbf{MA-OSMA} and \textbf{MA-OSEA} represent two different choices for communication graphs: `c' for a complete graph and `r' for an Erdos-Renyi random graph with average degree $4$. From  Figure~\ref{graph1},\ref{graph12} and \ref{graph13}, we can find that the running average utility $\frac{\sum_{\tau=1}^{t}f_{\tau}(\cup_{i\in\N}\{a_{\tau,i}\})}{t}$ of our proposed \textbf{MA-OSMA} and \textbf{MA-OSEA} significantly outperforms the OSG algorithm, which is in accord with our theoretical analysis. Similarly, the average number of targets within $5$ units for \textbf{MA-OSMA} and \textbf{MA-OSEA} greatly exceeds that of the OSG, as depicted in Figure~\ref{graph2},\ref{graph22} and \ref{graph23}. Note that, due to `Adversarial' targets, all curves for the average number exhibit a downward trend. Furthermore, our proposed \textbf{MA-OSMA} and \textbf{MA-OSEA} also can effectively reduce the average distance as shown in Figure \ref{graph3}, \ref{graph32}, and \ref{graph33}. 
 Note that the algorithms over random graph  perform comparably to those on complete graph in all figures, which, to some extent, reflects the communication efficiency of our proposed algorithms.
		\begin{figure*}[t]
		\vspace{-1.0em}
		\centering
		\subfigure[Average Utility\label{graph1}]{\includegraphics[scale=0.175]{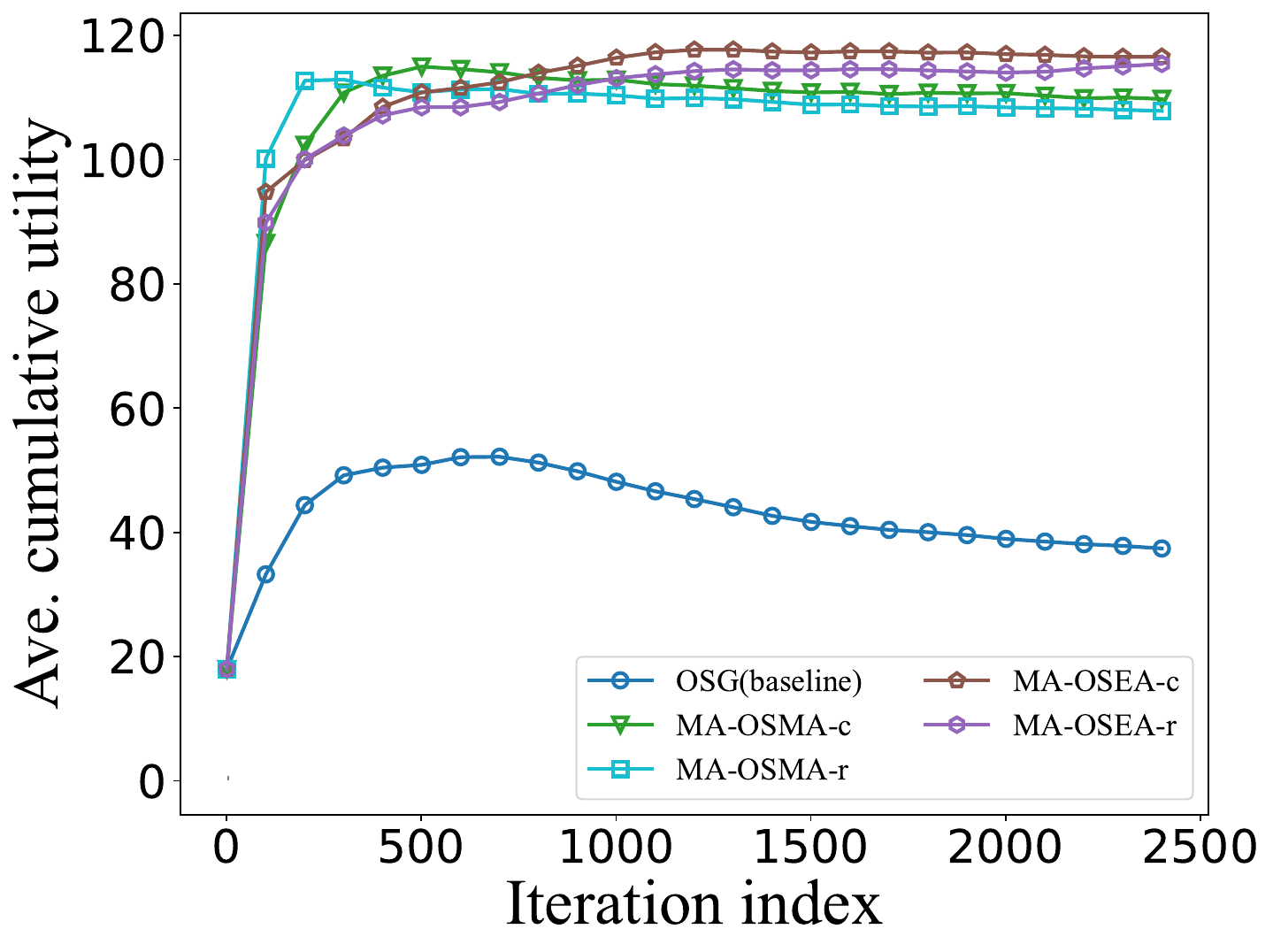}}
		\subfigure[Average Number \label{graph2}]{\includegraphics[scale=0.175]{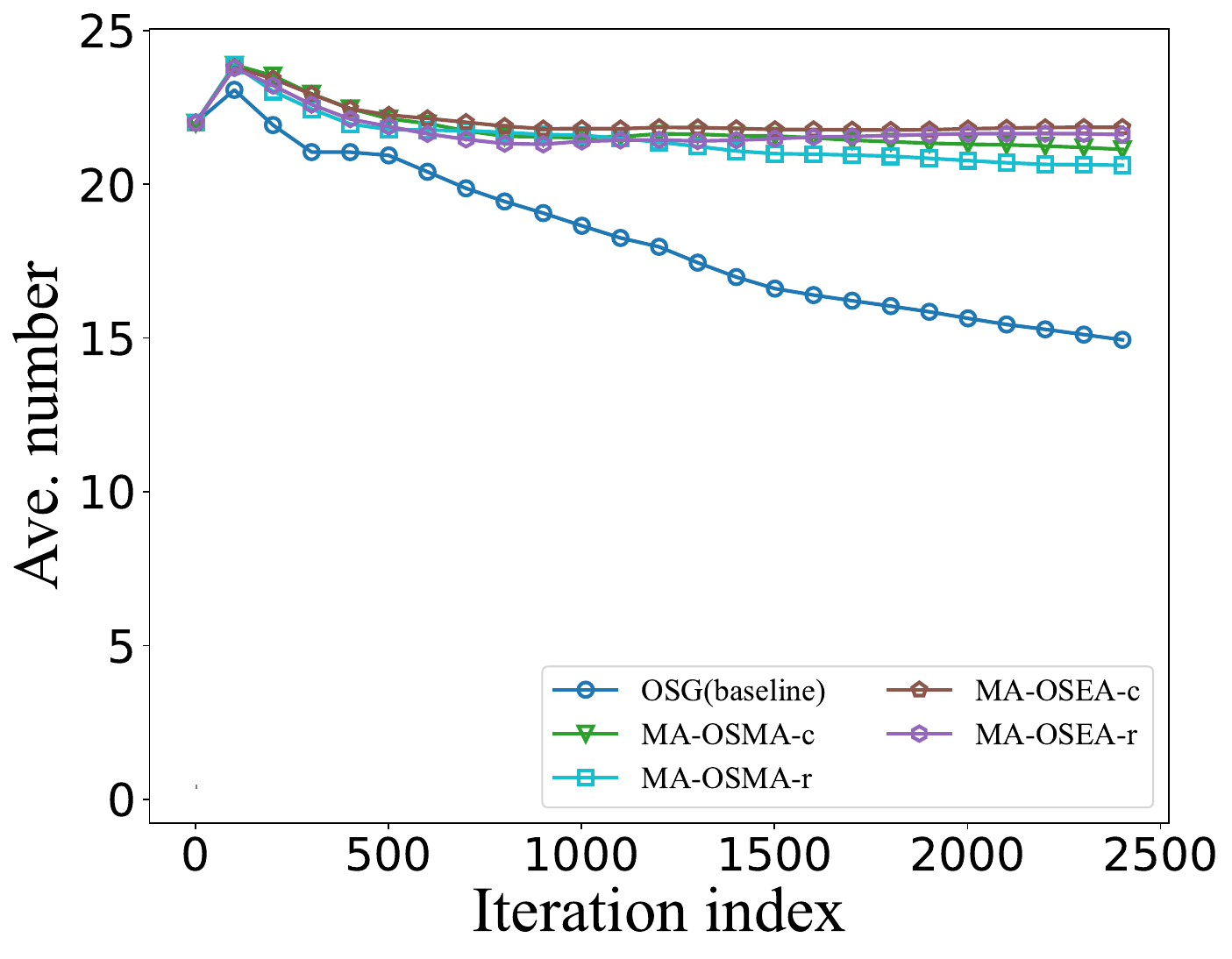}}
		\subfigure[Average Distance\label{graph3}]{\includegraphics[scale=0.175]{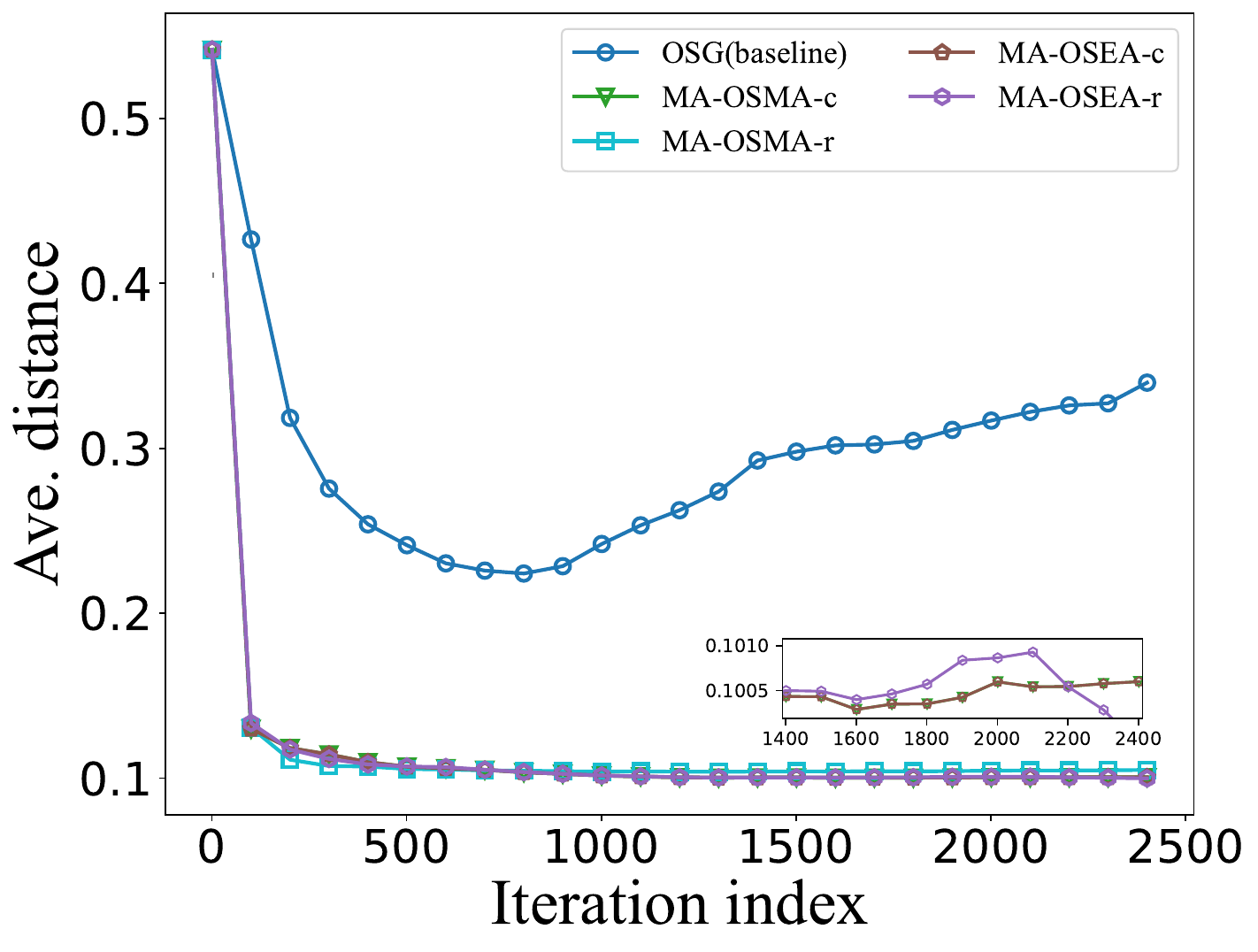 }}
		
		\subfigure[Average Utility\label{graph12}]{\includegraphics[scale=0.175]{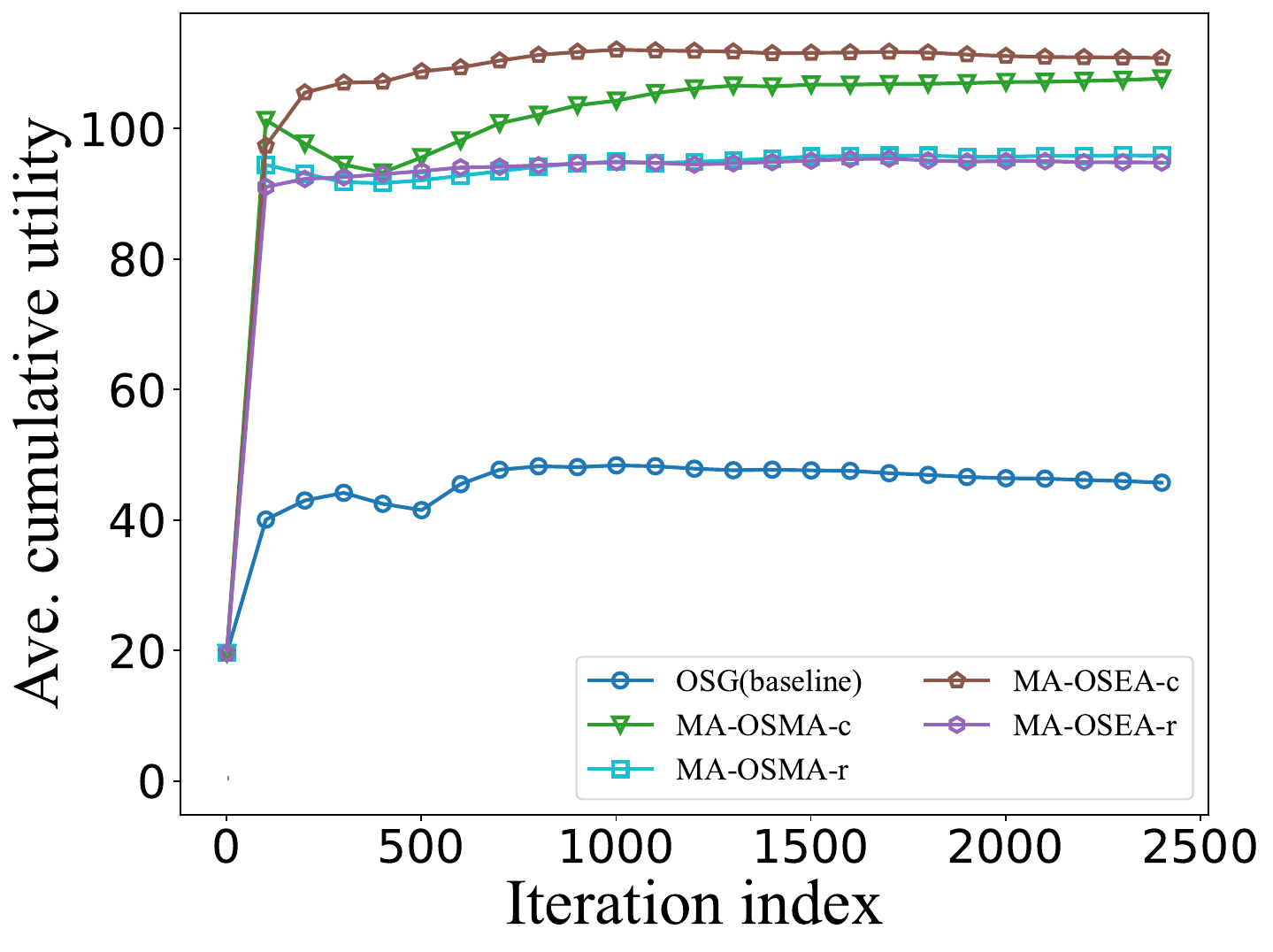}}
		\subfigure[Average Number \label{graph22}]{\includegraphics[scale=0.175]{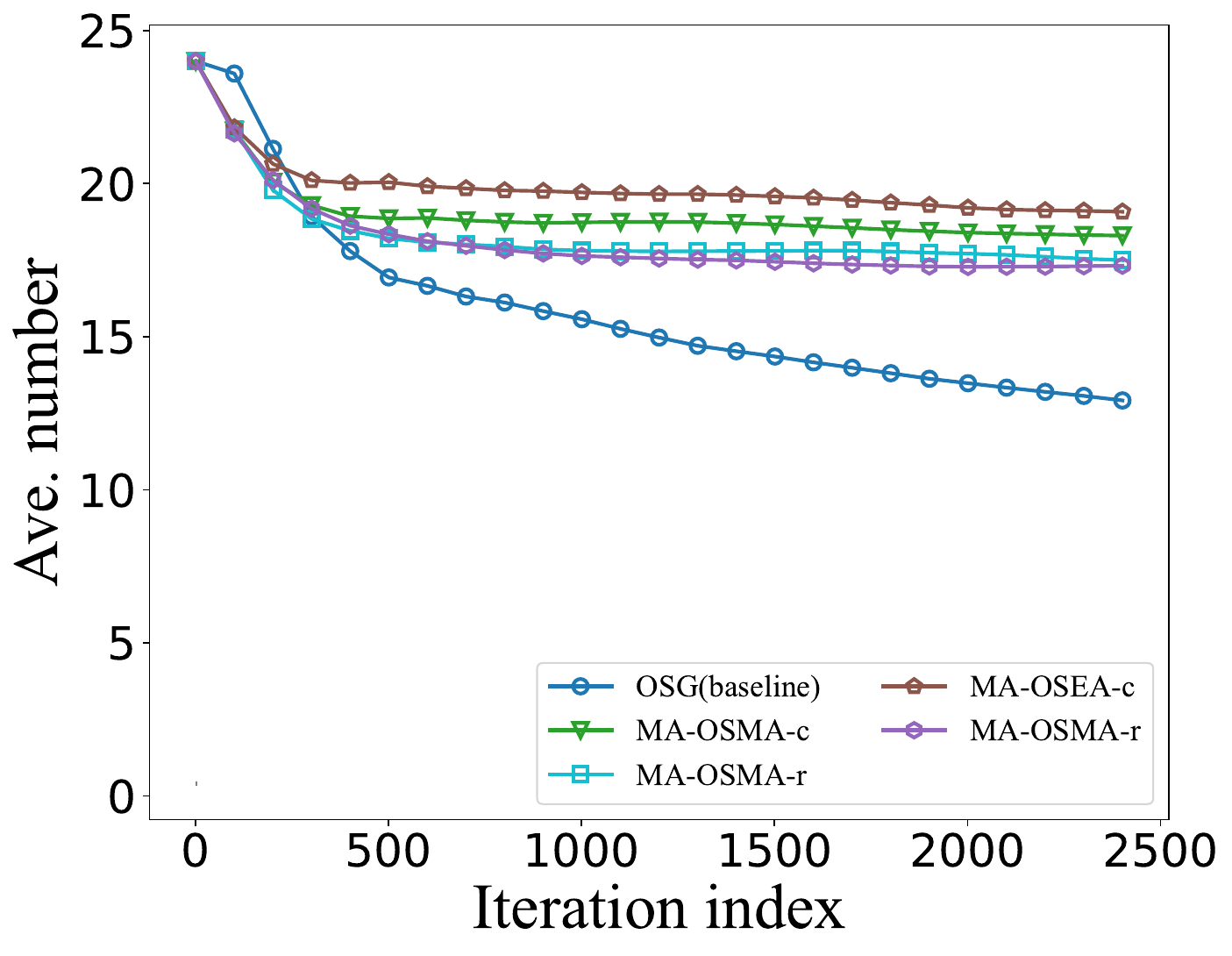}}
		\subfigure[Average Distance\label{graph32}]{\includegraphics[scale=0.175]{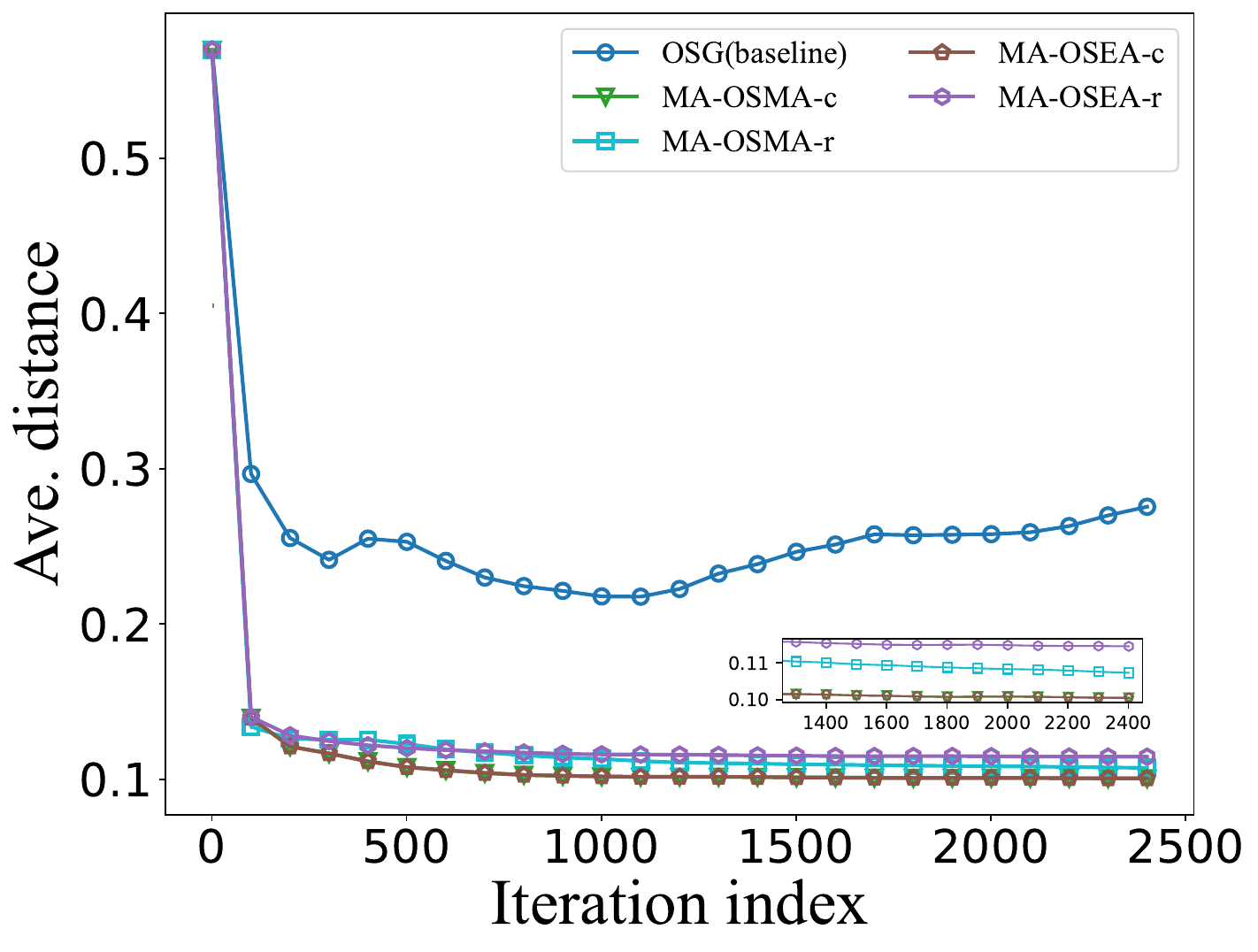 }}
		
		\subfigure[Average Utility\label{graph13}]{\includegraphics[scale=0.175]{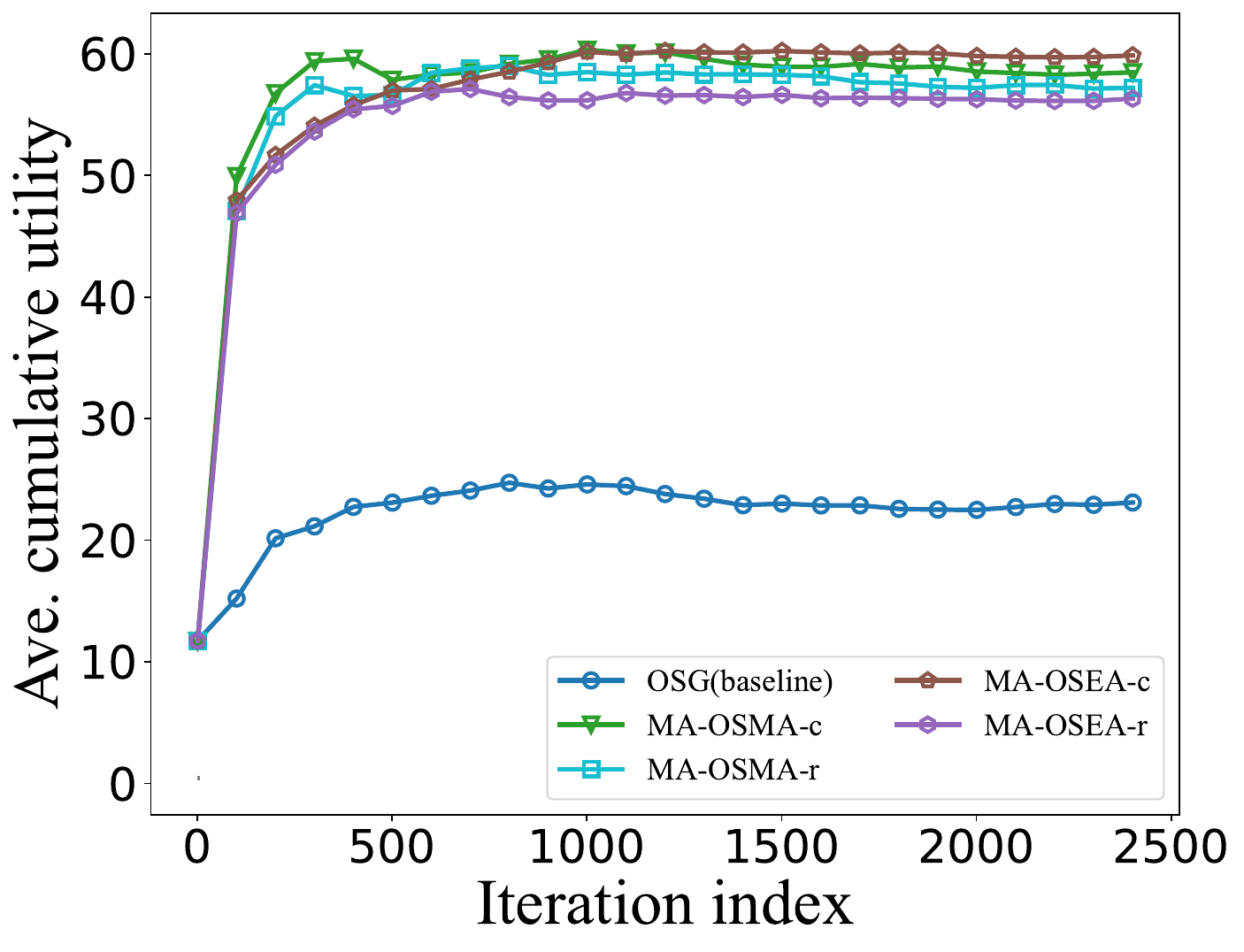}}
		\subfigure[Average Number \label{graph23}]{\includegraphics[scale=0.175]{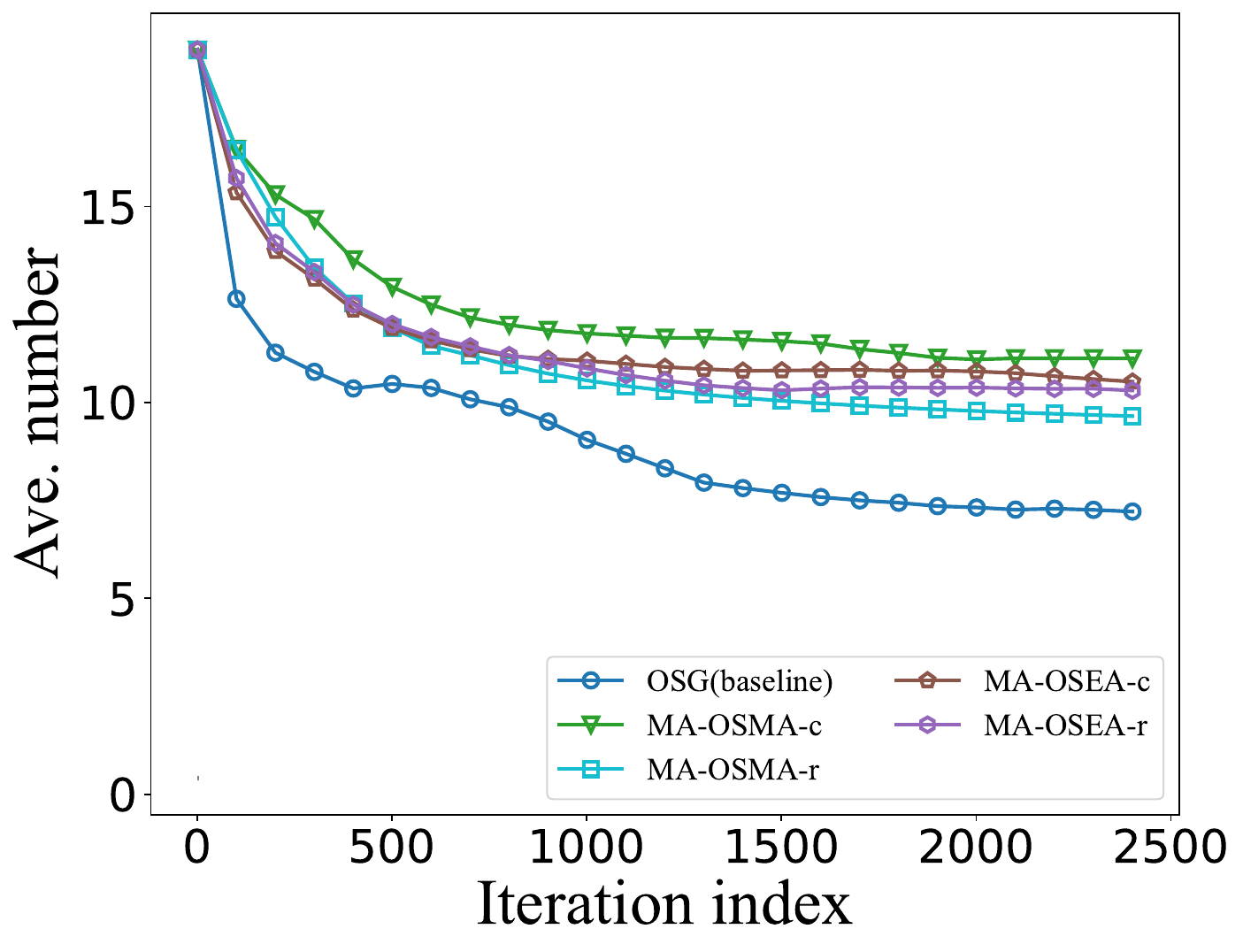}}
		\subfigure[Average Distance\label{graph33}]{\includegraphics[scale=0.175]{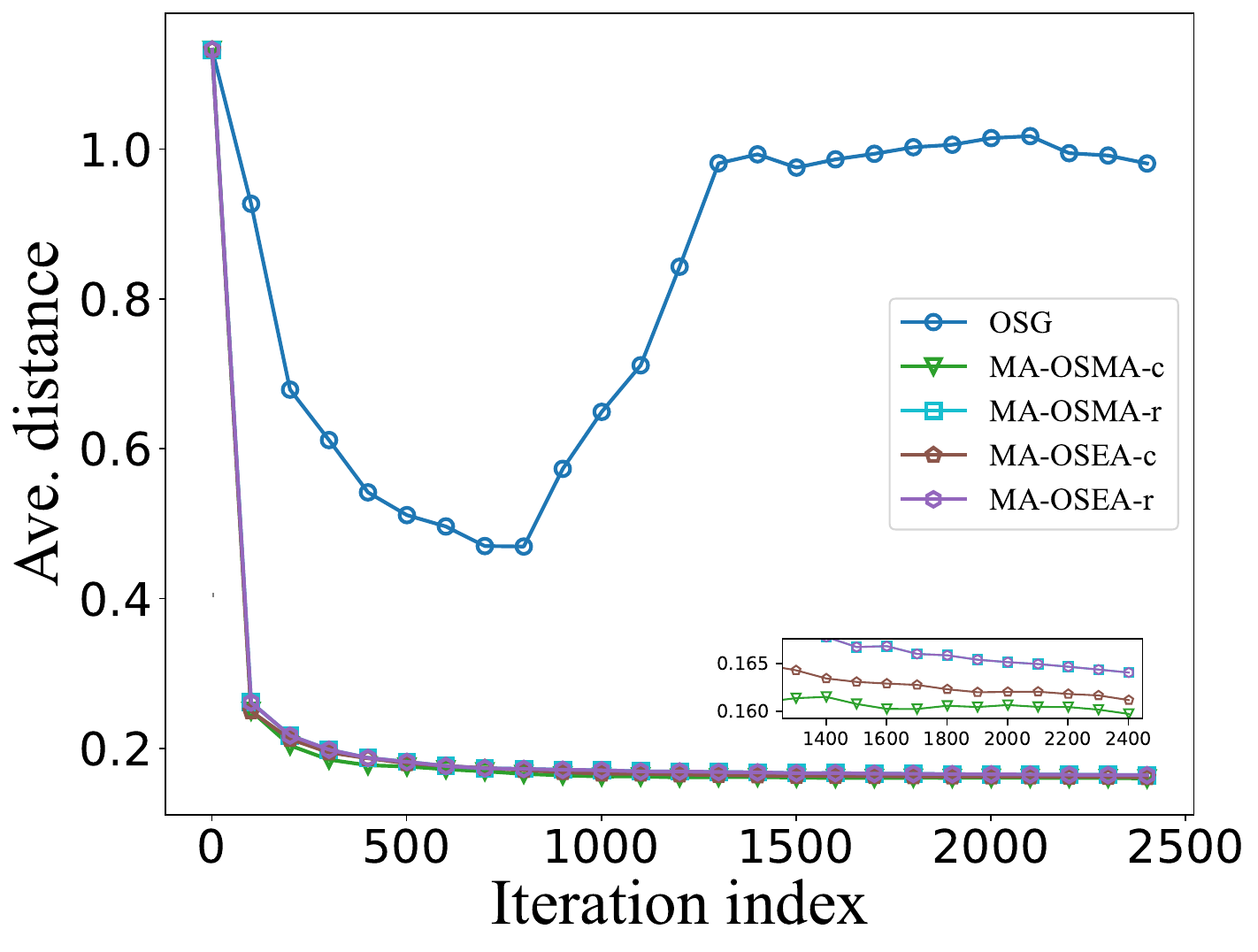 }}
		\vspace{-0.5em}
		\caption{Comparison of average cumulative utility, average number of targets within 5 units, average distance of Top-5 nearest targets of \textbf{MA-OSMA-c},\textbf{MA-OSMA-r},\textbf{MA-OSEA-c} and \textbf{MA-OSEA-c}  with OSG on different multi-target tracking scenarios(averaged over 5 runs).}\label{graph:total}
  \vspace{-1.5em}
	\end{figure*}
\section{Conclusions and Future Work}
This paper presents two efficient algorithms for the multi-agent online submodular maximization problem. In sharp contrast with the previous OSG method, our proposed algorithms not only enjoy a tight $(\frac{1-e^{-c}}{c})$-approximation but also reduce the need for a complete communication graph. Finally, extensive empirical evaluations are performed to validate the effectiveness of our algorithms. 

In many real-world scenarios, the local information gathered by one agent is often contaminated with noise, thereby leading to imperfect assessments of the marginal gains of its own actions. To tackle this challenge, a compelling strategy is to extend our regret analysis to accommodate the estimation errors inherent in marginal evaluations, as exemplified by the work of \citet{corah2021scalable}. Furthermore, another innovative direction is to generalize Algorithms~\ref{alg:BDOMA} and \ref{alg:BDOEA} to adapt to time-varying and directed network topology~\citep{nedic2014distributed,nedic2017achieving}, as opposed to the static and undirected structure that is assumed. Lastly, the most promising direction is to design a parameter-free algorithm that eliminates the dependency on curvature of Algorithms~\ref{alg:BDOMA} and \ref{alg:BDOEA}.
\newpage
\bibliography{iclr2025_conference}\bibliographystyle{iclr2025_conference}

\begin{thebibliography}{88}
\providecommand{\natexlab}[1]{#1}
\providecommand{\url}[1]{\texttt{#1}}
\expandafter\ifx\csname urlstyle\endcsname\relax
  \providecommand{\doi}[1]{doi: #1}\else
  \providecommand{\doi}{doi: \begingroup \urlstyle{rm}\Url}\fi

\bibitem[Agarwal et~al.(2017)Agarwal, Allen-Zhu, Bullins, Hazan, and Ma]{agarwal2017finding}
Naman Agarwal, Zeyuan Allen-Zhu, Brian Bullins, Elad Hazan, and Tengyu Ma.
\newblock Finding approximate local minima faster than gradient descent.
\newblock In \emph{Proceedings of the 49th Annual ACM SIGACT Symposium on Theory of Computing}, pp.\  1195--1199, 2017.

\bibitem[Alimonti(1994)]{alimonti1994new}
Paola Alimonti.
\newblock New local search approximation techniques for maximum generalized satisfiability problems.
\newblock In \emph{Italian Conference on Algorithms and Complexity}, pp.\  40--53. Springer, 1994.

\bibitem[Atanasov et~al.(2015)Atanasov, Le~Ny, Daniilidis, and Pappas]{atanasov2015decentralized}
Nikolay Atanasov, Jerome Le~Ny, Kostas Daniilidis, and George~J Pappas.
\newblock Decentralized active information acquisition: Theory and application to multi-robot slam.
\newblock In \emph{2015 IEEE International Conference on Robotics and Automation (ICRA)}, pp.\  4775--4782. IEEE, 2015.

\bibitem[Balakrishnan et~al.(2022)Balakrishnan, Li, Zhou, Himayat, Smith, and Bilmes]{balakrishnan2022diverse}
Ravikumar Balakrishnan, Tian Li, Tianyi Zhou, Nageen Himayat, Virginia Smith, and Jeff Bilmes.
\newblock Diverse client selection for federated learning via submodular maximization.
\newblock In \emph{International Conference on Learning Representations}, 2022.

\bibitem[Bauschke \& Borwein(2001)Bauschke and Borwein]{bauschke2001joint}
Heinz~H Bauschke and Jonathan~M Borwein.
\newblock Joint and separate convexity of the bregman distance.
\newblock In \emph{Studies in Computational Mathematics}, volume~8, pp.\  23--36. Elsevier, 2001.

\bibitem[Bian et~al.(2017{\natexlab{a}})Bian, Buhmann, Krause, and Tschiatschek]{bian2017guarantees}
Andrew~An Bian, Joachim~M Buhmann, Andreas Krause, and Sebastian Tschiatschek.
\newblock Guarantees for greedy maximization of non-submodular functions with applications.
\newblock In \emph{International conference on machine learning}, pp.\  498--507. PMLR, 2017{\natexlab{a}}.

\bibitem[Bian et~al.(2017{\natexlab{b}})Bian, Mirzasoleiman, Buhmann, and Krause]{bian2017guaranteed}
Andrew~An Bian, Baharan Mirzasoleiman, Joachim Buhmann, and Andreas Krause.
\newblock Guaranteed non-convex optimization: Submodular maximization over continuous domains.
\newblock In \emph{Artificial Intelligence and Statistics}, pp.\  111--120. PMLR, 2017{\natexlab{b}}.

\bibitem[Bian et~al.(2020)Bian, Buhmann, and Krause]{bian2020continuous}
Yatao Bian, Joachim~M Buhmann, and Andreas Krause.
\newblock Continuous submodular function maximization.
\newblock \emph{arXiv preprint arXiv:2006.13474}, 2020.

\bibitem[Brucker(1984)]{brucker1984n}
Peter Brucker.
\newblock An o (n) algorithm for quadratic knapsack problems.
\newblock \emph{Operations Research Letters}, 3\penalty0 (3):\penalty0 163--166, 1984.

\bibitem[Calinescu et~al.(2011)Calinescu, Chekuri, Pal, and Vondr{\'a}k]{calinescu2011maximizing}
Gruia Calinescu, Chandra Chekuri, Martin Pal, and Jan Vondr{\'a}k.
\newblock Maximizing a monotone submodular function subject to a matroid constraint.
\newblock \emph{SIAM Journal on Computing}, 40\penalty0 (6):\penalty0 1740--1766, 2011.

\bibitem[Chekuri et~al.(2014)Chekuri, Vondr{\'a}k, and Zenklusen]{chekuri2014submodular}
Chandra Chekuri, Jan Vondr{\'a}k, and Rico Zenklusen.
\newblock Submodular function maximization via the multilinear relaxation and contention resolution schemes.
\newblock \emph{SIAM Journal on Computing}, 43\penalty0 (6):\penalty0 1831--1879, 2014.

\bibitem[Chen \& Teboulle(1993)Chen and Teboulle]{chen1993convergence}
Gong Chen and Marc Teboulle.
\newblock Convergence analysis of a proximal-like minimization algorithm using bregman functions.
\newblock \emph{SIAM Journal on Optimization}, 3\penalty0 (3):\penalty0 538--543, 1993.

\bibitem[Chen et~al.(2018)Chen, Hassani, and Karbasi]{chen2018online}
Lin Chen, Hamed Hassani, and Amin Karbasi.
\newblock Online continuous submodular maximization.
\newblock In \emph{International Conference on Artificial Intelligence and Statistics}, pp.\  1896--1905. PMLR, 2018.

\bibitem[Conforti \& Cornu{\'e}jols(1984)Conforti and Cornu{\'e}jols]{conforti1984submodular}
Michele Conforti and G{\'e}rard Cornu{\'e}jols.
\newblock Submodular set functions, matroids and the greedy algorithm: tight worst-case bounds and some generalizations of the rado-edmonds theorem.
\newblock \emph{Discrete applied mathematics}, 7\penalty0 (3):\penalty0 251--274, 1984.

\bibitem[Corah \& Michael(2021)Corah and Michael]{corah2021scalable}
Micah Corah and Nathan Michael.
\newblock Scalable distributed planning for multi-robot, multi-target tracking.
\newblock In \emph{2021 IEEE/RSJ International Conference on Intelligent Robots and Systems (IROS)}, pp.\  437--444. IEEE, 2021.

\bibitem[Das \& Kempe(2018)Das and Kempe]{das2018approximate}
Abhimanyu Das and David Kempe.
\newblock Approximate submodularity and its applications: Subset selection, sparse approximation and dictionary selection.
\newblock \emph{Journal of Machine Learning Research}, 19\penalty0 (3):\penalty0 1--34, 2018.

\bibitem[Du et~al.(2022)Du, Qian, Claudel, and Sun]{du2022jacobi}
Bin Du, Kun Qian, Christian Claudel, and Dengfeng Sun.
\newblock Jacobi-style iteration for distributed submodular maximization.
\newblock \emph{IEEE transactions on automatic control}, 67\penalty0 (9):\penalty0 4687--4702, 2022.

\bibitem[El-Arini et~al.(2009)El-Arini, Veda, Shahaf, and Guestrin]{el2009turning}
Khalid El-Arini, Gaurav Veda, Dafna Shahaf, and Carlos Guestrin.
\newblock Turning down the noise in the blogosphere.
\newblock In \emph{Proceedings of the 15th ACM SIGKDD international conference on Knowledge discovery and data mining}, pp.\  289--298, 2009.

\bibitem[Fan et~al.(2025)Fan, Du, Hu, Wang, Shen, Ya, Tao, and Wang]{fancombatting}
Ziqing Fan, Siyuan Du, Shengchao Hu, Pingjie Wang, Li~Shen, Zhang Ya, Dacheng Tao, and Yanfeng Wang.
\newblock Combatting dimensional collapse in llm pre-training data via diversified file selection.
\newblock In \emph{The Thirteenth International Conference on Learning Representations}, 2025.

\bibitem[Fang et~al.(2018)Fang, Li, Lin, and Zhang]{fang2018spider}
Cong Fang, Chris~Junchi Li, Zhouchen Lin, and Tong Zhang.
\newblock Spider: Near-optimal non-convex optimization via stochastic path-integrated differential estimator.
\newblock \emph{Advances in neural information processing systems}, 31, 2018.

\bibitem[Filmus \& Ward(2012)Filmus and Ward]{filmus2012power}
Yuval Filmus and Justin Ward.
\newblock The power of local search: Maximum coverage over a matroid.
\newblock In \emph{29th Symposium on Theoretical Aspects of Computer Science}, volume~14, pp.\  601--612. LIPIcs, 2012.

\bibitem[Filmus \& Ward(2014)Filmus and Ward]{filmus2014monotone}
Yuval Filmus and Justin Ward.
\newblock Monotone submodular maximization over a matroid via non-oblivious local search.
\newblock \emph{SIAM Journal on Computing}, 43\penalty0 (2):\penalty0 514--542, 2014.

\bibitem[Fisher et~al.(1978)Fisher, Nemhauser, and Wolsey]{fisher1978analysis}
Marshall~L Fisher, George~L Nemhauser, and Laurence~A Wolsey.
\newblock An analysis of approximations for maximizing submodular set functions—ii.
\newblock In \emph{Polyhedral Combinatorics}, pp.\  73--87. Springer, 1978.

\bibitem[Gao et~al.(2023)Gao, Gu, and Thai]{gao2023convergence}
Hongchang Gao, Bin Gu, and My~T Thai.
\newblock On the convergence of distributed stochastic bilevel optimization algorithms over a network.
\newblock In \emph{International Conference on Artificial Intelligence and Statistics}, pp.\  9238--9281. PMLR, 2023.

\bibitem[Gharesifard \& Smith(2017)Gharesifard and Smith]{gharesifard2017distributed}
Bahman Gharesifard and Stephen~L Smith.
\newblock Distributed submodular maximization with limited information.
\newblock \emph{IEEE transactions on control of network systems}, 5\penalty0 (4):\penalty0 1635--1645, 2017.

\bibitem[Grant \& Boyd(2014)Grant and Boyd]{grant2014cvx}
Michael Grant and Stephen Boyd.
\newblock Cvx: Matlab software for disciplined convex programming, version 2.1, 2014.

\bibitem[Grimsman et~al.(2018)Grimsman, Ali, Hespanha, and Marden]{grimsman2018impact}
David Grimsman, Mohd~Shabbir Ali, Joao~P Hespanha, and Jason~R Marden.
\newblock The impact of information in distributed submodular maximization.
\newblock \emph{IEEE Transactions on Control of Network Systems}, 6\penalty0 (4):\penalty0 1334--1343, 2018.

\bibitem[Hassani et~al.(2017)Hassani, Soltanolkotabi, and Karbasi]{hassani2017gradient}
Hamed Hassani, Mahdi Soltanolkotabi, and Amin Karbasi.
\newblock Gradient methods for submodular maximization.
\newblock In \emph{Advances in Neural Information Processing Systems}, pp.\  5841--5851, 2017.

\bibitem[Hazan et~al.(2016)]{hazan2016introduction}
Elad Hazan et~al.
\newblock Introduction to online convex optimization.
\newblock \emph{Foundations and Trends{\textregistered} in Optimization}, 2\penalty0 (3-4):\penalty0 157--325, 2016.

\bibitem[Horn \& Johnson(2012)Horn and Johnson]{horn2012matrix}
Roger~A Horn and Charles~R Johnson.
\newblock \emph{Matrix analysis}.
\newblock Cambridge university press, 2012.

\bibitem[Jadbabaie et~al.(2015)Jadbabaie, Rakhlin, Shahrampour, and Sridharan]{jadbabaie2015online}
Ali Jadbabaie, Alexander Rakhlin, Shahin Shahrampour, and Karthik Sridharan.
\newblock Online optimization: Competing with dynamic comparators.
\newblock In \emph{Artificial Intelligence and Statistics}, pp.\  398--406. PMLR, 2015.

\bibitem[Jin et~al.(2024{\natexlab{a}})Jin, Che, Peng, Li, Metaxas, and Pavone]{jin2024learning}
Can Jin, Tong Che, Hongwu Peng, Yiyuan Li, Dimitris~N. Metaxas, and Marco Pavone.
\newblock Learning from teaching regularization: Generalizable correlations should be easy to imitate.
\newblock In \emph{The Thirty-eighth Annual Conference on Neural Information Processing Systems}, 2024{\natexlab{a}}.

\bibitem[Jin et~al.(2024{\natexlab{b}})Jin, Huang, Zhang, Pechenizkiy, Liu, Liu, and Chen]{jin2024visual}
Can Jin, Tianjin Huang, Yihua Zhang, Mykola Pechenizkiy, Sijia Liu, Shiwei Liu, and Tianlong Chen.
\newblock Visual prompting upgrades neural network sparsification: A data-model perspective.
\newblock In \emph{The 39th Annual AAAI Conference on Artificial Intelligence}, 2024{\natexlab{b}}.

\bibitem[Jin et~al.(2024{\natexlab{c}})Jin, Peng, Zhao, Wang, Xu, Han, Zhao, Zhong, Rajasekaran, and Metaxas]{jin2024apeer}
Can Jin, Hongwu Peng, Shiyu Zhao, Zhenting Wang, Wujiang Xu, Ligong Han, Jiahui Zhao, Kai Zhong, Sanguthevar Rajasekaran, and Dimitris~N Metaxas.
\newblock Apeer: Automatic prompt engineering enhances large language model reranking.
\newblock \emph{arXiv preprint arXiv:2406.14449}, 2024{\natexlab{c}}.

\bibitem[Jin et~al.(2017)Jin, Ge, Netrapalli, Kakade, and Jordan]{jin2017escape}
Chi Jin, Rong Ge, Praneeth Netrapalli, Sham~M Kakade, and Michael~I Jordan.
\newblock How to escape saddle points efficiently.
\newblock In \emph{International conference on machine learning}, pp.\  1724--1732. PMLR, 2017.

\bibitem[Kakade et~al.(2007)Kakade, Kalai, and Ligett]{kakade2007playing}
Sham~M Kakade, Adam~Tauman Kalai, and Katrina Ligett.
\newblock Playing games with approximation algorithms.
\newblock In \emph{Proceedings of the thirty-ninth annual ACM symposium on Theory of computing}, pp.\  546--555, 2007.

\bibitem[Kempe et~al.(2003)Kempe, Kleinberg, and Tardos]{kempe2003maximizing}
David Kempe, Jon Kleinberg, and {\'E}va Tardos.
\newblock Maximizing the spread of influence through a social network.
\newblock In \emph{Proceedings of the ninth ACM SIGKDD International Conference on Knowledge Discovery and Data Mining}, pp.\  137--146, 2003.

\bibitem[Khanna et~al.(1998)Khanna, Motwani, Sudan, and Vazirani]{khanna1998syntactic}
Sanjeev Khanna, Rajeev Motwani, Madhu Sudan, and Umesh Vazirani.
\newblock On syntactic versus computational views of approximability.
\newblock \emph{SIAM Journal on Computing}, 28\penalty0 (1):\penalty0 164--191, 1998.

\bibitem[Krause et~al.(2008)Krause, Singh, and Guestrin]{krause2008near}
Andreas Krause, Ajit Singh, and Carlos Guestrin.
\newblock Near-optimal sensor placements in gaussian processes: Theory, efficient algorithms and empirical studies.
\newblock \emph{Journal of Machine Learning Research}, 9\penalty0 (2), 2008.

\bibitem[Kumari et~al.(2024)Kumari, Wang, Das, Zhou, and Bilmes]{kumari2024end}
Lilly Kumari, Shengjie Wang, Arnav Das, Tianyi Zhou, and Jeff Bilmes.
\newblock An end-to-end submodular framework for data-efficient in-context learning.
\newblock In \emph{Findings of the Association for Computational Linguistics: NAACL 2024}, pp.\  3293--3308, 2024.

\bibitem[Lacoste-Julien(2016)]{lacoste2016convergence}
Simon Lacoste-Julien.
\newblock Convergence rate of frank-wolfe for non-convex objectives.
\newblock \emph{arXiv preprint arXiv:1607.00345}, 2016.

\bibitem[Lax(2014)]{lax2014functional}
Peter~D Lax.
\newblock \emph{Functional analysis}.
\newblock John Wiley \& Sons, 2014.

\bibitem[Li et~al.(2023)Li, Mehr, and Horowitz]{li2023submodularity}
Ruolin Li, Negar Mehr, and Roberto Horowitz.
\newblock Submodularity of optimal sensor placement for traffic networks.
\newblock \emph{Transportation research part B: methodological}, 171:\penalty0 29--43, 2023.

\bibitem[Liao et~al.(2023)Liao, Wan, Yao, and Song]{liao2023improved}
Yucheng Liao, Yuanyu Wan, Chang Yao, and Mingli Song.
\newblock Improved projection-free online continuous submodular maximization.
\newblock \emph{arXiv preprint arXiv:2305.18442}, 2023.

\bibitem[Lin \& Bilmes(2010)Lin and Bilmes]{lin2010multi}
Hui Lin and Jeff Bilmes.
\newblock Multi-document summarization via budgeted maximization of submodular functions.
\newblock In \emph{Human Language Technologies: The 2010 Annual Conference of the North American Chapter of the Association for Computational Linguistics}, pp.\  912--920, 2010.

\bibitem[Lin \& Bilmes(2011)Lin and Bilmes]{lin2011class}
Hui Lin and Jeff Bilmes.
\newblock A class of submodular functions for document summarization.
\newblock In \emph{Proceedings of the 49th Annual Meeting of the Association for Computational Linguistics: Human Language Technologies}, pp.\  510--520, 2011.

\bibitem[Liu et~al.(2021)Liu, Zhou, Tokekar, and Williams]{liu2021distributed}
Jun Liu, Lifeng Zhou, Pratap Tokekar, and Ryan~K Williams.
\newblock Distributed resilient submodular action selection in adversarial environments.
\newblock \emph{IEEE Robotics and Automation Letters}, 6\penalty0 (3):\penalty0 5832--5839, 2021.

\bibitem[Marden(2016)]{marden2016role}
Jason~R Marden.
\newblock The role of information in distributed resource allocation.
\newblock \emph{IEEE Transactions on Control of Network Systems}, 4\penalty0 (3):\penalty0 654--664, 2016.

\bibitem[Mirzasoleiman et~al.(2016{\natexlab{a}})Mirzasoleiman, Badanidiyuru, and Karbasi]{mirzasoleiman2016fast}
Baharan Mirzasoleiman, Ashwinkumar Badanidiyuru, and Amin Karbasi.
\newblock Fast constrained submodular maximization: Personalized data summarization.
\newblock In \emph{International Conference on Machine Learning}, pp.\  1358--1367. PMLR, 2016{\natexlab{a}}.

\bibitem[Mirzasoleiman et~al.(2016{\natexlab{b}})Mirzasoleiman, Karbasi, Sarkar, and Krause]{mirzasoleiman2016distributed}
Baharan Mirzasoleiman, Amin Karbasi, Rik Sarkar, and Andreas Krause.
\newblock Distributed submodular maximization.
\newblock \emph{The Journal of Machine Learning Research}, 17\penalty0 (1):\penalty0 8330--8373, 2016{\natexlab{b}}.

\bibitem[Mokhtari et~al.(2018)Mokhtari, Hassani, and Karbasi]{mokhtari2018decentralized}
Aryan Mokhtari, Hamed Hassani, and Amin Karbasi.
\newblock Decentralized submodular maximization: Bridging discrete and continuous settings.
\newblock In \emph{International conference on machine learning}, pp.\  3616--3625. PMLR, 2018.

\bibitem[Nedi{\'c} \& Olshevsky(2014)Nedi{\'c} and Olshevsky]{nedic2014distributed}
Angelia Nedi{\'c} and Alex Olshevsky.
\newblock Distributed optimization over time-varying directed graphs.
\newblock \emph{IEEE Transactions on Automatic Control}, 60\penalty0 (3):\penalty0 601--615, 2014.

\bibitem[Nedic \& Ozdaglar(2009)Nedic and Ozdaglar]{nedic2009distributed}
Angelia Nedic and Asuman Ozdaglar.
\newblock Distributed subgradient methods for multi-agent optimization.
\newblock \emph{IEEE Transactions on Automatic Control}, 54\penalty0 (1):\penalty0 48--61, 2009.

\bibitem[Nedi{\'c} et~al.(2017)Nedi{\'c}, Olshevsky, and Shi]{nedic2017achieving}
Angelia Nedi{\'c}, Alex Olshevsky, and Wei Shi.
\newblock Achieving geometric convergence for distributed optimization over time-varying graphs.
\newblock \emph{SIAM Journal on Optimization}, 27\penalty0 (4):\penalty0 2597--2633, 2017.

\bibitem[Nemhauser et~al.(1978)Nemhauser, Wolsey, and Fisher]{nemhauser1978analysis}
George~L Nemhauser, Laurence~A Wolsey, and Marshall~L Fisher.
\newblock An analysis of approximations for maximizing submodular set functions—i.
\newblock \emph{Mathematical Programming}, 14\penalty0 (1):\penalty0 265--294, 1978.

\bibitem[Nemirovsky \& Yudin(1983)Nemirovsky and Yudin]{nemirovsky1983problem}
A.S. Nemirovsky and D.B. Yudin.
\newblock \emph{Problem Complexity and Method Efficiency in Optimization}.
\newblock Wiley-Interscience series in discrete mathematics. Wiley, 1983.

\bibitem[Nesterov(2013)]{nesterov2013introductory}
Y~Nesterov.
\newblock \emph{Introductory Lectures on Convex Optimization: A Basic Course}, volume~87.
\newblock Springer Science \& Business Media, 2013.

\bibitem[Pardalos \& Kovoor(1990)Pardalos and Kovoor]{pardalos1990algorithm}
Panos~M Pardalos and Naina Kovoor.
\newblock An algorithm for a singly constrained class of quadratic programs subject to upper and lower bounds.
\newblock \emph{Mathematical Programming}, 46:\penalty0 321--328, 1990.

\bibitem[Pedramfar et~al.(2023)Pedramfar, Quinn, and Aggarwal]{NEURIPS2023_c041d58d}
Mohammad Pedramfar, Christopher Quinn, and Vaneet Aggarwal.
\newblock A unified approach for maximizing continuous dr-submodular functions.
\newblock In A.~Oh, T.~Naumann, A.~Globerson, K.~Saenko, M.~Hardt, and S.~Levine (eds.), \emph{Advances in Neural Information Processing Systems}, volume~36, pp.\  61103--61114, 2023.

\bibitem[Pedramfar et~al.(2024)Pedramfar, Nadew, Quinn, and Aggarwal]{DBLP:conf/iclr/PedramfarNQA24}
Mohammad Pedramfar, Yididiya~Y. Nadew, Christopher~John Quinn, and Vaneet Aggarwal.
\newblock Unified projection-free algorithms for adversarial dr-submodular optimization.
\newblock In \emph{The Twelfth International Conference on Learning Representations, {ICLR} 2024, Vienna, Austria, May 7-11, 2024}, 2024.

\bibitem[Prajapat et~al.(2022)Prajapat, Turchetta, Zeilinger, and Krause]{prajapat2022near}
Manish Prajapat, Matteo Turchetta, Melanie Zeilinger, and Andreas Krause.
\newblock Near-optimal multi-agent learning for safe coverage control.
\newblock \emph{Advances in Neural Information Processing Systems}, 35:\penalty0 14998--15012, 2022.

\bibitem[Pu \& Nedi{\'c}(2021)Pu and Nedi{\'c}]{pu2021distributed2}
Shi Pu and Angelia Nedi{\'c}.
\newblock Distributed stochastic gradient tracking methods.
\newblock \emph{Mathematical Programming}, 187\penalty0 (1):\penalty0 409--457, 2021.

\bibitem[Qu et~al.(2019)Qu, Brown, and Li]{qu2019distributed}
Guannan Qu, Dave Brown, and Na~Li.
\newblock Distributed greedy algorithm for multi-agent task assignment problem with submodular utility functions.
\newblock \emph{Automatica}, 105:\penalty0 206--215, 2019.

\bibitem[Rafiey(2024)]{FedSub}
Akbar Rafiey.
\newblock Decomposable submodular maximization in federated setting.
\newblock In \emph{Forty-first International Conference on Machine Learning, {ICML} 2024, Vienna, Austria, July 21-27}, 2024.

\bibitem[Rezazadeh \& Kia(2023)Rezazadeh and Kia]{rezazadeh2023distributed}
Navid Rezazadeh and Solmaz~S Kia.
\newblock Distributed strategy selection: A submodular set function maximization approach.
\newblock \emph{Automatica}, 153:\penalty0 111000, 2023.

\bibitem[Robey et~al.(2021)Robey, Adibi, Schlotfeldt, Hassani, and Pappas]{robey2021optimal}
Alexander Robey, Arman Adibi, Brent Schlotfeldt, Hamed Hassani, and George~J Pappas.
\newblock Optimal algorithms for submodular maximization with distributed constraints.
\newblock In \emph{Learning for Dynamics and Control}, pp.\  150--162. PMLR, 2021.

\bibitem[Schlotfeldt et~al.(2021)Schlotfeldt, Tzoumas, and Pappas]{schlotfeldt2021resilient}
Brent Schlotfeldt, Vasileios Tzoumas, and George~J Pappas.
\newblock Resilient active information acquisition with teams of robots.
\newblock \emph{IEEE Transactions on Robotics}, 38\penalty0 (1):\penalty0 244--261, 2021.

\bibitem[Shahrampour \& Jadbabaie(2017)Shahrampour and Jadbabaie]{shahrampour2017distributed}
Shahin Shahrampour and Ali Jadbabaie.
\newblock Distributed online optimization in dynamic environments using mirror descent.
\newblock \emph{IEEE Transactions on Automatic Control}, 63\penalty0 (3):\penalty0 714--725, 2017.

\bibitem[Singh et~al.(2009)Singh, Krause, Guestrin, and Kaiser]{singh2009efficient}
Amarjeet Singh, Andreas Krause, Carlos Guestrin, and William~J Kaiser.
\newblock Efficient informative sensing using multiple robots.
\newblock \emph{Journal of Artificial Intelligence Research}, 34:\penalty0 707--755, 2009.

\bibitem[Streeter \& Golovin(2008)Streeter and Golovin]{streeter2008online}
Matthew Streeter and Daniel Golovin.
\newblock An online algorithm for maximizing submodular functions.
\newblock In \emph{Advances in Neural Information Processing Systems}, pp.\  1577--1584, 2008.

\bibitem[Vondr{\'a}k(2010)]{vondrak2010submodularity}
Jan Vondr{\'a}k.
\newblock Submodularity and curvature: The optimal algorithm (combinatorial optimization and discrete algorithms).
\newblock \emph{RIMS Kokyuroku Bessatsu B, 23:253–266,}, 23:\penalty0 253--266, 2010.

\bibitem[Vondr{\'a}k(2013)]{vondrak2013symmetry}
Jan Vondr{\'a}k.
\newblock Symmetry and approximability of submodular maximization problems.
\newblock \emph{SIAM Journal on Computing}, 42\penalty0 (1):\penalty0 265--304, 2013.

\bibitem[Wan et~al.(2023)Wan, Zhang, Chen, Sun, and Zhang]{wan2023bandit}
Zongqi Wan, Jialin Zhang, Wei Chen, Xiaoming Sun, and Zhijie Zhang.
\newblock Bandit multi-linear dr-submodular maximization and its applications on adversarial submodular bandits.
\newblock In \emph{International Conference on Machine Learning}, pp.\  35491--35524. PMLR, 2023.

\bibitem[Wang et~al.(2022)Wang, Yang, Liu, Fang, Sun, and Xu]{wang2022empirical}
Jun Wang, Yu~Yang, Qi~Liu, Zheng Fang, Shujuan Sun, and Yabo Xu.
\newblock An empirical study of user engagement in influencer marketing on weibo and wechat.
\newblock \emph{IEEE Transactions on Computational Social Systems}, 10\penalty0 (6):\penalty0 3228--3240, 2022.

\bibitem[Wei et~al.(2013)Wei, Liu, Kirchhoff, and Bilmes]{wei2013using}
Kai Wei, Yuzong Liu, Katrin Kirchhoff, and Jeff Bilmes.
\newblock Using document summarization techniques for speech data subset selection.
\newblock In \emph{Proceedings of the 2013 Conference of the North American Chapter of the Association for Computational Linguistics: Human Language Technologies}, pp.\  721--726, 2013.

\bibitem[Wei et~al.(2015)Wei, Iyer, and Bilmes]{wei2015submodularity}
Kai Wei, Rishabh Iyer, and Jeff Bilmes.
\newblock Submodularity in data subset selection and active learning.
\newblock In \emph{International conference on machine learning}, pp.\  1954--1963. PMLR, 2015.

\bibitem[Xie et~al.(2019)Xie, Zhang, Shen, Mi, and Qian]{xie2019decentralized}
Jiahao Xie, Chao Zhang, Zebang Shen, Chao Mi, and Hui Qian.
\newblock Decentralized gradient tracking for continuous dr-submodular maximization.
\newblock In \emph{The 22nd International Conference on Artificial Intelligence and Statistics}, pp.\  2897--2906. PMLR, 2019.

\bibitem[Xu et~al.(2023)Xu, Zhou, and Tzoumas]{xu2023online}
Zirui Xu, Hongyu Zhou, and Vasileios Tzoumas.
\newblock Online submodular coordination with bounded tracking regret: Theory, algorithm, and applications to multi-robot coordination.
\newblock \emph{IEEE Robotics and Automation Letters}, 8\penalty0 (4):\penalty0 2261--2268, 2023.

\bibitem[Yuan et~al.(2016)Yuan, Ling, and Yin]{yuan2016convergence}
Kun Yuan, Qing Ling, and Wotao Yin.
\newblock On the convergence of decentralized gradient descent.
\newblock \emph{SIAM Journal on Optimization}, 26\penalty0 (3):\penalty0 1835--1854, 2016.

\bibitem[Zhang et~al.(2022)Zhang, Deng, Chen, Hu, and Yang]{zhang2022boosting}
Qixin Zhang, Zengde Deng, Zaiyi Chen, Haoyuan Hu, and Yu~Yang.
\newblock Stochastic continuous submodular maximization: Boosting via non-oblivious function.
\newblock In \emph{International Conference on Machine Learning}, pp.\  26116--26134. PMLR, 2022.

\bibitem[Zhang et~al.(2023)Zhang, Deng, Jian, Chen, Hu, and Yang]{zhang2023communication}
Qixin Zhang, Zengde Deng, Xiangru Jian, Zaiyi Chen, Haoyuan Hu, and Yu~Yang.
\newblock Communication-efficient decentralized online continuous dr-submodular maximization.
\newblock In \emph{Proceedings of the 32nd ACM International Conference on Information and Knowledge Management}, pp.\  3330--3339, 2023.

\bibitem[Zhang et~al.(2024)Zhang, Wan, Deng, Chen, Sun, Zhang, and Yang]{zhang2024boosting}
Qixin Zhang, Zongqi Wan, Zengde Deng, Zaiyi Chen, Xiaoming Sun, Jialin Zhang, and Yu~Yang.
\newblock Boosting gradient ascent for continuous dr-submodular maximization.
\newblock \emph{arXiv preprint arXiv:2401.08330}, 2024.

\bibitem[Zhao \& Lai(2024)Zhao and Lai]{zhao2024minimax}
Puning Zhao and Lifeng Lai.
\newblock Minimax optimal q learning with nearest neighbors.
\newblock \emph{IEEE Transactions on Information Theory}, 2024.

\bibitem[Zhao et~al.(2024{\natexlab{a}})Zhao, Lai, Shen, Li, Wu, and Liu]{zhao2024a}
Puning Zhao, Lifeng Lai, Li~Shen, Qingming Li, Jiafei Wu, and Zhe Liu.
\newblock A huber loss minimization approach to mean estimation under user-level differential privacy.
\newblock In \emph{The Thirty-eighth Annual Conference on Neural Information Processing Systems}, 2024{\natexlab{a}}.

\bibitem[Zhao et~al.(2024{\natexlab{b}})Zhao, Yu, and Wan]{zhao2024huber}
Puning Zhao, Fei Yu, and Zhiguo Wan.
\newblock A huber loss minimization approach to byzantine robust federated learning.
\newblock In \emph{Proceedings of the AAAI Conference on Artificial Intelligence}, pp.\  21806--21814, 2024{\natexlab{b}}.

\bibitem[Zhou \& Tokekar(2022)Zhou and Tokekar]{zhou2022risk}
Lifeng Zhou and Pratap Tokekar.
\newblock Risk-aware submodular optimization for multirobot coordination.
\newblock \emph{IEEE Transactions on Robotics}, 38\penalty0 (5):\penalty0 3064--3084, 2022.

\bibitem[Zhou et~al.(2018)Zhou, Tzoumas, Pappas, and Tokekar]{zhou2018resilient}
Lifeng Zhou, Vasileios Tzoumas, George~J Pappas, and Pratap Tokekar.
\newblock Resilient active target tracking with multiple robots.
\newblock \emph{IEEE Robotics and Automation Letters}, 4\penalty0 (1):\penalty0 129--136, 2018.

\bibitem[Zhu et~al.(2021)Zhu, Wu, Zhang, Zheng, and Li]{zhu2021projection}
Junlong Zhu, Qingtao Wu, Mingchuan Zhang, Ruijuan Zheng, and Keqin Li.
\newblock Projection-free decentralized online learning for submodular maximization over time-varying networks.
\newblock \emph{Journal of Machine Learning Research}, 22\penalty0 (51):\penalty0 1--42, 2021.

\end{thebibliography}
\newpage
\appendix
\appendix
\section{Additional Discussions} 
\subsection{More Details on Related Works}\label{Appendix:related_work}
\textbf{Single-Agent Submodular Maximization,} Maximization of submodular functions has recently found numerous applications in machine learning, operations research and economics, including data summarization~\citep{lin2010multi,lin2011class,wei2013using,wei2015submodularity}, dictionary learning~\citep{das2018approximate}, product recommendation~\citep{kempe2003maximizing,el2009turning,mirzasoleiman2016fast,wang2022empirical}, federated learning~\citep{balakrishnan2022diverse,FedSub,zhao2024a,zhao2024huber} and in-context learning~\citep{kumari2024end,fancombatting,jin2024learning,jin2024visual,jin2024apeer}. When considering the simple cardinality constraint, the classical works\citep{fisher1978analysis,nemhauser1978analysis} show that the
greedy algorithm can achieve a tight $(1-1/e)$-approximation guarantee for monotone submodular maximization problem. As for the general matroid constraint, a continuous greedy algorithm with $(1-1/e)$-approximation guarantee is presented in \citep{calinescu2011maximizing,chekuri2014submodular}. Especially when the submodular function has curvature $c\in[0,1]$, \citet{vondrak2010submodularity} pointed out that the continuous greedy algorithm also can achieve an improved ($\frac{1-e^{-c}}{c}$)-approximation guarantee.

\textbf{Multi-Agent Submodular Maximization.} Multi-agent submodular maximization(MA-SM) problem involves coordinating multiple agents to collaboratively maximize a submodular utility function. A commonly used solution for MA-SM problem heavily depends on the distributed implementation of the classic \emph{sequential greedy} method~\citep{fisher1978analysis}, which can ensure a $(\frac{1}{1+c})$-approximation~\citep{conforti1984submodular} when the submodular function has curvature $c\in[0,1]$. However, this distributed algorithm requires each agent to have full access to the decisions of all previous agents, thereby forming a \emph{complete} directed communication graph. Subsequently, several studies~\citep{grimsman2018impact,gharesifard2017distributed,marden2016role} have investigated how the topology of the communication network affects the performance of the distributed greedy method. Particularly, \citet{grimsman2018impact} pointed out that the worst-case performance of the distributed greedy algorithm will deteriorate in proportion to the size of the largest independent group of agents in the communication graph. In order to overcome these challenges, various studies \citep{rezazadeh2023distributed,robey2021optimal,du2022jacobi} utilized the multi-linear extension to design algorithms for solving MA-SM problem. Specifically, \citet{du2022jacobi} devised
a multi-agent variant of gradient ascent for MA-SM problem, which can attain $\frac{1}{2}OPT-\epsilon$ over connected communication graph at the cost of $O(\frac{n}{\epsilon^{2}})$ value queries to the submodular function where $OPT$ is the optimal value. After that, to improve the  $\frac{1}{2}$-approximation, \citet{robey2021optimal} developed a multi-agent variant of continuous greedy method\citep{calinescu2011maximizing,chekuri2014submodular} with tight $(1-1/e)$-approximation. However, this multi-agent continuous greedy\citep{robey2021optimal} requires the exact knowledge of the multi-linear extension function, which will lead to the exponential query complexity. To tackle this drawback, \citet{rezazadeh2023distributed} also proposed a stochastic variant of continuous greedy method\citep{calinescu2011maximizing,chekuri2014submodular}, which considers the curvature $c$ of submodular objectives and can enjoy $(\frac{1-e^{-c}}{c})OPT-\epsilon$ at the expense of $O(\frac{n\log(\frac{1}{\epsilon})}{\epsilon^{3}})$ value queries to the submodular function. In our Algorithm~\ref{alg:BDOMA} and Algorithm~\ref{alg:BDOEA}, if any incoming objective function $f_{t}$ corresponds to some submodular function $f$ and we return the set $\cup_{i\in\N}\{a_{t,i}\},\forall t\in[T]$  with probability $\frac{1}{T}$, we also can obtain two methods with the tight ($\frac{1-e^{-c}}{c}$)-approximation guarantee for MA-SM problem. Notably, in sharp contrast with $O(\frac{n\log(\frac{1}{\epsilon})}{\epsilon^{3}})$ value queries in \citep{rezazadeh2023distributed}, the aforementioned variants of Algorithm~\ref{alg:BDOMA} and Algorithm~\ref{alg:BDOEA} only require inquiring the submodular objective $O(\frac{n}{\epsilon^{2}})$ and $O(\frac{n\log(\frac{1}{\epsilon})}{\epsilon^{2}})$ times to attain $(\frac{1-e^{-c}}{c})OPT-\epsilon$, respectively. We present a detailed comparison of our proposed \textbf{MA-OSMA} and \textbf{MA-OSEA} with previous studies for MA-SM problem in Table~\ref{tab:result1}.

Because the multi-linear extension of a submodular set function belongs to the general continuous DR-submodular functions, we also review related works about decentralized continuous DR-submodular maximization. As for more detailed content on continuous DR-submodular maximization, please refer to \citet{NEURIPS2023_c041d58d,DBLP:conf/iclr/PedramfarNQA24}.

\textbf{Decentralized Continuous DR-submodular Maximization.} A differentiable function $F:[0,1]^{n}\rightarrow\R_{+}$ is {\it DR-submodular} if $\nabla F(\x)\le\nabla F(\y)$ for any $\x\ge\y$. \cite{mokhtari2018decentralized} is the first to study the decentralized continuous DR-submodular maximization problem, which showed an algorithm titled DeCG achieving $(1-1/e)OPT-\epsilon$ 
 for \emph{monotone} and \textit{deterministic}  continuous DR-submodular maximization after $O(\frac{1}{\epsilon^{2}})$ iterations  and $O(\frac{1}{\epsilon^{2}})$ communications.  When only an unbiased estimate of gradient is available, \cite{mokhtari2018decentralized}  also presented a decentralized method named DeSCG for monotone cases, which attains $(1-1/e)OPT-\epsilon$ with $O(\frac{1}{\epsilon^{3}})$ communications after $O(\frac{1}{\epsilon^{3}})$ iterations. Next, \cite{xie2019decentralized} presented a deterministic DeGTFW and a stochastic DeSGTFW by applying the gradient tracking techniques\citep{pu2021distributed2}  to DeCG and  DeSCG respectively,  both of which achieve the $(1-1/e)$-approximation with a faster $O(\frac{1}{\epsilon})$ convergence rate and a lower $O(\frac{1}{\epsilon})$ communications. After that, \cite{gao2023convergence} utilized the  variance reduction technique\citep{fang2018spider} to propose two sample-efficient algorithms, namely, DeSVRFW-gp and DeSVRFW-gt, both of which can reduce the sample complexity of DeSGTFW from $O(\frac{1}{\epsilon^{3}})$ to  $O(\frac{1}{\epsilon^{2}})$. Lastly, several studies\citep{zhu2021projection,zhang2023communication,liao2023improved} extended these aforementioned offline decentralized frameworks to time-varying DR-submodular objectives. It's important to note that the decentralized submodular maximization problem  significantly differs from the multi-agent scenario emphasized in this paper. In the context of decentralized optimization, we typically assume that each local node maintains its own local utility function and the collective goal is to optimize the sum of these local functions. In contrast, within the scope of this paper, we assume that all agents share a common submodular function but each agent is restricted to accessing a unique set of actions.
 
\begin{table}[h]
\renewcommand\arraystretch{1.35}
	\centering	
	\resizebox{0.9\textwidth}{!}{
		\setlength{\tabcolsep}{1.0mm}{
			\begin{tabular}{ccccc}
				\toprule[1.0pt]
				Method&Utility&Graph&Query Complexity&Reference \\
				\hline
				Greedy Method &$(\frac{1}{1+c})OPT$ &\textbf{complete} &$n$&\citet{conforti1984submodular}\\
				Greedy Method &$(\frac{1}{1+\alpha(G)})OPT$ &connected &$n$&\citet{grimsman2018impact}\\
    	 Projected Gradient Method&$\frac{1}{2}OPT-\epsilon$ &connected &$O\big(\frac{n}{\epsilon^{2}}\big)$&\citet{du2022jacobi}\\
       CDCG&$(1-\frac{1}{e})OPT-\epsilon$ &connected &$O\big(\frac{n2^{n}}{\epsilon}\big)$&\citet{robey2021optimal}\\
        Distributed-CG&$(\frac{1-e^{-c}}{c})OPT-\epsilon$ &connected &$O\Big(\frac{n\log(\frac{1}{\epsilon})}{\epsilon^{3}}\Big)$&\citet{rezazadeh2023distributed}\\
				\rowcolor{cyan!18}
		Algorithm~\ref{alg:BDOMA}&$(\frac{1-e^{-c}}{c})OPT-\epsilon$ &connected &$O\big(\frac{n}{\epsilon^{2}}\big)$ &Theorem~\ref{thm:final_one} \& Remark~\ref{Remark:final}\\
				\rowcolor{cyan!18}
			Algorithm~\ref{alg:BDOEA}&$(\frac{1-e^{-c}}{c})OPT-\epsilon$ &connected &$O\Big(\frac{n\log(\frac{1}{\epsilon})}{\epsilon^{2}}\Big)$ &Theorem~\ref{thm:final_one1} \& Remark~\ref{Remark:final1}\\
				\midrule[1.0pt]
			\end{tabular}
	}}\caption{Comparison with prior works for Multi-Agent Submodular Maximization Problem. $OPT$ is the optimal value, $c$ is the curvature of submodular objective, $n:=|\V|$ the total number of all available actions, $\alpha(G)$ is the number of nodes in the largest independent set in communication graph $G$ where $\alpha(G)\ge1$. Note that the column of Query Complexity only considers the setting that each agent selects one action.}\label{tab:result1}
\end{table}
\subsection{More Discussions on Assumption~\ref{ass:4}}\label{sec:discussion_on_A5}
In this subsection, we show the Assumption~\ref{ass:4} is well-established in both $l_{1}$ norm and $l_{2}$ norm.

At first, we review that, in \cref{sec:construct_gradient_surrogate_function}, we define that:
\begin{equation}\label{equ:appendix}\widetilde{\nabla}F^{s}(\x)=\Big(\frac{1-e^{-c}}{c}\Big)\Big(f(\mathcal{R}\cup\{1\})-f(\mathcal{R}\setminus\{1\}),\dots,f(\mathcal{R}\cup\{n\})-f(\mathcal{R}\setminus\{n\})\Big),
\end{equation}where $z$ is sampled from the probability distribution of the random variable $\mathcal{Z}$ where $P(\mathcal{Z}\le b)=(\frac{c}{1-e^{-c}})\int_{0}^{b}e^{c(z-1)}\mathrm{d}z=\frac{e^{c(b-1)}-e^{-c}}{1-e^{-c}}$ for any $b\in[0,1]$  and $\mathcal{R}\sim z*\x$. 

From Eq.\eqref{equ:appendix} and the submodularity of set function $f$, we can know that  $\E(\widetilde{\nabla}F^{s}(\x)|\x)=\nabla F^{s}$ and $\E(\|\widetilde{\nabla}F^{s}(\x)\|_{\infty})\le\frac{1-e^{-c}}{c}m_{f}$ where $m_{f}=\max_{a\in\V}f(\{a\})$ denotes maximum singleton value.

Moreover, \citet{hassani2017gradient} recently have shown that the multi-linear extension $F$ of a submodular function $f$ is $m_{f}$-smooth under $l_{1}$ norm, that is, $\|\nabla F(\x)-\nabla F(\y)\|_{\infty}\le m_{f}\|\x-\y\|_{1}$. From the definition of surrogate function, we also know that $\nabla F^{s}(\x)=\int_{0}^{1}e^{c(z-1)}\nabla F(z*\boldsymbol{x})\mathrm{d}z$ such that 
\begin{equation*}
    \begin{aligned}
        \|\nabla F^{s}(\x)-\nabla F^{s}(\y)\|_{\infty}&\le\int_{0}^{1}e^{c(z-1)}\|\nabla F(z*\x)-\nabla F(z*\y)\|_{\infty}\mathrm{d}z\\
        &\le\int_{0}^{1}ze^{c(z-1)}m_{f}\|\x-\y\|_{1}\mathrm{d}z\\
        &=\frac{e^{-c}+c-1}{c^{2}}m_{f}\|\x-\y\|_{1}.
    \end{aligned}
\end{equation*}
Therefore, we can conclude that the surrogate function $F^{s}$ of the multi-linear extension of a submodular function $f$ with curvature $c\in[0,1]$ is $\Big(\frac{e^{-c}+c-1}{c^{2}}m_{f}\Big)$-smooth  under $l_{1}$ norm. Due to $\|\cdot\|_{2}\le\sqrt{n}\|\cdot\|_{\infty}$ and $\|\cdot\|_{1}\le\sqrt{n}\|\cdot\|_{2}$, we also can show that $\E(\|\widetilde{\nabla}F^{s}(\x)\|_{2})\le\sqrt{n}\frac{1-e^{-c}}{c}m_{f}$ and $  \|\nabla F^{s}(\x)-\nabla F^{s}(\y)\|_{2}\le\Big(n\frac{e^{-c}+c-1}{c^{2}}m_{f}\Big)\|\x-\y\|_{2}$.

With the previous results, in Assumption~\ref{ass:4}, we can set $G=\Big(\frac{1-e^{-c}}{c}\Big)\max_{a\in\V,t\in[T]}\Big(f_{t}(\{a\})\Big)$ and $L=\Big(\frac{e^{-c}+c-1}{c^{2}}\Big)\max_{a\in\V,t\in[T]}\Big(f_{t}(\{a\})\Big)$ under $l_{1}$ norm. As for $l_{2}$ norm, we consider $G=\sqrt{n}\Big(\frac{1-e^{-c}}{c}\Big)\max_{a\in\V,t\in[T]}\Big(f_{t}(\{a\})\Big)$ and $L=n\Big(\frac{e^{-c}+c-1}{c^{2}}\Big)\max_{a\in\V,t\in[T]}\Big(f_{t}(\{a\})\Big)$.
\subsection{More Discussions on Experiments}
In this subsection, we highlight some additional details about the experiments. 

At first, all experiments are performed in Python 3.6.5 using CVX optimization tool~\citep{grant2014cvx} on a MacBook Pro with Apple M1 Pro and 16GB RAM. Then, when considering the random graph, we set the weight matrix $\W$ as follow: if the edge $(i,j)$ is an edge of the graph, let $w_{ij}=1/(1+\max(d_{i},d_{j}))$ where $d_{i}$ and $d_{j}$ are the degree of agent $i$ and $j$, respectively. If $(i,j)$ is not an edge of the graph and $i\neq j$, then $w_{ij}=0$. Finally, we set $w_{ii}=1-\sum_{j\in\mathcal{N}_{i}}w_{ij}$. 

As for the complete graph, we set $w_{ij}=\frac{1}{N}$ where $N$ is the number of agents. In \textbf{MA-OSMA} and \textbf{MA-OSEA} algorithms, we set $c=1$ and $\eta_{t}=\frac{1}{\sqrt{T}}$.

\section{Proof of Theorem~\ref{thm:1}} \label{append:proof1}
	We begin by reviewing some basic properties about the multi-linear extension of a submodular function.
	\begin{lemma}[\citet{calinescu2011maximizing,bian2020continuous}]\label{lemma:th1:1}When $f:\V\rightarrow\R_{+}$ is a monotone submodular function, its multi-linear extension $F(\x)=\sum_{\mathcal{A}\subseteq\V}\Big(f(\mathcal{A})\prod_{a\in\mathcal{A}}x_{a}\prod_{a\notin\mathcal{A}}(1-x_{a})\Big)$ has the following properties:
	\begin{enumerate}
		\item  $F$ is monotone, that is, $\frac{\partial F(\x)}{\partial x_{i}}\ge 0$ for any $i\in[n]$ and $\x\in[0,1]^{n}$;
		\item  $F$ is concave along any non-negative direction $\textbf{d}\in\R_{+}^{n}$;
		\item   $\frac{\partial F(\x)}{\partial x_{i}}=\E_{\mathcal{R}\sim\x}\Big(f(\mathcal{R}\cup\{i\})-f(\mathcal{R}\setminus\{i\})\Big)$,
		\item  $\nabla F(\x)\ge\nabla F(\y)$ for any $\x\le\y$ and $\x,\y\in[0,1]^{n}$.
	\end{enumerate}
	\end{lemma}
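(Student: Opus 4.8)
The plan is to exploit the defining feature of the multi-linear extension: since every monomial in $F(\x)=\sum_{\mathcal{A}\subseteq\V} f(\mathcal{A})\prod_{a\in\mathcal{A}}x_{a}\prod_{a\notin\mathcal{A}}(1-x_{a})$ contains each variable $x_{i}$ to at most the first power, $F$ is \emph{affine} in each coordinate when the others are frozen. I would first establish the partial-derivative formula (Property 3), from which the other three properties follow cleanly. Grouping the $2^{n}$ terms according to whether $i\in\mathcal{A}$, I would write $F(\x)=x_{i}P_{i}(\x)+(1-x_{i})Q_{i}(\x)$, where $P_{i},Q_{i}$ do not involve $x_{i}$, so that $\fracpartial{F}{x_{i}}=P_{i}(\x)-Q_{i}(\x)$. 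Re-indexing the two sums over subsets $\mathcal{B}\subseteq\V\setminus\{i\}$ (via $\mathcal{A}=\mathcal{B}\cup\{i\}$ in the first and $\mathcal{A}=\mathcal{B}$ in the second) identifies $P_{i}(\x)=\E_{\mathcal{R}_{-i}}(f(\mathcal{R}_{-i}\cup\{i\}))$ and $Q_{i}(\x)=\E_{\mathcal{R}_{-i}}(f(\mathcal{R}_{-i}))$, where $\mathcal{R}_{-i}$ is the random subset of $\V\setminus\{i\}$ drawn from the remaining coordinates of $\x$. Since $f(\mathcal{R}\cup\{i\})-f(\mathcal{R}\setminus\{i\})$ is unaffected by whether $i\in\mathcal{R}$, this matches $\E_{\mathcal{R}\sim\x}(f(\mathcal{R}\cup\{i\})-f(\mathcal{R}\setminus\{i\}))$, giving Property 3.

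Property 1 is then immediate: monotonicity of $f$ forces $f(\mathcal{R}\cup\{i\})\ge f(\mathcal{R}\setminus\{i\})$ pointwise, so the expectation in Property 3 is nonnegative. For Properties 2 and 4 I would compute the Hessian. Multilinearity kills the diagonal, $\fracpartial{^2 F}{x_{i}^{2}}=0$. For $j\neq i$, differentiating the Property 3 formula once more — equivalently, applying the same derivative identity to the set function $\mathcal{S}\mapsto f(\mathcal{S}\cup\{i\})-f(\mathcal{S})$ over $\V\setminus\{i\}$ — yields $\fracpartial{^2 F}{x_{i}\partial x_{j}}=\E_{\mathcal{R}}\big[(f(\mathcal{R}\cup\{i,j\})-f(\mathcal{R}\cup\{j\}))-(f(\mathcal{R}\cup\{i\})-f(\mathcal{R}))\big]$, with $\mathcal{R}$ ranging over subsets of $\V\setminus\{i,j\}$. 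Applying submodularity with $\mathcal{R}\subseteq\mathcal{R}\cup\{j\}$ and element $i$ shows each bracketed term is nonpositive, so every off-diagonal second partial satisfies $\fracpartial{^2 F}{x_{i}\partial x_{j}}\le 0$.

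With the Hessian signs in hand, Property 2 follows by setting $h(t)=F(\x+t\mathbf{d})$ and observing that $h''(t)=\sum_{i,j}d_{i}d_{j}\,\fracpartial{^2 F}{x_{i}\partial x_{j}}\le 0$, since the diagonal vanishes and each off-diagonal term carries a factor $d_{i}d_{j}\ge 0$. Property 4 follows by integrating along the segment from $\x$ to $\y$: for $\x\le\y$ we have $\fracpartial{F}{x_{i}}(\y)-\fracpartial{F}{x_{i}}(\x)=\int_{0}^{1}\sum_{j}(y_{j}-x_{j})\,\fracpartial{^2 F}{x_{i}\partial x_{j}}(\x+t(\y-\x))\,\mathrm{d}t\le 0$, because $y_{j}-x_{j}\ge 0$ and the second partials are nonpositive, hence $\nabla F(\x)\ge\nabla F(\y)$.

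The main obstacle here is bookkeeping rather than depth: the delicate steps are the subset-sum re-indexings that convert the first and second derivatives into the conditional-expectation form, and arranging the second-derivative identity precisely so that the submodular inequality $f(S\cup\{e\})-f(S)\ge f(T\cup\{e\})-f(T)$ applies verbatim. Once Property 3 and the Hessian computation are secured, the remaining deductions are routine.
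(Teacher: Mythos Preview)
Your proof is correct and follows the standard route (essentially the argument in \citet{calinescu2011maximizing}): establish the partial-derivative identity via multilinearity, read off monotonicity from Property~3, then show all mixed second partials are nonpositive by submodularity to obtain both concavity along nonnegative directions and the antitone-gradient property. The paper itself does not supply a proof of this lemma; it is stated with a citation to \citet{calinescu2011maximizing,bian2020continuous} and used as a black box, so there is nothing further to compare against.
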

	With this lemma, we can show the following lemma.
	\begin{lemma}[\citet{calinescu2011maximizing,bian2020continuous}]\label{lemma:thm1:2}when  $F$ is the multi-linear extension of a monotone submodular function $f$, we can conclude that
		\begin{equation*}
	  \langle\y-\x,\nabla F(\x)\rangle\ge F(\x\lor\y)+F(\x\land\y)-2F(\x),
		\end{equation*} where $\x\land\y:=\min(\x,\y)$ and $\x\lor\y:=\max(\x,\y)$ are component-wise minimum and component-wise maximum, respectively. 
	\end{lemma}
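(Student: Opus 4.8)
The plan is to split $\y-\x$ into its ``increasing'' and ``decreasing'' parts relative to $\x$ and to control each part using the fact that the gradient of the multi-linear extension of a submodular function is coordinate-wise monotone \emph{decreasing} (property~4 of Lemma~\ref{lemma:th1:1}, i.e.\ DR-submodularity). Concretely, I would write $\y-\x=\u-\d$ with $\u:=\x\lor\y-\x\ge\mathbf{0}$ and $\d:=\x-\x\land\y\ge\mathbf{0}$ (coordinate-wise, $u_{i}=\max(y_{i}-x_{i},0)$ and $d_{i}=\max(x_{i}-y_{i},0)$, so $\u$ and $\d$ have disjoint supports and $\u-\d=\y-\x$), and then prove the two one-sided estimates
\[
\langle\u,\nabla F(\x)\rangle\ \ge\ F(\x\lor\y)-F(\x),\qquad \langle\d,\nabla F(\x)\rangle\ \le\ F(\x)-F(\x\land\y).
\]

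For the first estimate I would use the fundamental theorem of calculus along the segment from $\x$ to $\x+\u=\x\lor\y$ (which stays in $[0,1]^{n}$): $F(\x\lor\y)-F(\x)=\int_{0}^{1}\langle\u,\nabla F(\x+t\u)\rangle\,\mathrm{d}t$. Since $\x+t\u\ge\x$ for $t\in[0,1]$, the DR property gives $\nabla F(\x+t\u)\le\nabla F(\x)$, and pairing this with $\u\ge\mathbf{0}$ bounds the integrand above by $\langle\u,\nabla F(\x)\rangle$; integrating gives the claim. (This step is also just the concavity of $F$ along the non-negative direction $\u$, property~2 of Lemma~\ref{lemma:th1:1}.) For the second estimate I would instead integrate along the segment from $\x\land\y$ to $\x\land\y+\d=\x$: $F(\x)-F(\x\land\y)=\int_{0}^{1}\langle\d,\nabla F(\x\land\y+t\d)\rangle\,\mathrm{d}t$. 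Now $\x\land\y+t\d\le\x$ for $t\in[0,1]$, so the DR property gives $\nabla F(\x\land\y+t\d)\ge\nabla F(\x)$, and pairing with $\d\ge\mathbf{0}$ bounds the integrand below by $\langle\d,\nabla F(\x)\rangle$, yielding the claim.

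Subtracting the two estimates then finishes the proof:
\[
\langle\y-\x,\nabla F(\x)\rangle=\langle\u,\nabla F(\x)\rangle-\langle\d,\nabla F(\x)\rangle\ \ge\ \big(F(\x\lor\y)-F(\x)\big)-\big(F(\x)-F(\x\land\y)\big)=F(\x\lor\y)+F(\x\land\y)-2F(\x).
\]
I do not expect a serious obstacle; the one point that needs care is the bookkeeping of signs — the ``increasing'' part $\u$ must be bounded from below and the ``decreasing'' part $\d$ from above, which is precisely why the monotone-gradient (DR) property is applied with the inequality pointing in opposite directions in the two cases, and why concavity by itself (which the paper states only for non-negative directions) handles the $\u$-term but not directly the $\d$-term.
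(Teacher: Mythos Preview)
Your proof is correct and follows essentially the same route as the paper: the identical decomposition $\y-\x=(\x\lor\y-\x)+(\x\land\y-\x)$ and separate bounds on each piece, summed via $\x+\y=\x\lor\y+\x\land\y$. The only cosmetic difference is that the paper invokes concavity along non-negative directions (property~2 of Lemma~\ref{lemma:th1:1}) for \emph{both} pieces, whereas you invoke the DR property (property~4) via an integral; your closing remark that concavity ``does not directly handle the $\d$-term'' is slightly off, since applying the tangent-line inequality for the concave function $h(t)=F(\x\land\y+t\d)$ at $t=1$ (rather than $t=0$) gives exactly $\langle\d,\nabla F(\x)\rangle\le F(\x)-F(\x\land\y)$, which is how the paper argues.
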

	\begin{proof}
	From the second property about the concavity in Lemma~\ref{lemma:th1:1} and $ \y\lor\x\ge\x$ and $\x\land \y\le\x$, we first have that
	\begin{equation}
	\begin{aligned}
	&\langle\y\lor \x -\x, \nabla F(\x)\rangle \ge F(\y\lor \x)-F(\x), \\
	&\langle \x\land \y -\x, \nabla F(\x)\rangle \ge F(\x\land\y)-F(\x).
	\end{aligned}
\end{equation}
Due to $\x+\y=\x\lor\y+\x\land\y$, we therefore that that $\langle\y-\x,\nabla F(\x)\rangle=\langle\y\lor \x -\x, \nabla F(\x)\rangle +\langle \x\land \y -\x, \nabla F(\x)\rangle\ge F(\x\lor\y)+F(\x\land\y)-2F(\x)$.
\end{proof}
From the third property in Lemma~\ref{lemma:th1:1} and the definition of curvature $c\in[0,1]$, that is, $c:=1-\min_{S\subseteq\V, e\notin S}\frac{f(S\cup\{e\})-f(S)}{f(\{e\})}$, we can conclude that, when $f$ is monotone submodular, $f(S\cup\{e\})-f(S)\ge(1-c)f(\{e\})$ for any $S\subseteq\V, e\notin S$ such that, for any $\x\in[0,1]^{n}$,
\begin{equation*}
\frac{\partial F(\x)}{\partial x_{i}}=\E_{\mathcal{R}\sim\x}\Big(f(\mathcal{R}\cup\{i\})-f(\mathcal{R}\setminus\{i\})\Big)\ge(1-c)f(\{i\})=(1-c)\frac{\partial F(\textbf{0}_{n})}{\partial x_{i}},
\end{equation*} where $\textbf{0}_{n}$ is $n$-dimensional zero vector and we general suppose $f(\emptyset)=0$.

As a result, we can show that 
\begin{lemma}\label{lemma:thm1:3}
when  $F$ is the multi-linear extension of a monotone submodular function $f$ with curvature $c$, we have 
\begin{equation*}
F(\x\lor\y)\ge(1-c)\Big(F(\x)-F(\x\land\y)\Big)+F(\y).
\end{equation*}
\end{lemma}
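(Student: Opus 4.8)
The plan is to integrate the pointwise lower bound on the partial derivatives of $F$ — namely $\frac{\partial F(\x)}{\partial x_i} \ge (1-c)\frac{\partial F(\textbf{0}_n)}{\partial x_i}$, which was just derived from the curvature definition and the third property of Lemma~\ref{lemma:th1:1} — along a suitable line segment. Specifically, I would move from the point $\x \land \y$ (equivalently, raise the coordinates in $\{i : y_i > x_i\}$) up to the point $\x \lor \y$ along the non-negative direction $\dbf := (\x \lor \y) - (\x \land \y) = \y - \x\land\y \ge \textbf{0}_n$. Note that this same direction also moves $\textbf{0}_n$ to $\dbf$ and moves $\x\land\y$ to $\x\lor\y$; I want to compare the increment of $F$ along $\dbf$ starting from $\x\land\y$ with the increment starting from $\textbf{0}_n$.

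The key steps, in order, are: (1) Write $F(\x\lor\y) - F(\x\land\y) = \int_0^1 \langle \dbf, \nabla F(\x\land\y + t\dbf)\rangle\, \mathrm{d}t$ and similarly $F(\dbf) - F(\textbf{0}_n) = \int_0^1 \langle \dbf, \nabla F(t\dbf)\rangle\, \mathrm{d}t$. (2) Since $\x\land\y + t\dbf \ge t\dbf$ coordinatewise and $F$ is DR-submodular (property 4 of Lemma~\ref{lemma:th1:1}, $\nabla F$ is antitone), we get $\nabla F(\x\land\y+t\dbf) \le \nabla F(t\dbf)$, hence, because $\dbf \ge \textbf{0}_n$, $F(\x\lor\y) - F(\x\land\y) \le F(\dbf) - F(\textbf{0}_n) = F(\dbf)$. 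Wait — this inequality goes the wrong way for what I need, so instead I would use the curvature bound more directly, not DR-submodularity: apply $\frac{\partial F(\z)}{\partial z_i} \ge (1-c)\frac{\partial F(\textbf{0}_n)}{\partial z_i}$ at every point $\z = \x\land\y + t\dbf$ along the segment, giving $F(\x\lor\y) - F(\x\land\y) = \int_0^1 \langle\dbf, \nabla F(\x\land\y+t\dbf)\rangle\,\mathrm{d}t \ge (1-c)\langle \dbf, \nabla F(\textbf{0}_n)\rangle$. (3) Separately, by concavity of $F$ along the non-negative direction $-\dbf$ from $\x\lor\y$ — or equivalently the first-order concavity inequality $F(\x\land\y) \le F(\x\lor\y) - \langle \dbf, \nabla F(\x\lor\y)\rangle$ combined with monotonicity — I still only control things in terms of gradients at the endpoints. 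The cleanest route: apply the curvature/monotonicity bound along the segment from $\textbf{0}_n$ up to $\dbf$ in the opposite grouping, or simply observe $\langle\dbf,\nabla F(\textbf0_n)\rangle \ge F(\dbf) \ge F(\x) - F(\x\land\y)$, where the first inequality is concavity of $F$ along $\dbf$ (property 2) and the second is monotonicity of $F$ together with $\dbf = \y - \x\land\y \ge \x - \x\land\y$, i.e. $F(\dbf) = F(\y - \x\land\y) \ge F(\x - \x\land\y)$ and then $F(\x - \x\land\y) \ge F(\x) - F(\x\land\y)$ by concavity along the direction $\x\land\y$ starting at $\x - \x\land\y$. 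Chaining (2) and (3): $F(\x\lor\y) - F(\x\land\y) \ge (1-c)\big(F(\x) - F(\x\land\y)\big)$, which rearranges to the claimed bound $F(\x\lor\y) \ge (1-c)\big(F(\x) - F(\x\land\y)\big) + F(\x\land\y)$; finally add $F(\y) - F(\x\land\y) \ge 0$ (monotonicity, since $\y \ge \x\land\y$) to upgrade the trailing $F(\x\land\y)$ to $F(\y)$.

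The main obstacle I anticipate is bookkeeping the direction arguments carefully: several inequalities rely on "concavity of $F$ along a non-negative direction" (property 2 of Lemma~\ref{lemma:th1:1}) and on monotonicity, but they must be applied at the right base points and with directions that are genuinely coordinatewise non-negative. In particular the step $F(\dbf) \ge F(\x) - F(\x\land\y)$ needs $\dbf = \y - \x\land\y \ge \x - \x\land\y \ge \textbf{0}_n$ coordinatewise (true since $\y \ge \x\land\y$ and $\x \ge \x\land\y$) and then a concavity-from-a-shifted-origin argument; getting the chain of base points consistent is the delicate part. Everything else — the integral representation, the curvature inequality applied pointwise, and the final algebraic rearrangement — is routine given the lemmas already established in the excerpt.
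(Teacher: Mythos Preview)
Your final step contains a genuine gap that breaks the argument. After chaining (2) and (3) you correctly obtain
\[
F(\x\lor\y)\ \ge\ (1-c)\bigl(F(\x)-F(\x\land\y)\bigr)+F(\x\land\y),
\]
but then you claim to ``upgrade the trailing $F(\x\land\y)$ to $F(\y)$'' using $F(\y)-F(\x\land\y)\ge 0$. This goes the wrong way: since $F(\y)\ge F(\x\land\y)$, replacing $F(\x\land\y)$ by $F(\y)$ on the right-hand side makes the inequality \emph{stronger}, not weaker, so monotonicity cannot justify it. Your route, integrating from $\x\land\y$ up to $\x\lor\y$, only controls $F(\x\lor\y)-F(\x\land\y)$; it inherently loses the gap $F(\y)-F(\x\land\y)$ that the lemma requires.

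The paper avoids this by integrating along the segment from $\y$ to $\x\lor\y$ (direction $\x\lor\y-\y=\x-\x\land\y$), which places $F(\y)$ on the left-hand side from the start:
\[
F(\x\lor\y)-F(\y)=\int_0^1\!\bigl\langle\nabla F\bigl(\y+z(\x\lor\y-\y)\bigr),\,\x-\x\land\y\bigr\rangle\,\mathrm{d}z\ \ge\ (1-c)\langle\nabla F(\textbf{0}_n),\x-\x\land\y\rangle,
\]
then uses $\nabla F(\textbf{0}_n)\ge\nabla F(\x\land\y)$ (property~4) and concavity along $\x-\x\land\y$ from base point $\x\land\y$ to conclude $\ge(1-c)\bigl(F(\x)-F(\x\land\y)\bigr)$. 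The choice of base point $\y$ rather than $\x\land\y$ is the essential difference. (As a minor side note, your identity $\dbf=(\x\lor\y)-(\x\land\y)=\y-\x\land\y$ is false in general; the correct identity $(\x\lor\y)-\y=\x-\x\land\y$ is what makes the paper's segment work.)
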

\begin{proof}
\begin{equation*}
	\begin{aligned}
		F(\x\lor\y)-F(\y)&=\int_{z=0}^{1}\Big\langle\nabla F\Big(\y+z(\x\lor\y-\y)\Big),\x\lor\y-\y\Big\rangle\mathrm{d}z\\
		&\ge(1-c)\langle\nabla F(\textbf{0}_{n}),\x\lor\y-\y\rangle\\
		&\ge(1-c)\langle\nabla F(\x\land\y),\x\lor\y-\y\rangle\\
		&=(1-c)\langle\nabla F(\x\land\y),\x-\x\land\y\rangle\\
		&\ge(1-c)\Big(F(\x)-F(\x\land\y)\Big),
	\end{aligned}
\end{equation*} where the first inequality follows from $\nabla F(\y+z(\x\lor\y-\y))\ge(1-c)\nabla F(\textbf{0}_{n})$; the second inequality comes from the fourth property in Lemma~\ref{lemma:th1:1}, i.e., $\nabla F(\textbf{0}_{n})\ge\nabla F(\x\land\y)$. The second equality is due to $\x+\y=\x\lor\y+\x\land\y$ and the final inequality from the concavity along non-negative direction.
\end{proof}
Merging Lemma~\ref{lemma:thm1:3} into Lemma~\ref{lemma:thm1:2}, we finally have that $  \langle\y-\x,\nabla F(\x)\rangle\ge F(\y)-(1+c)F(\x)$ such that if $\x$ is a stationary point for $F$ over the domain $\C\subseteq[0,1]^{n}$, we have that $ 0\ge \langle\y-\x,\nabla F(\x)\rangle\ge F(\y)-(1+c)F(\x)$ for any $\y\in\C$, so $F(\x)\ge\frac{1}{1+c}\max_{\y\in\C}F(\y)$.

\section{Proof of Theorem~\ref{thm:2}} \label{append:proof2}
In this section, we show the proof of Theorem~\ref{thm:2}.
Before going into the details, we firstly prove the following lemma.
\begin{lemma}\label{lemma:thm2:1}
	when  $F$ is the multi-linear extension of a monotone submodular function $f$ with curvature $c$, we have, for any $\y,\x\in[0,1]^{n}$,
	\begin{equation*}
		\langle\y,\nabla F(\x)\rangle\ge F(\y)-cF(\x).
	\end{equation*}
\end{lemma}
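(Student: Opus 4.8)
The plan is to prove $\langle \y, \nabla F(\x)\rangle \ge F(\y) - cF(\x)$ by combining the monotonicity and concavity properties of the multi-linear extension (Lemma~\ref{lemma:th1:1}) with the curvature-based lower bound on partial derivatives that was already established in the proof of Theorem~\ref{thm:1}, namely that $\frac{\partial F(\x)}{\partial x_i} \ge (1-c)\frac{\partial F(\mathbf{0}_n)}{\partial x_i} = (1-c)f(\{i\})$ for all $\x \in [0,1]^n$. First I would decompose the inner product coordinate-wise and split each term according to whether coordinate $i$ helps us more via the concavity argument or via the curvature bound. A cleaner route: write $\langle \y, \nabla F(\x)\rangle = \langle \y \lor \x - \x, \nabla F(\x)\rangle + \langle \y \land \x, \nabla F(\x)\rangle$, using the identity $\y = (\y\lor\x - \x) + (\y\land\x)$ which holds coordinate-wise.

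For the first piece, since $\y\lor\x - \x \ge \mathbf{0}_n$ is a non-negative direction, concavity of $F$ along non-negative directions (property 2 of Lemma~\ref{lemma:th1:1}) gives $\langle \y\lor\x - \x, \nabla F(\x)\rangle \ge F(\y\lor\x) - F(\x) \ge F(\y) - F(\x)$, where the last step uses monotonicity of $F$ (property 1) together with $\y\lor\x \ge \y$. For the second piece, I would use the curvature bound: since each coordinate of $\y\land\x$ is non-negative and $\frac{\partial F(\x)}{\partial x_i} \ge (1-c)f(\{i\})$, and then relate $\sum_i (y\land x)_i (1-c) f(\{i\})$ back to $F$. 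The natural path here is $\langle \y\land\x, \nabla F(\x)\rangle \ge (1-c)\langle \y\land\x, \nabla F(\mathbf{0}_n)\rangle \ge (1-c)\big(F(\y\land\x) - F(\mathbf{0}_n)\big) = (1-c)F(\y\land\x) \ge 0$, using concavity along the non-negative direction $\y\land\x - \mathbf{0}_n$ and $f(\emptyset)=0$. Then adding the two bounds gives $\langle\y,\nabla F(\x)\rangle \ge F(\y) - F(\x) \ge F(\y) - cF(\x)$ only when... wait, this overshoots; so I would instead need to be more careful and not discard the first term so aggressively.

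The sharper accounting is as follows. Keep $\langle \y \lor \x - \x, \nabla F(\x)\rangle \ge F(\y\lor\x) - F(\x)$ from concavity, and bound $\langle \y\land\x, \nabla F(\x)\rangle \ge (1-c)\langle \y\land\x, \nabla F(\mathbf{0}_n)\rangle \ge (1-c)\big(F(\y\land\x)-F(\mathbf{0}_n)\big) = (1-c)F(\y\land\x)$. Summing, $\langle\y,\nabla F(\x)\rangle \ge F(\y\lor\x) + (1-c)F(\y\land\x) - F(\x)$. Now invoke Lemma~\ref{lemma:thm1:3}, which states $F(\x\lor\y) \ge (1-c)\big(F(\x)-F(\x\land\y)\big) + F(\y)$, i.e. $F(\y\lor\x) + (1-c)F(\y\land\x) \ge F(\y) + (1-c)F(\x)$. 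Substituting yields $\langle\y,\nabla F(\x)\rangle \ge F(\y) + (1-c)F(\x) - F(\x) = F(\y) - cF(\x)$, exactly as claimed.

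The main obstacle is choosing the right split of $\y$ so that both the concavity-based inequality and the curvature-based inequality can be applied cleanly and then reassembled through Lemma~\ref{thm1:3} without slack loss; a naive coordinate-wise split or discarding a positive term too early gives a weaker bound. Once the decomposition $\y = (\y\lor\x - \x) + (\y\land\x)$ is in place, the remaining steps are routine applications of monotonicity, concavity along non-negative directions, the curvature estimate $\nabla F(\x) \ge (1-c)\nabla F(\mathbf{0}_n)$, and Lemma~\ref{lemma:thm1:3}. This lemma will then presumably feed into the proof of Theorem~\ref{thm:2} by integrating the weighted gradient $\int_0^1 e^{c(z-1)}\nabla F(z*\x)\,\mathrm{d}z$ against $\y - \x$ and using $\langle \y, \nabla F(z*\x)\rangle \ge F(\y) - cF(z*\x)$ together with the chain-rule identity $\frac{\mathrm{d}}{\mathrm{d}z}F(z*\x) = \langle \x, \nabla F(z*\x)\rangle$.
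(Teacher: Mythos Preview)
Your final argument is correct and essentially identical to the paper's proof: decompose $\y = (\y\lor\x - \x) + (\y\land\x)$, apply concavity along the non-negative direction $\y\lor\x - \x$ to get $F(\y\lor\x)-F(\x)$, apply the curvature bound $\nabla F(\x)\ge(1-c)\nabla F(\mathbf{0}_n)$ followed by concavity to get $(1-c)F(\y\land\x)$, and then finish with Lemma~\ref{lemma:thm1:3}. The exploratory detours aside, the ``sharper accounting'' paragraph matches the paper's argument step for step.
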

\begin{proof}
Due to $\x+\y=\x\lor\y+\x\land\y$, we have that 
\begin{equation*}
	\begin{aligned}
		\langle\y,\nabla F(\x)\rangle&=\langle\x\lor\y-\x,\nabla F(\x)\rangle+\langle\x\land\y,\nabla F(\x)\rangle\\
		&\ge F(\x\lor\y)-F(\x)+(1-c)\langle\x\land\y,\nabla F(\textbf{0}_{n})\rangle\\
		&\ge F(\x\lor\y)-F(\x)+(1-c)F(\x\land\y)\\
		&\ge F(\y)-cF(\x),
	\end{aligned}
\end{equation*} where the first inequality follows from $\nabla F(\x)\ge(1-c)\nabla F(\textbf{0}_{n})$ and the concavity along non-negative direction, i.e., $\langle\x\lor\y-\x,\nabla F(\x)\rangle\ge F(\x\lor\y)-F(\x)$; the final inequality comes from Lemma~\ref{lemma:thm1:3}.
\end{proof}
Next, we verify Theorem~\ref{thm:2}.
		\begin{proof}
			Firstly, we prove that
			\begin{equation}\label{equ:thm2:1}
				\begin{aligned}
					&\left\langle\x,\int_{0}^{1}e^{c(z-1)}\nabla F(z*\x)\mathrm{d}z\right\rangle\\&=	\int_{0}^{1}e^{c(z-1)}\langle\x,\nabla F(z*\x)\rangle\mathrm{d}z\\
					&=\int_{0}^{1}e^{c(z-1)}\mathrm{d}\Big(F(z*\x)\Big)\\
					&=e^{c(z-1)}*F(z*\x)|_{z=0}^{z=1}-\int_{0}^{1}F(z*\x)\mathrm{d}\Big(e^{c(z-1)}\Big)\\
					&=F(\x)-c\int_{0}^{1}e^{c(z-1)}F(z*\x)\mathrm{d}z.
					\end{aligned}
				\end{equation}
			Then, we show an upper bound for $\langle\y,\int_{0}^{1}e^{c(z-1)}\nabla F(z*\x)\mathrm{d}z\rangle$, 
			\begin{equation}\label{equ:th1_2}
				\left\langle\y,\int_{0}^{1}e^{c(z-1)}\nabla F(z*\x)\mathrm{d}z\right\rangle=\int_{0}^{1}e^{c(z-1)}\langle\y,\nabla F(z*\x)\rangle\mathrm{d}z\ge\int_{0}^{1}e^{c(z-1)}(F(\y)-cF(z*\x))\mathrm{d}z,
				\end{equation}  where the inequality follows from Lemma~\ref{lemma:thm2:1}.
	Combining Eq.\eqref{equ:thm2:1} with Eq.\eqref{equ:th1_2}, we have 
		\begin{equation*}
			\left\langle\y-\x, \int_{0}^{1}e^{c(z-1)}\nabla F(z*\x)\mathrm{d}z\right\rangle\ge \Big(\int_{0}^{1}e^{c(z-1)}\mathrm{d}z\Big)F(\y)-F(\x)=\Big(\int_{0}^{1}e^{c(z-1)}\mathrm{d}z\Big)F(\y)-F(\x).
			\end{equation*}
Note the $\int_{0}^{1}e^{c(z-1)}\mathrm{d}z=\frac{1-e^{-c}}{c}$.
	\end{proof}
	
 \section{Proof of Theorem~\ref{thm:final_one}}\label{appendix:1}
	In this section, we show the proof of Theorem~\ref{thm:final_one}.
	Before going into the details, we firstly review a standard lemma for the mirror projection.
	\begin{lemma}[\citet{chen1993convergence,jadbabaie2015online}]\label{lemma:1}
		Let $\phi:[0,1]^{n}\rightarrow\R$ be a $1$-strongly convex function with respect to the norm $\|\cdot \|$  and $D_{\phi}(\x,\y)$ represent the Bregman divergence with respect to $\phi$, respectively. Then,  any update of the form 
		\begin{equation*}
			\x^{+}=\min_{\y\in\C}\langle\textbf{b},\y\rangle+\mathcal{D}_{\phi}(\y,\x),
		\end{equation*} satisfies the following inequality
		\begin{equation*}
			\langle\x^{+}-\z,\textbf{b}\rangle\le \mathcal{D}_{\phi}(\z,\x)-\mathcal{D}_{\phi}(\z,\x^{+})-\mathcal{D}_{\phi}(\x^{+},\x),
		\end{equation*} for any $\z\in\C$ where $\C$ is a  convex domain  in $[0,1]^{n}$.
	\end{lemma}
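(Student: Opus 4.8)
The plan is to prove this as the classical three-point (prox) inequality for a Bregman-regularized minimization, by combining one exact algebraic identity with the first-order optimality condition of the update. I would set $h(\y):=\langle\b,\y\rangle+\D_{\phi}(\y,\x)$, so that $\x^{+}=\arg\min_{\y\in\C}h(\y)$. The map $h$ is differentiable (since $\phi$ is continuously differentiable) and convex (the linear term is convex and $\D_{\phi}(\cdot,\x)$ inherits convexity from the $1$-strong convexity of $\phi$), which is the structure I will exploit.

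First I would record the exact three-point identity
\[
\D_{\phi}(\z,\x)-\D_{\phi}(\z,\x^{+})-\D_{\phi}(\x^{+},\x)=\langle\nabla\phi(\x^{+})-\nabla\phi(\x),\,\z-\x^{+}\rangle,
\]
valid for arbitrary $\x,\x^{+},\z$. This follows by expanding all three divergences through the definition $\D_{\phi}(\u,\v)=\phi(\u)-\phi(\v)-\langle\nabla\phi(\v),\u-\v\rangle$: the six $\phi$-values cancel in pairs and the surviving inner-product terms regroup into the right-hand side. This step is purely mechanical and carries no obstacle.

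Second, I would invoke the first-order optimality condition for minimizing the convex differentiable $h$ over the convex set $\C$: at the minimizer $\x^{+}$ one has $\langle\nabla h(\x^{+}),\z-\x^{+}\rangle\ge0$ for every $\z\in\C$. Using $\nabla_{\y}\D_{\phi}(\y,\x)=\nabla\phi(\y)-\nabla\phi(\x)$, the gradient is $\nabla h(\y)=\b+\nabla\phi(\y)-\nabla\phi(\x)$, so substituting $\y=\x^{+}$ gives $\langle\b+\nabla\phi(\x^{+})-\nabla\phi(\x),\,\z-\x^{+}\rangle\ge0$, i.e. $\langle\b,\x^{+}-\z\rangle\le\langle\nabla\phi(\x^{+})-\nabla\phi(\x),\,\z-\x^{+}\rangle$. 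Chaining this with the identity from the first step yields exactly $\langle\b,\x^{+}-\z\rangle\le\D_{\phi}(\z,\x)-\D_{\phi}(\z,\x^{+})-\D_{\phi}(\x^{+},\x)$, as claimed.

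The only point requiring genuine care — and hence the main obstacle — is justifying the variational inequality $\langle\nabla h(\x^{+}),\z-\x^{+}\rangle\ge0$ when $\x^{+}$ lies on the boundary of $\C$ and $\nabla\phi$ may blow up there (as for the entropic $\phi$ used in \textbf{MA-OSEA}, whose gradient diverges at zero coordinates). I would address this by noting that a Legendre-type Bregman regularizer drives $\x^{+}$ into the relative interior of $\C$, so $\nabla\phi(\x^{+})$ is finite and stationarity along feasible directions applies; for the Euclidean choice $\phi=\tfrac12\|\cdot\|_{2}^{2}$ no such issue arises and the standard projection optimality condition is immediate. Finally, I would emphasize that $1$-strong convexity is used only to guarantee that $\x^{+}$ is well defined (unique minimizer), and is not otherwise invoked in deriving the inequality itself.
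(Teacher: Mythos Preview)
The paper does not supply its own proof of this lemma: it is stated as a known result and attributed to \citet{chen1993convergence,jadbabaie2015online} without further argument. Your proposal is the standard derivation of this three-point (prox) inequality --- combine the exact Bregman three-point identity with the first-order optimality condition for the convex subproblem --- and it is correct. Your additional remark about boundary behavior of $\nabla\phi$ in the entropic case is a nice observation but goes beyond what the lemma, as stated and used here, requires.
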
 
	Next, we verify that each local variable $\x_{t,i}$ of Algorithm~\ref{alg:BDOMA} is included in the constraint of continuous problem Eq.\eqref{equ:continuous_max}  for any $t\in[T]$ and $i\in\N$.
	\begin{lemma}\label{lemma:included_convex_constaint}
	In Algorithm~\ref{alg:BDOMA}, if we set the constraint $\C=\{\x\in[0,1]^{n}: \sum_{a\in\V_{i}}x_{a}\le1,\forall i\in\N\}$ and Assumption~\ref{ass:1} holds,we have that, for any $t\in[T]$ and $i\in\N$,  we have that, for any $t\in[T]$ and $i\in\N$, $\x_{t,i}\in\C$ and $\y_{t,i}\in\C$.
	\end{lemma}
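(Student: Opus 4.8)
\textbf{Proof plan for Lemma~\ref{lemma:included_convex_constaint}.}

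The plan is to prove the two claims $\x_{t,i}\in\C$ and $\y_{t,i}\in\C$ simultaneously by induction on $t$, exploiting the structure of the update rules in Lines 8, 12 and 13 of Algorithm~\ref{alg:BDOMA}. The key observation is that the constraint $\C=\{\x\in[0,1]^{n}:\sum_{a\in\V_{i}}x_{a}\le1,\ \forall i\in\N\}$ decomposes blockwise according to the partition $\V=\cup_{i\in\N}\V_{i}$, so membership in $\C$ amounts to checking, for each block $\V_{j}$, that the entries are in $[0,1]$ and sum to at most $1$; we will track each block separately.

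For the base case $t=1$, the initialization sets $[\x_{1,i}]_{a}=\frac{1}{|\V_{i}|}$ for $a\in\V_{i}$ and $0$ otherwise, so for block $\V_{i}$ the entries sum to exactly $1$ and for every other block $\V_{j}$ ($j\ne i$) they are all $0$; hence $\x_{1,i}\in\C$ for every $i$. For the inductive step, assume $\x_{t,j}\in\C$ for all $j\in\N$. First I would handle $\y_{t,i}=\sum_{j\in\N_{i}\cup\{i\}}w_{ij}\x_{t,j}$: since $\W$ is doubly stochastic with nonnegative entries (Assumption~\ref{ass:1}), the weights $w_{ij}$ are nonnegative and $\sum_{j\in\N_{i}\cup\{i\}}w_{ij}=1$ (using that $w_{ij}=0$ when $j\notin\N_{i}\cup\{i\}$), so $\y_{t,i}$ is a convex combination of points in $\C$; because $\C$ is convex, $\y_{t,i}\in\C$. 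Then I would show $\x_{t+1,i}\in\C$ by checking the two types of coordinates produced by Lines 12–13: for $a\notin\V_{i}$ we have $[\x_{t+1,i}]_{a}=[\y_{t,i}]_{a}$, and since $\y_{t,i}\in\C$ these coordinates already satisfy the per-block constraints for every block $\V_{j}$ with $j\ne i$ (and lie in $[0,1]$); for the block $\V_{i}$ itself, $[\x_{t+1,i}]_{\V_{i}}$ is defined in Eq.\eqref{equ:mirror_projection} as the $\arg\min$ over exactly the set $\{\mathbf{b}\in[0,1]^{n_{i}}:\sum_{k=1}^{n_{i}}b_{k}\le1\}$, so by construction it satisfies $\sum_{a\in\V_{i}}[\x_{t+1,i}]_{a}\le1$ and lies in $[0,1]^{n_{i}}$. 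Combining the two coordinate types gives $\x_{t+1,i}\in\C$, closing the induction.

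This lemma is essentially bookkeeping rather than a deep argument, so I do not anticipate a serious obstacle; the only points requiring a little care are (i) making explicit that $\C$ factorizes over the blocks $\V_{j}$ so that the $\y$-update and the $\x$-update can be verified block by block, and (ii) confirming that the feasible set of the projection in Eq.\eqref{equ:mirror_projection} is nonempty and that the $\arg\min$ is well-defined (which follows from strong convexity of $\D_{g,n_{i}}$ and compactness of the simplex-like constraint), so that $[\x_{t+1,i}]_{\V_{i}}$ is indeed a legitimate point of that feasible set.
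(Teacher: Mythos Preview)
Your proposal is correct and follows essentially the same approach as the paper: induction on $t$, verifying the base case from the initialization in Line~2, using convexity of $\C$ together with the row-stochasticity of $\W$ (Assumption~\ref{ass:1}) to get $\y_{t,i}\in\C$, and then checking $\x_{t+1,i}\in\C$ blockwise via Lines~12--13. Your write-up is in fact slightly more careful than the paper's own argument (you explicitly note the block decomposition of $\C$ and the well-posedness of the $\arg\min$), but the underlying idea is identical.
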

	\begin{proof}
	We prove this lemma by induction.
	At first, from the Line 2 in Algorithm~\ref{alg:BDOMA}, we know that $\x_{1,i}\in\C$ for any $i\in\N$. Moreover, due to Assumption~\ref{ass:1} and Line 8, we also can infer $\y_{1,i}=\sum_{j\in\N_{i}\cup\{i\}}w_{ij}\x_{1,j}\in\C$ for any $i\in\N$.
	For any $1\le t<T$, if we assume $\x_{t,i}\in\C$ and $\y_{t,i}\in\C$ for any $i\in\N$. Then from the Line 12 in Algorithm~\ref{alg:BDOMA}, we know $\sum_{a\in\V_{j}}[\x_{t+1,i}]_{a}=\sum_{a\in\V_{j}}[\y_{t,i}]_{a}\le1$ for any $j\neq i$ and $i\in\N_{i}$. Furthermore, Line 13 implies that $[\x_{t+1,i}]_{\V_{i}}$ is the projection over the constraint $\sum_{k=1}^{n_{i}}b_{k}\le1$, so $\sum_{a\in\V_{i}}[\x_{t+1,i}]_{a}\le1$. we can conclude that $\x_{t+1,i}\in\C$. Due to  Assumption~\ref{ass:1}, we can further show $\y_{t+1,i}\in\C$.
	\end{proof}
	With Lemma~\ref{lemma:included_convex_constaint}, we integrate the probability update procedures for  actions $i\in\V_{i}$ and those not in $\V_{i}$ within Algorithm~\ref{alg:BDOMA}, i,e, Lines 12-13.
		\begin{lemma}\label{lemma:simplify}
		In Algorithm~\ref{alg:BDOMA}, if we set the constraint $\C=\{\x\in[0,1]^{n}: \sum_{a\in\V_{i}}x_{a}\le1,\forall i\in\N\}$ and Assumption~\ref{ass:2} and \ref{ass:1} hold,we have that, for any $t\in[T]$ and $i\in\N$, 
		\begin{equation*}
			\x_{t+1,i}=\mathop{\arg\min}_{\x\in\C}\Big(-\langle\widetilde{\nabla} F_{t}^{s}(\x_{t,i})\odot\one_{\V_{i}},\x\rangle+\frac{1}{\eta_{t}}\mathcal{D}_{\phi}(\x, \y_{t,i})\Big), 
		\end{equation*}where $\odot$ denotes the coordinate-wise multiplication, i.e.,the $i$-th element of vector $\x\odot\y$ is $x_{i}y_{i}$, and $\one_{\V_{i}}$ denotes a $n$-dimensional vector where the entries at $\V_{i}$ is equal to $1$ and all others are $0$.
	\end{lemma}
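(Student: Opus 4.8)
The plan is to exploit the fact that the action sets $\{\V_j\}_{j\in\N}$ partition $\V=[n]$, so that both the feasible region $\C$ of Eq.~\eqref{equ:continuous_max} and the objective appearing in the claimed ``single-shot'' update split into independent per-agent blocks. Once this separability is in hand, the update restricted to the block $\V_i$ will literally be Eq.~\eqref{equ:mirror_projection} (Line~13), while on every other block $\V_j$, $j\neq i$, the gradient term is annihilated by $\one_{\V_i}$, so that block's subproblem collapses to minimizing a Bregman divergence, whose minimizer is its own second argument --- i.e.\ $[\y_{t,i}]_{\V_j}$, matching Line~12.

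More precisely, set $h(\x):=-\langle\widetilde{\nabla}F_t^s(\x_{t,i})\odot\one_{\V_i},\x\rangle+\frac{1}{\eta_t}\D_\phi(\x,\y_{t,i})$. By Assumption~\ref{ass:2} we have $\D_\phi(\x,\y)=\D_{g,n}(\x,\y)=\sum_{a=1}^{n}\big(g(x_a)-g(y_a)-g'(y_a)(x_a-y_a)\big)$, and since $\one_{\V_i}$ zeroes all coordinates outside $\V_i$ we get $\langle\widetilde{\nabla}F_t^s(\x_{t,i})\odot\one_{\V_i},\x\rangle=\langle[\widetilde{\nabla}F_t^s(\x_{t,i})]_{\V_i},[\x]_{\V_i}\rangle$. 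Because $\bigcup_{j\in\N}\V_j=[n]$ with the $\V_j$ pairwise disjoint, both sums reorganize block-wise, giving $h(\x)=h_i([\x]_{\V_i})+\sum_{j\neq i}h_j([\x]_{\V_j})$ with $h_i(\b)=-\langle[\widetilde{\nabla}F_t^s(\x_{t,i})]_{\V_i},\b\rangle+\frac{1}{\eta_t}\D_{g,n_i}(\b,[\y_{t,i}]_{\V_i})$ and $h_j(\b)=\frac{1}{\eta_t}\D_{g,n_j}(\b,[\y_{t,i}]_{\V_j})$ for $j\neq i$. Likewise $\C=\prod_{j\in\N}\C_j$ with $\C_j:=\{\b\in[0,1]^{n_j}:\sum_{k}b_k\le1\}$ and $n_j=|\V_j|$. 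Hence $\mathop{\arg\min}_{\x\in\C}h(\x)$ decomposes into the independent subproblems $\mathop{\arg\min}_{\b\in\C_j}h_j(\b)$, one per $j\in\N$.

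I would then finish by identifying each block minimizer. For $j=i$, $\mathop{\arg\min}_{\b\in\C_i}h_i(\b)$ is exactly the update Eq.~\eqref{equ:mirror_projection}, so it equals $[\x_{t+1,i}]_{\V_i}$. For $j\neq i$, minimizing $h_j(\b)=\frac{1}{\eta_t}\D_{g,n_j}(\b,[\y_{t,i}]_{\V_j})$ over $\C_j$: strong convexity of $g$ makes $\D_{g,n_j}(\cdot,\cdot)\ge0$ with equality iff the arguments coincide, so the unconstrained minimizer is $[\y_{t,i}]_{\V_j}$; and this point is feasible, $[\y_{t,i}]_{\V_j}\in\C_j$, since Lemma~\ref{lemma:included_convex_constaint} (using Assumption~\ref{ass:1}) yields $\y_{t,i}\in\C\subseteq[0,1]^n$ and therefore $\sum_{a\in\V_j}[\y_{t,i}]_a\le1$. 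Thus the block-$\V_j$ minimizer is $[\y_{t,i}]_{\V_j}$, which is precisely Line~12. Concatenating the blocks reconstructs the vector produced by Lines~12--13 of Algorithm~\ref{alg:BDOMA}, i.e.\ $\x_{t+1,i}$, completing the argument.

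There is no genuine analytic difficulty here; the proof is a separability/bookkeeping reduction. The only step that truly needs care is verifying that $[\y_{t,i}]_{\V_j}$ is feasible for every $j\neq i$ --- this is what guarantees the Bregman block is minimized at its second argument rather than at some projected point --- which is exactly why Lemma~\ref{lemma:included_convex_constaint} is invoked beforehand; equally, one must keep straight that it is the partition property of the $\V_j$ that forces $\C$ and $h$ to split along the \emph{same} blocks.
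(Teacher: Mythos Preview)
Your proof is correct and follows essentially the same separability argument as the paper: both decompose the objective via Assumption~\ref{ass:2} and the masking $\one_{\V_i}$, identify the $\V_i$-block with Eq.~\eqref{equ:mirror_projection}, and use Lemma~\ref{lemma:included_convex_constaint} to conclude that the remaining Bregman blocks are minimized at $[\y_{t,i}]_{\V_j}$. The only cosmetic difference is that you split into $N$ per-agent blocks while the paper groups all $j\neq i$ into a single $\V\setminus\V_i$ block, but the logic is identical.
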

	\begin{proof}
		Firstly, we define the solution of the problem $\min_{\x\in\C}\Big(-\langle\widetilde{\nabla} F_{t}^{s}(\x_{t,i})\odot\one_{\V_{i}},\x\rangle+\frac{1}{\eta_{t}}\mathcal{D}_{\phi}(\x, \y_{t,i})\Big)$ as $\textbf{o}^{*}\in[0,1]^{n}$. Due to Assumption~\ref{ass:2},
		\begin{equation*}
			\begin{aligned}
				\textbf{o}^{*}&=\mathop{\arg\min}_{\x\in\C}\Big(-\langle\widetilde{\nabla} F_{t}^{s}(\x_{t,i})\odot\one_{\V_{i}},\x\rangle+\frac{1}{\eta_{t}}\mathcal{D}_{\phi}(\x, \y_{t,i})\Big)\\
				&=\mathop{\arg\min}_{\x\in\C}\Big(-\langle[\widetilde{\nabla} F_{t}^{s}(\x_{t,i})]_{\V_{i}},[\x]_{\V_{i}}\rangle+\frac{1}{\eta_{t}}\mathcal{D}_{g,n_{i}}([\x]_{\V_{i}}, [\y_{t,i}]_{\V_{i}})+\frac{1}{\eta_{t}}\mathcal{D}_{g,n-n_{i}}([\x]_{\V\setminus\V_{i}}, [\y_{t,i}]_{\V\setminus\V_{i}})\Big).
			\end{aligned}
		\end{equation*}
As a result, we can conclude that
\begin{equation}\label{equ:simplify}
\begin{aligned}
	&[\textbf{o}^{*}]_{\V_{i}}=\mathop{\arg\min}_{\sum_{a\in\V_{i}}b_{a}\le1}\Big(-\langle[\widetilde{\nabla} F_{t}^{s}(\x_{t,i})]_{\V_{i}},\textbf{b}\rangle+\frac{1}{\eta_{t}}\mathcal{D}_{g,n_{i}}(\textbf{b}, [\y_{t,i}]_{\V_{i}})\Big),\\
&[\textbf{o}^{*}]_{\V\setminus\V_{i}}=\mathop{\arg\min}_{\sum_{a\in\V_{j}}z_{a}\le1, \forall j\in\N\setminus\{i\}}\Big(\mathcal{D}_{g,n_{i}}(\textbf{z}, [\y_{t,i}]_{\V\setminus\V_{i}})\Big)=[\y_{t,i}]_{\V\setminus\V_{i}}, \ \  \ (\text{Note that}\ \y_{t,i}\in\C\text{\ (See lemma~\ref{lemma:included_convex_constaint})})
\end{aligned}
\end{equation} where  $\textbf{b}\in[0,1]^{n_{i}}$ and $\textbf{z}\in[0,1]^{n-n_{i}}$.
From the Eq.\eqref{equ:simplify} and Lines 12-13 in Algorithm~\ref{alg:BDOMA}, we get the $\textbf{o}^{*}=\x_{t+1,i}$.
	\end{proof}
In the following part, we define some commonly used symbols for the proof of Theorem~\ref{thm:final_one}: 
	\begin{equation*}
		\begin{aligned}
			&\bar{\x}_{t}:=\frac{\sum_{i=1}^{N}\x_{t,i}}{N},\ \ \ \x_{t}^{cate}:=[\x_{t,1};\x_{t,2};\dots;\x_{t,N}]\in\R^{n*N};\\
			&\bar{\y}_{t}:=\frac{\sum_{i=1}^{N}\y_{t,i}}{N},\ \ \ \y_{t}^{cate}:=[\y_{t,1};\y_{t,2};\dots;\y_{t,N}]\in\R^{n*N};\\
			&\mathbf{r}_{t,i}:=\x_{t+1,i}-\y_{t,i},\ \ \ \mathbf{r}_{t}^{cate}:=[\mathbf{r}_{t,1};\mathbf{r}_{t,2};\dots;\mathbf{r}_{t,N}]\in\R^{n*N};
		\end{aligned}
	\end{equation*}
With these symbols, we can verify that 
	\begin{lemma}\label{lemma:2} If we set the constraint $\C=\{\x\in[0,1]^{n}: \sum_{a\in\V_{i}}x_{a}\le1,\forall i\in\N\}$ and Assumption \ref{ass:2} and \ref{ass:4} hold, we have that
		\begin{equation*}
			\E(\|\mathbf{r}_{t,i}\|)=\E(\|\x_{t+1,i}-\y_{t,i}\|)\le G\eta_{t}.
		\end{equation*}
	\end{lemma}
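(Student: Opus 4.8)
The plan is to bound the distance $\|\x_{t+1,i} - \y_{t,i}\|$ by exploiting the optimality conditions of the mirror-projection update that defines $\x_{t+1,i}$. By Lemma~\ref{lemma:simplify}, we know that $\x_{t+1,i} = \arg\min_{\x\in\C}\big(-\langle\widetilde{\nabla}F_t^s(\x_{t,i})\odot\one_{\V_i},\x\rangle + \frac{1}{\eta_t}\mathcal{D}_\phi(\x,\y_{t,i})\big)$, so this is exactly an update of the form treated in Lemma~\ref{lemma:1} with $\textbf{b} = -\eta_t\widetilde{\nabla}F_t^s(\x_{t,i})\odot\one_{\V_i}$, $\x = \y_{t,i}$, and $\x^+ = \x_{t+1,i}$. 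Applying Lemma~\ref{lemma:1} with the choice $\z = \y_{t,i}$ (which lies in $\C$ by Lemma~\ref{lemma:included_convex_constaint}) yields
\begin{equation*}
-\eta_t\langle\x_{t+1,i}-\y_{t,i},\widetilde{\nabla}F_t^s(\x_{t,i})\odot\one_{\V_i}\rangle \le \mathcal{D}_\phi(\y_{t,i},\y_{t,i}) - \mathcal{D}_\phi(\y_{t,i},\x_{t+1,i}) - \mathcal{D}_\phi(\x_{t+1,i},\y_{t,i}).
\end{equation*}
Since $\mathcal{D}_\phi(\y_{t,i},\y_{t,i}) = 0$ and $\mathcal{D}_\phi(\y_{t,i},\x_{t+1,i}) \ge 0$, the right-hand side is at most $-\mathcal{D}_\phi(\x_{t+1,i},\y_{t,i})$, giving $\mathcal{D}_\phi(\x_{t+1,i},\y_{t,i}) \le \eta_t\langle\x_{t+1,i}-\y_{t,i},\widetilde{\nabla}F_t^s(\x_{t,i})\odot\one_{\V_i}\rangle$.

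Next I would use the $1$-strong convexity of $\phi$ (Assumption~\ref{ass:2} / the hypothesis of Lemma~\ref{lemma:1}), which gives the lower bound $\mathcal{D}_\phi(\x_{t+1,i},\y_{t,i}) \ge \frac12\|\x_{t+1,i}-\y_{t,i}\|^2$. Combining this with the previous inequality and then applying Cauchy--Schwarz (or the generalized Hölder inequality for the norm/dual-norm pair), we obtain
\begin{equation*}
\tfrac12\|\x_{t+1,i}-\y_{t,i}\|^2 \le \eta_t\|\x_{t+1,i}-\y_{t,i}\|\cdot\|\widetilde{\nabla}F_t^s(\x_{t,i})\odot\one_{\V_i}\|_*.
\end{equation*}
Dividing through by $\|\x_{t+1,i}-\y_{t,i}\|$ (the bound is trivial when this is zero) yields $\|\x_{t+1,i}-\y_{t,i}\| \le 2\eta_t\|\widetilde{\nabla}F_t^s(\x_{t,i})\odot\one_{\V_i}\|_*$. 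The factor-of-$2$ here is slightly looser than the claimed constant $G\eta_t$; to recover the sharp constant one should instead argue more carefully, e.g. by combining the two Bregman terms $\mathcal{D}_\phi(\y_{t,i},\x_{t+1,i}) + \mathcal{D}_\phi(\x_{t+1,i},\y_{t,i}) \ge \|\x_{t+1,i}-\y_{t,i}\|^2$ (which follows from strong convexity applied in both directions), so that $\|\x_{t+1,i}-\y_{t,i}\|^2 \le \eta_t\langle\x_{t+1,i}-\y_{t,i},\widetilde{\nabla}F_t^s(\x_{t,i})\odot\one_{\V_i}\rangle \le \eta_t\|\x_{t+1,i}-\y_{t,i}\|\cdot\|\widetilde{\nabla}F_t^s(\x_{t,i})\odot\one_{\V_i}\|_*$, and hence $\|\x_{t+1,i}-\y_{t,i}\| \le \eta_t\|\widetilde{\nabla}F_t^s(\x_{t,i})\odot\one_{\V_i}\|_*$.

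Finally I would take expectations and invoke Assumption~\ref{ass:4}: by Jensen's inequality $\E\|\widetilde{\nabla}F_t^s(\x_{t,i})\odot\one_{\V_i}\|_* \le \big(\E\|\widetilde{\nabla}F_t^s(\x_{t,i})\odot\one_{\V_i}\|_*^2\big)^{1/2} \le \big(\E\|\widetilde{\nabla}F_t^s(\x_{t,i})\|_*^2\big)^{1/2} \le G$, where the middle step uses that zeroing out coordinates cannot increase the dual norm (true for $\ell_1$, $\ell_2$, and $\ell_\infty$; it holds for any monotone norm). Chaining these bounds gives $\E\|\mathbf{r}_{t,i}\| = \E\|\x_{t+1,i}-\y_{t,i}\| \le G\eta_t$, as desired. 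The main obstacle is organizing the strong-convexity estimate to land exactly on the constant $G\eta_t$ rather than $2G\eta_t$ — the trick is to use the three-point property of Bregman divergence in the symmetric form that pairs $\mathcal{D}_\phi(\y_{t,i},\x_{t+1,i})$ with $\mathcal{D}_\phi(\x_{t+1,i},\y_{t,i})$, rather than discarding the former; everything else is a routine application of the lemmas already established.
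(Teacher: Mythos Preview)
Your proof is correct and essentially identical to the paper's. Both arguments apply Lemma~\ref{lemma:1} at $\z=\y_{t,i}$ and keep the symmetric pair $\mathcal{D}_\phi(\y_{t,i},\x_{t+1,i})+\mathcal{D}_\phi(\x_{t+1,i},\y_{t,i})\ge\|\x_{t+1,i}-\y_{t,i}\|^2$ to avoid the factor of~$2$; the only cosmetic difference is that the paper bounds the inner product via Young's inequality while you use H\"older, and both finish with Jensen on Assumption~\ref{ass:4}.
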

	\begin{proof}
		According to Lemma~\ref{lemma:simplify}, we have that 
		\begin{equation*}
			\x_{t+1,i}=\mathop{\arg\min_{\x\in\C}}\Big(-\langle\widetilde{\nabla} F_{t}^{s}(\x_{t,i})\odot\one_{\V_{i}},\x\rangle+\frac{1}{\eta_{t}}\mathcal{D}_{\phi}(\x, \y_{t,i})\Big),
		\end{equation*} where $\C=\{\x\in[0,1]^{n}: \sum_{a\in\V_{i}}x_{a}\le1,\forall i\in\N\}$.  
		
		From the Lemma~\ref{lemma:1}, we have
		\begin{equation}\label{equ:lemma21}
			\eta_{t}\langle	\x_{t+1,i}-\x,-\tilde{\nabla} F_{t}^{s}(\x_{t,i})\odot\one_{\V_{i}}\rangle\le \mathcal{D}_{\phi}(\x,\y_{t,i})-\mathcal{D}_{\phi}(\x,\x_{t+1,i})-\mathcal{D}_{\phi}(\x_{t+1,i},\y_{t,i}),
		\end{equation} for any $\x\in\C$.
		If we set $\x=\y_{t,i}$\footnote{Note that we prove $\y_{t,i}\in\C$ in Lemma~\ref{lemma:included_convex_constaint}.} in Eq.\eqref{equ:lemma21}, we have that
		\begin{equation*}
			\eta_{t}\langle	\x_{t+1,i}-\y_{t,i},\tilde{\nabla} F_{t}^{s}(\x_{t,i})\odot\one_{\V_{i}}\rangle\ge\mathcal{D}_{\phi}(\y_{t,i},\x_{t+1,i})+\mathcal{D}_{\phi}(\x_{t+1,i},\y_{t,i})\ge\|\x_{t+1,i}-\y_{t,i}\|,
		\end{equation*} where the final inequality follows from the $1$-strongly convex of $\phi$.
		
		From the Young inequality, we have that $\eta_{t}\langle	\x_{t+1,i}-\y_{t,i},\tilde{\nabla} F_{t}^{s}(\x_{t,i})\odot\one_{\V_{i}}\rangle\le\frac{\|\x_{t+1,i}-\y_{t,i}\|}{2}+\frac{\|\eta_{t}\tilde{\nabla} F_{t}^{A}(\x_{t,i})\odot\one_{\V_{i}}\|_{*}}{2}$ such that $\E(\|\x_{t+1,i}-\y_{t,i}\|)\le\E(\|\eta_{t}\tilde{\nabla}F_{t}^{s}(\x_{t,i})\odot\one_{\V_{i}}\|_{*})\le\eta_{t}G$ from the Assumption~\ref{ass:4}.
			\end{proof}
With this Lemma~\ref{lemma:2}, we next derive an upper bound about the deviation between $\x_{t+1,i},\y_{t+1,i}$ and the average $\bar{\x}_{t+1}$.
	\begin{lemma}\label{lemma:3}
		Under the Assumption \ref{ass:2}, \ref{ass:1}  and \ref{ass:4}, we have that, for any $t\in[T]$ and $i\in\N$, 
		\begin{equation*}
			\begin{aligned}
				&\E(\|\x_{t+1,i}-\bar{\x}_{t+1}\|)\le \sum_{\tau=1}^{t}\sqrt{N}\beta^{t-\tau}\eta_{\tau}G,\\
				&\E(\|\y_{t+1,i}-\bar{\x}_{t+1}\|)\le \sum_{\tau=1}^{t}\sqrt{N}\beta^{t-\tau}\eta_{\tau}G,
			\end{aligned}
		\end{equation*} where $\beta=\max(|\lambda_{2}(\W)|,|\lambda_{N}(\W)|)$ is the second largest magnitude of the eigenvalues of the weight matrix $\W$.
	\end{lemma}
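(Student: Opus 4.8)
\textbf{Proof Proposal for Lemma~\ref{lemma:3}.}

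The plan is to track how the concatenated local variables evolve under the combined effect of the consensus mixing step (Line 8) and the local perturbation $\mathbf{r}_{t,i} = \x_{t+1,i} - \y_{t,i}$ (whose size is controlled by Lemma~\ref{lemma:2}). First I would write the update compactly in stacked form. Let $\W^{cate} = \W \otimes \mathbf{I}_n \in \R^{nN\times nN}$ be the Kronecker lift of the weight matrix acting on the concatenation. Then the aggregation step gives $\y_t^{cate} = \W^{cate}\x_t^{cate}$, and the probability-update step gives $\x_{t+1}^{cate} = \y_t^{cate} + \mathbf{r}_t^{cate} = \W^{cate}\x_t^{cate} + \mathbf{r}_t^{cate}$. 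Unrolling this recursion from the initialization yields $\x_{t+1}^{cate} = (\W^{cate})^{t}\x_1^{cate} + \sum_{\tau=1}^{t} (\W^{cate})^{t-\tau}\mathbf{r}_\tau^{cate}$, where I have used that $\mathbf{r}_\tau^{cate}$ enters at step $\tau$ and is subsequently multiplied by $\W^{cate}$ exactly $t-\tau$ times.

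Next I would project onto the ``disagreement'' subspace. The averaging operator is $\mathbf{J} = \frac{1}{N}\one_N\one_N^T$, and its Kronecker lift $\mathbf{J}^{cate} = \mathbf{J}\otimes\mathbf{I}_n$ maps $\x_{t+1}^{cate}$ to the stacked copy of $\bar\x_{t+1}$. The key algebraic facts, which follow from Assumption~\ref{ass:1} (symmetric, doubly stochastic $\W$ with $\lambda_1(\W)=1$ simple): $\W^{cate}\mathbf{J}^{cate} = \mathbf{J}^{cate}\W^{cate} = \mathbf{J}^{cate}$, the operator $\W^{cate} - \mathbf{J}^{cate}$ has spectral norm (equal to $l_2\to l_2$ operator norm, since it is symmetric) at most $\beta = \max(|\lambda_2(\W)|, |\lambda_N(\W)|)$, and consequently $\|(\W^{cate})^{k} - \mathbf{J}^{cate}\|_{2} \le \beta^{k}$. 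Subtracting $\mathbf{J}^{cate}\x_{t+1}^{cate}$ from both sides of the unrolled recursion, and using that $\mathbf{J}^{cate}\x_1^{cate} = \x_1^{cate}$ because all agents initialize with $\sum_{a\in\V_i}[\x_{1,i}]_a$-normalized probabilities on disjoint supports so... actually more simply, the initialization has every agent $i$ holding mass $1/|\V_i|$ on $\V_i$ and $0$ elsewhere, so the $\x_{1,i}$ are \emph{not} identical; instead I would just bound the first term via $\|((\W^{cate})^t - \mathbf{J}^{cate})\x_1^{cate}\| \le \beta^t \|\x_1^{cate}\|$ and absorb it, or — cleaner — note that the lemma's bound starts the sum at $\tau=1$ so I can fold an appropriate $\beta^{t}\|\x_1^{cate}\|$ term in (the paper's statement presumably has room for this, or $\x_{1,i}$ are taken identical; I would follow whichever convention the earlier text fixes). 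The disagreement term then satisfies
\begin{equation*}
\|\x_{t+1,i} - \bar\x_{t+1}\| \le \|\x_{t+1}^{cate} - \mathbf{J}^{cate}\x_{t+1}^{cate}\| \le \sum_{\tau=1}^{t}\beta^{t-\tau}\|\mathbf{r}_\tau^{cate}\|,
\end{equation*}
where the single-block norm is bounded by the full-vector norm since the blocks are nonnegative-contributions in $\ell_2$; and $\|\mathbf{r}_\tau^{cate}\|_2 \le \sqrt{\sum_{i=1}^N \|\mathbf{r}_{\tau,i}\|^2}$, which combined with $\|\cdot\|\le\|\cdot\|_2$ equivalence constants gives a $\sqrt{N}$ factor.

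Taking expectations and applying Lemma~\ref{lemma:2}, namely $\E\|\mathbf{r}_{\tau,i}\| \le G\eta_\tau$, gives $\E\|\x_{t+1,i}-\bar\x_{t+1}\| \le \sum_{\tau=1}^t \sqrt{N}\beta^{t-\tau}\eta_\tau G$, which is the first claim. For the second claim, since $\y_{t+1,i} = \sum_{j\in\N_i\cup\{i\}} w_{(t+1)ij}\x_{t+1,j}$ is a convex combination of the $\x_{t+1,j}$ and $\W$ is doubly stochastic so $\bar\y_{t+1} = \bar\x_{t+1}$, I would write $\y_{t+1,i}-\bar\x_{t+1} = \sum_j w_{ij}(\x_{t+1,j}-\bar\x_{t+1})$ and apply the triangle inequality together with the first bound (the convex weights sum to $1$, so no extra factor appears); alternatively one more application of $\W^{cate}-\mathbf{J}^{cate}$ having norm $\le\beta\le 1$ does it directly. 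I expect the main obstacle to be purely bookkeeping: handling the norm-equivalence constants carefully so the $\sqrt{N}$ comes out exactly as stated rather than $N$ or $\sqrt{N}$ times a stray constant, and making sure the mixed-norm manipulations (general $\|\cdot\|$ versus $\ell_2$ on $\R^{nN}$) are consistent with whatever norm Assumption~\ref{ass:4} fixes; the spectral-gap contraction itself is standard once the Kronecker formalism is set up.
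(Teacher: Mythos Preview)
Your overall skeleton---concatenate, unroll the recursion $\x_{t+1}^{cate}=(\W\otimes\mathbf{I}_n)\x_t^{cate}+\mathbf{r}_t^{cate}$, subtract the average, exploit the $\beta$-contraction of $\W-\frac{1}{N}\one\one^T$, and finish with Lemma~\ref{lemma:2}---matches the paper. The difference is in how the contraction is cashed out, and this is exactly where your acknowledged ``bookkeeping obstacle'' lives.

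The paper does \emph{not} pass through the $\ell_2$ spectral norm of $\W^{cate}-\mathbf{J}^{cate}$ acting on the full stacked vector. Instead it extracts the $i$-th block explicitly,
\[
\x_{t+1,i}-\bar\x_{t+1}=\sum_{\tau=1}^{t}\sum_{j}\Big([\W^{t-\tau}]_{ij}-\tfrac{1}{N}\Big)\mathbf{r}_{\tau,j},
\]
applies the triangle inequality directly in the general norm $\|\cdot\|$, takes expectations, uses Lemma~\ref{lemma:2} blockwise, and then invokes the scalar row-sum bound $\sum_{j}\big|[\W^{k}]_{ij}-\tfrac{1}{N}\big|\le\sqrt{N}\,\beta^{k}$ (Proposition~1 in \citet{nedic2009distributed}; essentially Cauchy--Schwarz on one row of $\W^k-\tfrac{1}{N}\one\one^T$). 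This is what produces the $\sqrt{N}$ exactly, and it never leaves the ambient norm $\|\cdot\|$ fixed by Assumption~\ref{ass:4}.

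Your spectral-norm route, by contrast, is tied to $\ell_2$ on $\R^{nN}$; to get $\sqrt{N}$ rather than $N$ out of $\|\mathbf{r}_\tau^{cate}\|_2$ you would need a second-moment bound $\E\|\mathbf{r}_{\tau,i}\|_2^2\le G^2\eta_\tau^2$ (which is obtainable from the proof of Lemma~\ref{lemma:2} but is not the statement of that lemma), and you would still have to reconcile $\ell_2$ with the general $\|\cdot\|$. The paper's blockwise row-$\ell_1$ trick sidesteps both issues. On the initial-condition term you worried about: the paper simply drops it in the unrolling, so you need not agonize over it here.
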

	\begin{proof}
		From the definition of $\mathbf{r}_{t,i}$, we can conclude that 
		\begin{equation}\label{lemma31}
			\x_{t+1,i}=\mathbf{r}_{t,i}+\y_{t,i}=\mathbf{r}_{t,i}+\sum_{j\in\N_{i}\cup\{i\}}w_{ij}\x_{t,j},
		\end{equation} where the final equality follows from Line 8 in Algorithm~\ref{alg:BDOMA}. 
		
		As a result, from the Eq.\eqref{lemma31}, we can show that
		\begin{equation}\label{lemma32}
			\begin{aligned}
				\x_{t+1}^{cate}& =\mathbf{r}_{t}^{cate}+(\W\otimes\mathbf{I}_{n})\x_{t}^{cate}\\
				&=\sum_{\tau=1}^{t}(\W\otimes\mathbf{I}_{n})^{t-\tau}\mathbf{r}_{\tau}^{cate}\\
				&=\sum_{\tau=1}^{t}(\W^{t-\tau}\otimes\mathbf{I}_{n})\mathbf{r}_{\tau}^{cate},
			\end{aligned}
		\end{equation} where the symbol $\otimes$ denotes the Kronecker product.
		
		If we also define $\bar{\x}_{t}^{cate}=[\bar{\x}_{t};\bar{\x}_{t};\dots;\bar{\x}_{t}]\in\R^{n*N}$ and from the Eq.\eqref{lemma32}, we also have that
		\begin{equation}\label{lemma33}
			\begin{aligned}				\bar{\x}_{t+1}^{cate}&=\left(\frac{\one_{N}\one_{N}^{T}}{N}\otimes\mathbf{I}_{N}\right)\x_{t+1}^{cate}\\
				&=\sum_{\tau=1}^{t}\left(\frac{\one_{N}\one_{N}^{T}}{N}\otimes\mathbf{I}_{n}\right)\mathbf{r}_{\tau}^{cate}.
			\end{aligned}
		\end{equation}
		Then, from the Eq.\eqref{lemma32} and Eq.\eqref{lemma33}, we have that , for any $i\in\N$,
		\begin{equation}\label{lemma34}
			\x_{t+1,i}-\bar{\x}_{t+1}=\sum_{\tau=1}^{t}\sum_{j\in\N_{i}\cup\{i\}}\left([\W^{t-\tau}]_{ij}-\frac{1}{N}\right)\mathbf{r}_{\tau,j}.
		\end{equation}
		Eq.\eqref{lemma34} indicates that
		\begin{equation*}
			\begin{aligned}
				\E\left(\left\|\x_{t+1,i}-\bar{\x}_{t+1}\right\|\right)&=\E\left(\left\|\sum_{\tau=1}^{t}\sum_{j\in\N_{i}\cup\{i\}}([\W^{t-\tau}]_{ij}-\frac{1}{N})\mathbf{r}_{\tau,j}\right\|\right)
                \\&\le\E\left(\sum_{\tau=1}^{t}\sum_{j\in\N_{i}\cup\{i\}}\left|[\W^{t-\tau}]_{ij}-\frac{1}{N}\right |\left\|	\mathbf{r}_{\tau,j}\right\|\right)\\&\le\sum_{\tau=1}^{t}\sum_{j\in\N_{i}\cup\{i\}}\left|[\W^{t-\tau}]_{ij}-\frac{1}{N}\right|\eta_{\tau}G\le\sum_{\tau=1}^{t}\sqrt{N}\beta^{t-\tau}\eta_{\tau}G,
			\end{aligned}
		\end{equation*} where the second inequality comes from Lemma~\ref{lemma:2} and the final inequality follows from $\sum_{j\in\N_{i}\cup\{i\}}|[\W^{t-\tau}]_{ij}-\frac{1}{N}|\le\sqrt{N}\beta^{t-\tau}$(See Proposition 1  in \citep{nedic2009distributed}).
		Due to $\y_{t+1,i}=\sum_{j\in\N_{i}\cup\{i\}}w_{ij}\x_{t+1,j}$ we also can have $\E(\|\y_{t+1,i}-\bar{\x}_{t+1}\|)\le\sum_{j\in\N_{i}\cup\{i\}}w_{ij}\E(\|\x_{t+1,j}-\bar{\x}_{t+1}\|)\le \sum_{\tau=1}^{t}\sqrt{N}\beta^{t-\tau}\eta_{\tau}G$.

	\end{proof}
	\begin{lemma}\label{lemma:4} 
			Consider our proposed Algorithm~\ref{alg:BDOMA}, if Assumption \ref{ass:2},\ref{ass:1},\ref{ass:3},\ref{ass:4} hold and each set function $f_{t}$ is monotone submodular with curvature $c$ for any $t\in[T]$, then we can conclude that,
		\begin{equation*}
			\begin{aligned}
				&\Big(\frac{1-e^{-c}}{c}\Big)\sum_{t=1}^{T}F_{t}(\x^{*}_{t})-\sum_{t=1}^{T}\E\Big(F_{t}(\bar{\x}_{t})\Big)\\
				&\le(3G+LDG)\Big(\sum_{t=1}^{T}\sum_{\tau=1}^{t}N^{\frac{3}{2}}\beta^{t-\tau}\eta_{\tau}\Big)+\sum_{t=1}^{T}\sum_{i\in\N}\frac{1}{\eta_{t}}\E\Big(\mathcal{D}_{\phi}(\x^{*}_{t},\y_{t,i})-\mathcal{D}_{\phi}(\x^{*}_{t},\x_{t+1,i})\Big)+\frac{NG}{2}\sum_{t=1}^{T}\eta_{t},
			\end{aligned}
		\end{equation*} where $\x_{t}^{*}$ is the optimal solution of Eq.\eqref{equ:continuous_max} and $D=\sup_{\x,\y\in\C}\|\x-\y\|$ where $\C=\{\x\in[0,1]^{n}: \sum_{a\in\V_{i}}x_{a}\le1,\forall i\in\N\}$.
	\end{lemma}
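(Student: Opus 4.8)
The plan is to chain together the per-agent mirror-descent guarantee (Lemma~\ref{lemma:1}/Lemma~\ref{lemma:simplify}), the consensus bound of Lemma~\ref{lemma:3}, and the boosting inequality of Theorem~\ref{thm:2}, all aggregated over the $N$ agents and over the horizon $T$. Fix $t\in[T]$. By Lemma~\ref{lemma:simplify}, for every agent $i$ the vector $\x_{t+1,i}$ is exactly the mirror update on the domain $\C$ with linear term $\widetilde{\nabla}F_t^s(\x_{t,i})\odot\one_{\V_i}$ and reference point $\y_{t,i}$. Applying Lemma~\ref{lemma:1} with $\z=\x_t^*$ gives
\begin{equation*}
\eta_t\langle \x_{t+1,i}-\x_t^*,\ -\widetilde{\nabla}F_t^s(\x_{t,i})\odot\one_{\V_i}\rangle\le \mathcal{D}_\phi(\x_t^*,\y_{t,i})-\mathcal{D}_\phi(\x_t^*,\x_{t+1,i})-\mathcal{D}_\phi(\x_{t+1,i},\y_{t,i}).
\end{equation*}
I would rearrange this to lower-bound $\langle \x_t^*-\x_{t,i},\ \widetilde{\nabla}F_t^s(\x_{t,i})\odot\one_{\V_i}\rangle$, splitting $\x_{t+1,i}-\x_t^*=(\x_{t+1,i}-\x_{t,i})+(\x_{t,i}-\x_t^*)$ and also replacing $\x_{t,i}$ by $\bar\x_t$ where convenient; the replacement errors are controlled by Lemma~\ref{lemma:3} (consensus) together with $L$-smoothness and the bound $\|\widetilde{\nabla}F_t^s\|_*\le G$ from Assumption~\ref{ass:4}, and $\mathcal{D}_\phi(\x_{t+1,i},\y_{t,i})\ge 0$ is simply dropped, while $\langle \x_{t+1,i}-\y_{t,i},\cdot\rangle$ is bounded via Lemma~\ref{lemma:2} and Young's inequality, contributing the $\frac{NG}{2}\sum_t\eta_t$ term.

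Next comes the key structural step: summing $\langle \x_t^*-\x_{t,i},\ \widetilde{\nabla}F_t^s(\x_{t,i})\odot\one_{\V_i}\rangle$ over $i\in\N$. Because the action sets $\V_i$ partition $\V$ and $\one_{\V_i}$ selects exactly agent $i$'s block, and because (after replacing each $\x_{t,i}$ by $\bar\x_t$, up to the consensus error) the vector $\x_t^*-\bar\x_t$ is common, the sum $\sum_{i\in\N}\langle \x_t^*-\bar\x_t,\ \widetilde{\nabla}F_t^s(\bar\x_t)\odot\one_{\V_i}\rangle$ telescopes over the blocks into the single inner product $\langle \x_t^*-\bar\x_t,\ \widetilde{\nabla}F_t^s(\bar\x_t)\rangle$. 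Taking conditional expectation and using unbiasedness (Assumption~\ref{ass:4}), $\E\langle \x_t^*-\bar\x_t,\ \widetilde{\nabla}F_t^s(\bar\x_t)\rangle=\E\langle \x_t^*-\bar\x_t,\ \nabla F_t^s(\bar\x_t)\rangle$, at which point Theorem~\ref{thm:2} applies and yields $\langle \x_t^*-\bar\x_t,\nabla F_t^s(\bar\x_t)\rangle\ge \big(\tfrac{1-e^{-c}}{c}\big)F_t(\x_t^*)-F_t(\bar\x_t)$. Plugging this in, summing over $t$, and collecting the Bregman terms into the telescoping-friendly form $\sum_t\sum_i\frac{1}{\eta_t}\big(\mathcal{D}_\phi(\x_t^*,\y_{t,i})-\mathcal{D}_\phi(\x_t^*,\x_{t+1,i})\big)$ gives exactly the claimed inequality, with the error-bound accumulation of Lemma~\ref{lemma:3} producing the $(3G+LDG)\sum_{t}\sum_{\tau}N^{3/2}\beta^{t-\tau}\eta_\tau$ term (the three pieces being: replacing $\x_{t,i}\to\bar\x_t$ inside the inner product, replacing $\widetilde{\nabla}F_t^s(\x_{t,i})\to\widetilde{\nabla}F_t^s(\bar\x_t)$ via $L$-smoothness times diameter $D$, and the $\x_{t+1,i}-\x_{t,i}$ drift term).

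The main obstacle I anticipate is the bookkeeping of the several approximation errors incurred when passing from the per-agent quantities $(\x_{t,i},\y_{t,i})$ to the averaged iterate $\bar\x_t$: one must carefully verify that each such substitution is legitimate on the constraint set $\C$ (using Lemma~\ref{lemma:included_convex_constaint} so that $\x_t^*,\x_{t,i},\y_{t,i}\in\C$ and diameters/divergences are finite), that $L$-smoothness is applied to $F_t^s$ and not $F_t$, and that the constants aggregate to precisely $(3G+LDG)N^{3/2}$ rather than something looser — in particular tracking the $\sqrt N$ from $\sum_{j}|[\W^{t-\tau}]_{ij}-\tfrac1N|\le\sqrt N\beta^{t-\tau}$ against the extra sum over $i\in\N$. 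The telescoping identification of $\sum_i\langle\cdot,\cdot\odot\one_{\V_i}\rangle$ with a single full-gradient inner product is conceptually the crux but is short once the partition structure is made explicit; everything else is routine convex-analysis estimation.
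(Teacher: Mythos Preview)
Your plan is essentially the paper's proof run in reverse order: the paper starts from Theorem~\ref{thm:2} at $\bar\x_t$ and decomposes $\langle\nabla F_t^s(\bar\x_t),\x_t^*-\bar\x_t\rangle$ into exactly the pieces you list (its terms \textcircled{1}, \textcircled{3}, \textcircled{6} yield the $(3G+LDG)N^{3/2}\sum_\tau\beta^{t-\tau}\eta_\tau$ contribution, and \textcircled{4}+\textcircled{5} yield the Bregman telescoping plus $\tfrac{NG}{2}\sum_t\eta_t$), so the ingredients, constants, and bookkeeping all match.

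One step in your write-up needs reordering, however. You propose ``replacing $\widetilde{\nabla}F_t^s(\x_{t,i})\to\widetilde{\nabla}F_t^s(\bar\x_t)$ via $L$-smoothness'' and then taking expectation, but Assumption~\ref{ass:4} only gives $L$-smoothness for the \emph{deterministic} gradient $\nabla F_t^s$, and $\widetilde{\nabla}F_t^s(\bar\x_t)$ is not even a quantity produced by the algorithm (no agent samples at $\bar\x_t$). The fix, which is precisely what the paper does in its term \textcircled{2}, is to take conditional expectation \emph{first}: since $\x_t^*$, $\x_{t,i}$ and $\bar\x_t$ are all measurable before the round-$t$ sampling, unbiasedness gives
\[
\E\big\langle\x_t^*-\x_{t,i},\,\widetilde{\nabla}F_t^s(\x_{t,i})\odot\one_{\V_i}\big\rangle
=\E\big\langle\x_t^*-\x_{t,i},\,\nabla F_t^s(\x_{t,i})\odot\one_{\V_i}\big\rangle,
\]
and only \emph{then} do you replace $\nabla F_t^s(\x_{t,i})$ by $\nabla F_t^s(\bar\x_t)$ using $L$-smoothness (the paper's term \textcircled{1}), after which the block sum over $i$ reconstitutes the full gradient and Theorem~\ref{thm:2} applies. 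With this reordering your argument goes through and coincides with the paper's.
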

	\begin{proof}
		From the Eq.\eqref{equ:boosting1} in Theorem~\ref{thm:2}, we have that
		\begin{equation}
			\label{equ:lemma_result_0}
			\begin{aligned}
				\Big(\frac{1-e^{-c}}{c}\Big)F_{t}(\x_{t}^{*})-F_{t}(\bar{\x}_{t})&\le\langle\nabla F_{t}^{s}(\bar{\x}_{t}),\x_{t}^{*}-\bar{\x}_{t}\rangle\\
				&=\underbrace{\langle\nabla F_{t}^{s}(\bar{\x}_{t})-\sum_{i\in\N}\Big(\nabla F_{t}^{s}(\x_{t,i})\odot\one_{\V_{i}}\Big),\x_{t}^{*}-\bar{\x}_{t}\rangle}_{\text{\textcircled{1}}}\\
				&\quad+\underbrace{\sum_{i\in\N}\langle\nabla F_{t}^{s}(\x_{t,i})\odot\one_{\V_{i}},\x_{t}^{*}-\x_{t,i}\rangle}_{\text{\textcircled{2}}}\\
				&\quad +\underbrace{\sum_{i\in\N}\langle\nabla F_{t}^{s}(\x_{t,i})\odot\one_{\V_{i}},\x_{t,i}-\bar{\x}_{t}\rangle}_{\text{\textcircled{3}}},
			\end{aligned}
		\end{equation} where $\odot$ denotes the coordinate-wise multiplication, i.e.,the $i$-th element of vector $\x\odot\y$ is $x_{i}y_{i}$, and $\one_{\V_{i}}$ denotes a $n$-dimensional vector where the entries at $\V_{i}$ is equal to $1$ and all others are $0$.
		
		For \textcircled{1}, we have
		\begin{equation}\label{equ:lemma_result_1}
			\begin{aligned}
				&\langle\nabla F_{t}^{s}(\bar{\x}_{t})-\sum_{i\in\N}\Big(\nabla F_{t}^{s}(\x_{t,i})\odot\one_{\V_{i}}\Big),\x_{t}^{*}-\bar{\x}_{t}\rangle\\
				&\le\left\|\nabla F_{t}^{s}(\bar{\x}_{t})-\sum_{i\in\N}\Big(\nabla  F_{t}^{s}(\x_{t,i})\odot\one_{\V_{i}}\Big)\right\|_{*}\left\|\x_{t}^{*}-\bar{\x}_{t}\right\|\\
				&\le\|\x^{*}_{t}-\bar{\x}_{t}\|\sum_{i\in\N}\Big(\|[\nabla F_{t}^{s}(\bar{\x}_{t})]_{\V_{i}}-[\nabla F_{t}^{s}(\x_{t,i})]_{\V_{i}}\|_{*}\Big)\\
				&\le\|\x^{*}_{t}-\bar{\x}_{t}\|\sum_{i\in\N}\Big(\|\nabla F_{t}^{s}(\bar{\x}_{t})-\nabla F_{t}^{s}(\x_{t,i})\|_{*}\Big)\\
				&\le\|\x^{*}_{t}-\bar{\x}_{t}\|\sum_{i\in\N}\Big(L\|\bar{\x}_{t}-\x_{t,i}\|\Big)\\
				&\le LD\sum_{i\in\N}\Big(\|\bar{\x}_{t}-\x_{t,i}\|\Big)\\
				&\le LDG\sum_{\tau=1}^{t}N^{\frac{3}{2}}\beta^{t-\tau}\eta_{\tau},
			\end{aligned}
		\end{equation} where  the fourth inequality follows from Assumption~\ref{ass:4}; the fifth comes from  $D=\sup_{\x,\y\in\C}\|\x-\y\|$ and the final inequality from Lemma~\ref{lemma:3}.
		
		For \textcircled{3}, from  Assumption~\ref{ass:4} and  Lemma~\ref{lemma:3}, 
		we have, 
		\begin{equation}\label{equ:lemma_result_3}
			\E\Big(\sum_{i\in\N}\langle\nabla F_{t}^{s}(\x_{t,i})\odot\one_{\V_{i}},\x_{t,i}-\bar{\x}_{t}\rangle\Big)\le G\sum_{i\in\N}\E\Big(\|\x_{t,i}-\bar{\x}_{t}\|\Big)\le\sum_{\tau=1}^{t}GN^{\frac{3}{2}}\beta^{t-\tau}\eta_{\tau}.
		\end{equation}
		As for \textcircled{2},
		we have, 
		\begin{equation}\label{equ:lemma_result_2}
			\begin{aligned}
				&\E\Big(\sum_{i\in\N}\langle\nabla F_{t}^{s}(\x_{t,i})\odot\one_{\V_{i}},\x_{t}^{*}-\x_{t,i}\rangle\Big)\\
				&=\E\Bigg(\E\Big(\sum_{i\in\N}\langle\nabla F_{t}^{s}(\x_{t,i})\odot\one_{\V_{i}},\x^{*}_{t}-\x_{t,i}\rangle|\x_{t,i}\Big)\Bigg)\\
				&=\E\Bigg(\E\Big(\sum_{i\in\N}\langle\tilde{\nabla} F_{t}^{s}(\x_{t,i})\odot\one_{\V_{i}},\x^{*}_{t}-\x_{t,i}\rangle|\x_{t,i}\Big)\Bigg)\\
				&=\underbrace{\sum_{i\in\N}\E\Big(\langle\widetilde{\nabla} F_{t}^{s}(\x_{t,i})\odot\one_{\V_{i}},\x^{*}_{t}-\x_{t+1,i}\rangle\Big)}_{\text{\textcircled{4}}}\\
				&+\underbrace{\sum_{i\in\N}\E\Big(\langle\widetilde{\nabla} F_{t}^{s}(\x_{t,i})\odot\one_{\V_{i}},\x_{t+1,i}-\y_{t,i}\rangle\Big)}_{\text{\textcircled{5}}}\\
				&+\underbrace{\sum_{i\in\N}\E\Big(\langle\widetilde{\nabla} F_{t}^{s}(\x_{t,i})\odot\one_{\V_{i}},\y_{t,i}-\x_{t,i}\rangle\Big)}_{\text{\textcircled{6}}}.
			\end{aligned}
		\end{equation}
		For \textcircled{4}, from Lemma~\ref{lemma:simplify}, we know that
		\begin{equation*}
			\x_{t+1,i}=\mathop{\arg\min}_{\x\in\C}\Big(-\langle\widetilde{\nabla} F_{t}^{s}(\x_{t,i})\odot\one_{\V_{i}},\x\rangle+\frac{1}{\eta_{t}}\mathcal{D}_{\phi}(\x, \y_{t,i})\Big).
		\end{equation*} 
		
		So, from the Lemma~\ref{lemma:1}, we can show that
		\begin{equation}\label{equ:lemma41}
			\eta_{t}\langle	\x_{t+1,i}-\x,-\widetilde{\nabla} F_{t}^{s}(\x_{t,i})\odot\one_{\V_{i}}\rangle\le \D_{\phi}(\x,\y_{t,i})-\D_{\phi}(\x,\x_{t+1,i})-\D_{\phi}(\x_{t+1,i},\y_{t,i}),
		\end{equation} for any $\x\in\C$.
		If we set $\x=\x^{*}_{t}$ in Eq.\eqref{equ:lemma41},  we have  
		\begin{equation}
			\label{equ:lemma_result_4}
			\begin{aligned}
				\text{\textcircled{4}}&\le \frac{1}{\eta_{t}}\sum_{i\in\N}\E\Big(\D_{\phi}(\x^{*}_{t},\y_{t,i})-\D_{\phi}(\x^{*}_{t},\x_{t+1,i})-\D_{\phi}(\x_{t+1,i},\y_{t,i})\Big)\\
				&\le \frac{1}{\eta_{t}}\sum_{i\in\N}\E\Big(\D_{\phi}(\x^{*}_{t},\y_{t,i})-\D_{\phi}(\x^{*}_{t},\x_{t+1,i})\Big)-\sum_{i\in\N}\E\left(\frac{\|\x_{t+1,i}-\y_{t,i}\|}{2\eta_{t}}\right).
			\end{aligned}
		\end{equation}
		For \textcircled{5}, by Young's inequality, we have that
		\begin{equation}
			\label{equ:lemma_result_5}
			\begin{aligned}
				\text{\textcircled{5}}&\le\sum_{i\in\N}\E\left(\frac{\|\x_{t+1,i}-\y_{t,i}\|}{2\eta_{t}}\right)+\sum_{i\in\N}\E\left(\frac{\eta_{t}}{2}\|\widetilde{\nabla} F_{t}^{s}(\x_{t,i})\otimes\one_{\V_{i}}\|_{*}\right),\\
				&\le\sum_{i\in\N}\E\left(\frac{\|\x_{t+1,i}-\y_{t,i}\|}{2\eta_{t}}\right)+\frac{\eta_{t}NG}{2}.
			\end{aligned}
		\end{equation}
		For \textcircled{6}, we have that,
		\begin{equation}
			\label{equ:lemma_result_6}
			\begin{aligned}
				\text{\textcircled{6}}&\le\Big(\sum_{i\in\N}\E\Big(\|\widetilde{\nabla} F_{t}^{s}(\x_{t,i})\otimes\one_{\V_{i}}\|_{*}\|\x_{t,i}-\y_{t,i}\|\ |\ \x_{t,i},\forall i\in\N\Big)\Big)\\
				&\le \Big(\E(\|\tilde{\nabla} F_{t}^{A}(\x_{t,i})\|_{*})\Big)\Big(\sum_{i\in\N}\E(\|	\x_{t,i}-\bar{\x}_{t}\|)+\E(\|	\y_{t,i}-\bar{\x}_{t}\|)\Big)\\
				&\le2\sum_{\tau=1}^{t}GN^{\frac{3}{2}}\beta^{t-\tau}\eta_{\tau}.
			\end{aligned}
		\end{equation}
		Merging Eq.\eqref{equ:lemma_result_1},\eqref{equ:lemma_result_2},\eqref{equ:lemma_result_3},\eqref{equ:lemma_result_4},\eqref{equ:lemma_result_5} and \eqref{equ:lemma_result_6} into Eq.\eqref{equ:lemma_result_0}, we have that
		\begin{equation*}
			\begin{aligned}
				&\Big(\frac{1-e^{-c}}{c}\Big)F_{t}(\x^{*}_{t})-F_{t}(\bar{\x}_{t})\le\langle\nabla F_{t}^{s}(\bar{\x}_{t}),\x^{*}_{t}-\bar{\x}_{t}\rangle\\
				&\le(3G+LD)\Big(\sum_{\tau=1}^{t}N^{\frac{3}{2}}\beta^{t-\tau}\eta_{\tau}\Big)+\frac{1}{\eta_{t}}\sum_{i\in\N}\E\Big(\D_{\phi}(\x^{*}_{t},\y_{t,i})-\D_{\phi}(\x^{*}_{t},\x_{t+1,i})\Big)+\frac{\eta_{t}NG}{2}.		
			\end{aligned}
		\end{equation*}
As a result, we get the result in Lemma~\ref{lemma:4}.
	\end{proof}
	Next, we prove an upper bound of $\sum_{t=1}^{T}\sum_{i\in\N}\frac{1}{\eta_{t}}\E\Big(\D_{\phi}(\x^{*}_{t},\y_{t,i})-\D_{\phi}(\x^{*}_{t},\x_{t+1,i})\Big)$, that is, 
	\begin{lemma}\label{lemma:5} If Assumption \ref{ass:2}-\ref{ass:4} hold, we have that
		\begin{equation*}
		\sum_{t=1}^{T}\sum_{i\in\N}\frac{1}{\eta_{t}}\E\Big(\D_{\phi}(\x^{*}_{t},\y_{t,i})-\D_{\phi}(\x^{*}_{t},\x_{t+1,i})\Big)\le\frac{NR^{2}}{\eta_{T+1}}+\sum_{t=1}^{T}\frac{KN} {\eta_{t+1}}\|\x^{*}_{t+1}-\x^{*}_{t}\|,
		\end{equation*} where $\x_{t}^{*}$ is the optimal solution of Eq.\eqref{equ:continuous_max}, $R^{2}:=\sup_{\x,\y\in\C}\mathcal{D}_{\phi}(\x,\y)$, and $\C$ is the constraint set in Eq.\eqref{equ:continuous_max}. 
	\end{lemma}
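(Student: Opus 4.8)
The plan is to prove Lemma~\ref{lemma:5} by a telescoping argument that pairs the ``outgoing'' Bregman term of round $t$, namely $\D_\phi(\x^*_t,\x_{t+1,i})$, against the ``incoming'' term of round $t+1$, namely $\D_\phi(\x^*_{t+1},\y_{t+1,i})$, charging the mismatch partly to the comparator drift $\|\x^*_{t+1}-\x^*_{t}\|$ and partly to a step-size residual that telescopes.

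The first step is the one-step comparison
\[
\sum_{i\in\N}\D_\phi(\x^*_{t+1},\y_{t+1,i})\ \le\ \sum_{i\in\N}\D_\phi(\x^*_{t},\x_{t+1,i})\ +\ NK\,\|\x^*_{t+1}-\x^*_{t}\|.
\]
This would follow by combining two facts. First, since $\y_{t+1,i}=\sum_{j\in\N_i\cup\{i\}}w_{ij}\x_{t+1,j}$ is a convex combination (rows of $\W$ sum to one under Assumption~\ref{ass:1}) and $\x_{t+1,j},\x^*_{t+1}\in\C$ (by Lemma~\ref{lemma:included_convex_constaint} and optimality of $\x^*_{t+1}$), the separate convexity of the Bregman divergence in its second argument (Assumption~\ref{ass:3}) together with the column-stochasticity of $\W$ (which holds since $\W$ is symmetric and doubly stochastic, Assumption~\ref{ass:1}) gives $\sum_{i\in\N}\D_\phi(\x^*_{t+1},\y_{t+1,i})\le\sum_{i\in\N}\D_\phi(\x^*_{t+1},\x_{t+1,i})$ --- the standard ``diffusion does not increase the potential'' step in decentralized mirror ascent. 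Second, the Lipschitz property of the Bregman divergence in its first argument (Assumption~\ref{ass:3+}) yields $\D_\phi(\x^*_{t+1},\x_{t+1,i})\le\D_\phi(\x^*_{t},\x_{t+1,i})+K\|\x^*_{t+1}-\x^*_{t}\|$ for each $i$; summing over the $N$ agents completes the claim.

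The second step is to telescope. Abbreviating $S_t^{\mathrm{in}}:=\sum_{i\in\N}\D_\phi(\x^*_t,\y_{t,i})$ and $S_t^{\mathrm{out}}:=\sum_{i\in\N}\D_\phi(\x^*_t,\x_{t+1,i})$, a reindexing gives
\[
\sum_{t=1}^{T}\frac{S_t^{\mathrm{in}}-S_t^{\mathrm{out}}}{\eta_t}\ =\ \frac{S_1^{\mathrm{in}}}{\eta_1}-\frac{S_T^{\mathrm{out}}}{\eta_T}+\sum_{t=1}^{T-1}\Big(\frac{S_{t+1}^{\mathrm{in}}}{\eta_{t+1}}-\frac{S_t^{\mathrm{out}}}{\eta_t}\Big).
\]
For each summand I would invoke the one-step comparison to write $\frac{1}{\eta_{t+1}}S_{t+1}^{\mathrm{in}}-\frac{1}{\eta_t}S_t^{\mathrm{out}}\le\big(\frac{1}{\eta_{t+1}}-\frac{1}{\eta_t}\big)S_t^{\mathrm{out}}+\frac{NK}{\eta_{t+1}}\|\x^*_{t+1}-\x^*_t\|$, then use $S_t^{\mathrm{out}}\le NR^2$ (with $R^2=\sup_{\x,\y\in\C}\D_\phi(\x,\y)$, valid since all iterates lie in $\C$) and the nonincreasingness of $\eta_t$ to bound $\big(\frac{1}{\eta_{t+1}}-\frac{1}{\eta_t}\big)S_t^{\mathrm{out}}\le NR^2\big(\frac{1}{\eta_{t+1}}-\frac{1}{\eta_t}\big)$; the latter sums telescopically to $NR^2\big(\frac{1}{\eta_T}-\frac{1}{\eta_1}\big)$. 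Combining with $S_1^{\mathrm{in}}\le NR^2$ and discarding the nonnegative term $-S_T^{\mathrm{out}}/\eta_T$, the $1/\eta_1$ contributions cancel and one is left with $\frac{NR^2}{\eta_T}+\sum_{t=1}^{T-1}\frac{NK}{\eta_{t+1}}\|\x^*_{t+1}-\x^*_t\|$, which is dominated by the stated bound $\frac{NR^2}{\eta_{T+1}}+\sum_{t=1}^{T}\frac{KN}{\eta_{t+1}}\|\x^*_{t+1}-\x^*_t\|$ (using $\eta_{T+1}\le\eta_T$ and that the extra summand is nonnegative). Since every inequality above holds for each realization of the algorithm's randomness and the comparators $\x^*_t$ are deterministic, taking expectations throughout gives the lemma.

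I expect the only delicate point to be the first inequality in step one: verifying that passing from $\sum_{i\in\N}\D_\phi(\x^*_{t+1},\y_{t+1,i})$ to $\sum_{i\in\N}\D_\phi(\x^*_{t+1},\x_{t+1,i})$ really costs nothing, which is exactly where the separate convexity of $\D_\phi$ (Assumption~\ref{ass:3}) and the doubly-stochastic structure of $\W$ (Assumption~\ref{ass:1}, i.e.\ both row and column sums equal one) must be used in tandem. Everything else is step-size bookkeeping, which only requires $\eta_t$ nonincreasing --- a property automatically satisfied by the constant schedule selected in Remark~\ref{Remark:final}.
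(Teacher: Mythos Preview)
Your proposal is correct and takes essentially the same approach as the paper: both use separate convexity of $\D_\phi$ together with double stochasticity of $\W$ to show that the diffusion step does not increase the aggregate Bregman potential, invoke Assumption~\ref{ass:3+} for the comparator drift term, and telescope against $NR^2$ under a nonincreasing step-size schedule. The paper simply organizes the computation as a four-term decomposition (telescoping, drift, diffusion, step-size change) rather than packaging the diffusion and drift into a single one-step comparison as you do, but the ingredients and the logic are identical.
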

	\begin{proof}
		\begin{equation*}
			\begin{aligned}
				&\sum_{t=1}^{T}\sum_{i\in\N}\frac{1}{\eta_{t}}\E\Big(\D_{\phi}(\x^{*}_{t},\y_{t,i})-\D_{\phi}(\x^{*}_{t},\x_{t+1,i})\Big)\\
				&=\underbrace{\sum_{t=1}^{T}\sum_{i\in\N}\Big(\frac{1}{\eta_{t}}\E(\D_{\phi}(\x^{*}_{t},\y_{t,i}))-\frac{1}{\eta_{t+1}}\E(\D_{\phi}(\x^{*}_{t+1},\y_{t+1,i}))\Big)}_{\text{\textcircled{1}}}\\
				&+\underbrace{\sum_{t=1}^{T}\sum_{i\in\N}\Bigg(\frac{1}{\eta_{t+1}}\E\Big(\D_{\phi}(\x^{*}_{t+1},\y_{t+1,i})-\D_{\phi}(\x^{*}_{t},\y_{t+1,i})\Big)\Bigg)}_{\text{\textcircled{2}}}\\
				&+\underbrace{\sum_{t=1}^{T}\sum_{i\in\N}\Bigg(\frac{1}{\eta_{t+1}}\E\Big(\D_{\phi}(\x^{*}_{t},\y_{t+1,i})-\D_{\phi}(\x^{*}_{t},\x_{t+1,i})\Big)\Bigg)}_{\text{\textcircled{3}}}\\
				&+\underbrace{\sum_{t=1}^{T}\sum_{i\in\N}\Big(\frac{1}{\eta_{t+1}}-\frac{1}{\eta_{t}}\Big)\E\Big(\D_{\phi}(\x^{*}_{t},\x_{t+1,i})\Big)}_{\text{\textcircled{4}}}.
			\end{aligned}
		\end{equation*}
		Firstly, we have \textcircled{1}$\le\frac{N R^{2}}{\eta_{1}}$ and \textcircled{2}$\le \sum_{t=1}^{T}\frac{KN} {\eta_{t+1}}\|\x^{*}_{t+1}-\x^{*}_{t}\|$ from Assumption~\ref{ass:3+}.
		
		Then, from the separate convexity, we have 
		\begin{equation*}
			\begin{aligned}
				&\text{\textcircled{3}}=\sum_{t=1}^{T}\sum_{i\in\N}\Bigg(\frac{1}{\eta_{t+1}}\E\Big(\D_{\phi}(\x^{*}_{t},\y_{t+1,i})-\D_{\phi}(\x^{*}_{t},\x_{t+1,i})\Big)\Bigg)\\
				&=\sum_{t=1}^{T}\sum_{i\in\N}\Bigg(\frac{1}{\eta_{t+1}}\E\Big(\D_{\phi}(\x^{*}_{t},\sum_{j\in\N_{i}\cup\{i\}}w_{ij}\x_{t+1,j})-\D_{\phi}(\x^{*}_{t},\x_{t+1,i})\Big)\Bigg)\\		
				&\le\sum_{t=1}^{T}\Bigg(\frac{1}{\eta_{t+1}}\E\Big(\sum_{i\in\N}\sum_{j\in\N_{i}\cup\{i\}}\Big(w_{ij}\D_{\phi}(\x^{*}_{t},\x_{t+1,j})\Big)-\sum_{i\in\N}\D_{\phi}(\x^{*}_{t},\x_{t+1,i})\Big)\Bigg)\\		
				&=\sum_{t=1}^{T}\Bigg(\frac{1}{\eta_{t+1}}\E\Big(\sum_{i\in\N}\Big((\sum_{j\in\N_{i}\cup\{i\}}w_{ji})\D_{\phi}(\x^{*}_{t},\x_{t+1,i})\Big)-\sum_{i\in\N}D_{\phi}(\x^{*}_{t},\x_{t+1,i})\Big)\Bigg)\\		
				&=0,
			\end{aligned}
		\end{equation*} where the first inequality follows from Assumption~\ref{ass:3}, and the third inequality is due to $w_{ij}=w_{ji}$.
		
		Moreover, we have \textcircled{4}$\le NR^{2}\Big(\frac{1}{\eta_{T+1}}-\frac{1}{\eta_{1}}\Big)$. We finally get 
		\begin{equation*}
		\sum_{t=1}^{T}\sum_{i\in\N}\frac{1}{\eta_{t}}\E\Big(D_{\phi}(\x^{*},\y_{t,i})-D_{\phi}(\x^{*},\x_{t+1,i})\Big)\le \frac{NR^{2}}{\eta_{T+1}}+\sum_{t=1}^{T}\frac{KN} {\eta_{t+1}}\|\x^{*}_{t+1}-\x^{*}_{t}\|.
		\end{equation*}
	\end{proof}
	As a result, we can prove the following Lemma:
	\begin{lemma}\label{thm:3}
	If Assumption \ref{ass:2}-\ref{ass:4} hold and each set function $f_{t}$ is monotone submodular with curvature $c$ for any $t\in[T]$, then
		\begin{equation*}
			\begin{aligned}
				&\Big(\frac{1-e^{-c}}{c}\Big)\sum_{t=1}^{T}F_{t}(\x^{*}_{t})-\sum_{t=1}^{T}\E\Big(F_{t}\big(\sum_{i\in\N}\x_{t,i}\odot\one_{\V_{i}}\big)\Big)\\
				&\le(4G+LDG)\Big(\sum_{t=1}^{T}\sum_{\tau=1}^{t}N^{\frac{3}{2}}\beta^{t-\tau}\eta_{\tau}\Big)+\frac{NR^{2}}{\eta_{T+1}}+\sum_{t=1}^{T}\frac{KN} {\eta_{t+1}}\|\x^{*}_{t+1}-\x^{*}_{t}\|+\frac{NG}{2}\sum_{t=1}^{T}\eta_{t}.		
			\end{aligned}
		\end{equation*}
	\end{lemma}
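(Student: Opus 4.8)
The plan is to derive the statement by composing \cref{lemma:4} and \cref{lemma:5} and then absorbing one extra error term that accounts for the passage from the consensus average $\bar{\x}_{t}=\frac1N\sum_{i}\x_{t,i}$ to the assembled iterate $\u_{t}:=\sum_{i\in\N}\x_{t,i}\odot\one_{\V_{i}}$, which is the vector underlying the actions actually played. \cref{lemma:4} already upper bounds $\big(\tfrac{1-e^{-c}}{c}\big)\sum_{t}F_{t}(\x_{t}^{*})-\sum_{t}\E(F_{t}(\bar{\x}_{t}))$ by the four displayed quantities together with the residual Bregman sum $\sum_{t}\sum_{i\in\N}\tfrac1{\eta_{t}}\E\big(\mathcal{D}_{\phi}(\x_{t}^{*},\y_{t,i})-\mathcal{D}_{\phi}(\x_{t}^{*},\x_{t+1,i})\big)$, and \cref{lemma:5} bounds that residual by $\tfrac{NR^{2}}{\eta_{T+1}}+\sum_{t}\tfrac{KN}{\eta_{t+1}}\|\x_{t+1}^{*}-\x_{t}^{*}\|$. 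Plugging the second bound into the first yields precisely the target inequality, except with $F_{t}(\bar{\x}_{t})$ in place of $F_{t}(\u_{t})$ and with leading constant $(3G+LDG)$ instead of $(4G+LDG)$.

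It therefore suffices to control $\sum_{t}\E\big(F_{t}(\bar{\x}_{t})-F_{t}(\u_{t})\big)$ by a further term of the same order. Because $\{\V_{i}\}_{i\in\N}$ partitions $\V$, we have $\sum_{i}\z\odot\one_{\V_{i}}=\z$ for every $\z$, hence $\bar{\x}_{t}-\u_{t}=\sum_{i\in\N}(\bar{\x}_{t}-\x_{t,i})\odot\one_{\V_{i}}$, and both $\bar{\x}_{t},\u_{t}$ lie in $[0,1]^{n}$. Writing $F_{t}(\bar{\x}_{t})-F_{t}(\u_{t})=\int_{0}^{1}\big\langle\nabla F_{t}(\u_{t}+s(\bar{\x}_{t}-\u_{t})),\,\bar{\x}_{t}-\u_{t}\big\rangle\,\mathrm{d}s$, splitting the inner product blockwise as $\sum_{i}\big\langle[\nabla F_{t}(\cdot)]_{\V_{i}},[\bar{\x}_{t}-\x_{t,i}]_{\V_{i}}\big\rangle$, and applying Hölder's inequality on each block gives $F_{t}(\bar{\x}_{t})-F_{t}(\u_{t})\le\big(\sup_{\x\in[0,1]^{n}}\|\nabla F_{t}(\x)\|_{*}\big)\sum_{i\in\N}\|\bar{\x}_{t}-\x_{t,i}\|$. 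By the monotonicity of the multilinear extension (\cref{lemma:th1:1}), $\textbf{0}_{n}\le\nabla F_{t}(\x)\le\nabla F_{t}(\textbf{0}_{n})$, so $\|\nabla F_{t}(\x)\|_{*}\le\|\nabla F_{t}(\textbf{0}_{n})\|_{*}$; and since $\nabla F_{t}^{s}(\textbf{0}_{n})=\tfrac{1-e^{-c}}{c}\nabla F_{t}(\textbf{0}_{n})$ while $\|\nabla F_{t}^{s}(\textbf{0}_{n})\|_{*}\le G$ (Jensen's inequality applied to the unbiasedness in \cref{ass:4}), this supremum is $\le\tfrac{c}{1-e^{-c}}G\le 2G$. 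Taking expectations, invoking \cref{lemma:3} (with the time index shifted from $t+1$ to $t$) to bound $\E\sum_{i\in\N}\|\bar{\x}_{t}-\x_{t,i}\|\le\sum_{\tau}N^{3/2}\beta^{t-\tau}\eta_{\tau}G$, and summing over $t\in[T]$ produces an extra term proportional to $\sum_{t}\sum_{\tau}N^{3/2}\beta^{t-\tau}\eta_{\tau}$, which combines with the $(3G+LDG)$ coefficient from the first step to give $(4G+LDG)$ and hence the claim.

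The main obstacle is this second step: $F_{t}$ here is the multilinear extension itself, which is neither concave nor convex, so one cannot collapse the line integral into a single gradient inequality as was done for the surrogate $F_{t}^{s}$ inside \cref{lemma:4}; the argument must go through plain Lipschitz continuity of $F_{t}$, and the crucial point is that the relevant Lipschitz constant is controlled by $G$ — this is exactly where the structural facts $\nabla F_{t}\ge\textbf{0}_{n}$ and $\nabla F_{t}(\x)\le\nabla F_{t}(\textbf{0}_{n})$ (rather than, say, an a~priori bound on $\max_{a}f_{t}(\{a\})$, which \cref{ass:4} does not directly supply) are indispensable. A secondary bookkeeping point is that the new error term must retain the geometric weight $\beta^{t-\tau}$, so that the double sum $\sum_{t}\sum_{\tau}\beta^{t-\tau}\eta_{\tau}$ can later be collapsed into an $O\big((1-\beta)^{-1}\big)$ factor once the step sizes are chosen; this is automatic because \cref{lemma:3} already supplies that decay. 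Everything else — the telescoping of the Bregman terms and the cancellation coming from separate convexity and the double stochasticity of $\W$ — is already encapsulated in \cref{lemma:4} and \cref{lemma:5}.
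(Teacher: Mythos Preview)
Your proposal is correct and follows essentially the same route as the paper: combine \cref{lemma:4} with \cref{lemma:5}, then control the passage from $F_{t}(\bar{\x}_{t})$ to $F_{t}\big(\sum_{i}\x_{t,i}\odot\one_{\V_{i}}\big)$ by Lipschitz continuity of $F_{t}$ together with \cref{lemma:3}. The paper simply asserts ``from Assumption~\ref{ass:4}, $|F_{t}(\x)-F_{t}(\y)|\le G\|\x-\y\|$'' without justification (note Assumption~\ref{ass:4} concerns $F_{t}^{s}$, not $F_{t}$), whereas you supply an explicit derivation via $\nabla F_{t}(\x)\le\nabla F_{t}(\mathbf{0}_{n})$ and $\nabla F_{t}^{s}(\mathbf{0}_{n})=\tfrac{1-e^{-c}}{c}\nabla F_{t}(\mathbf{0}_{n})$; this is a genuine improvement in rigor, at the cost of a slightly larger constant (your bound yields a factor $\tfrac{c}{1-e^{-c}}G\le 2G$ rather than $G$, so the final leading constant is somewhat larger than the stated $(4G+LDG)$---but the paper's own bookkeeping of the $G$ factors coming from \cref{lemma:3} is already loose here, cf.\ the corrected $(4G^{2}+LDG)$ in \cref{thm:31}).
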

	\begin{proof}
		From the Lemma~\ref{lemma:4} and  Lemma~\ref{lemma:5}, we have that
		\begin{equation*}
			\begin{aligned}
				&\Big(\frac{1-e^{-c}}{c}\Big)\sum_{t=1}^{T}F_{t}(\x^{*}_{t})-\sum_{t=1}^{T}\E\Big(F_{t}(\bar{\x}_{t})\Big)\\
				&\le (3G+LDG)\Big(\sum_{t=1}^{T}\sum_{\tau=1}^{t}N^{\frac{3}{2}}\beta^{t-\tau}\eta_{\tau}\Big)+\frac{NR^{2}}{\eta_{T+1}}+\sum_{t=1}^{T}\frac{KN} {\eta_{t+1}}\|\x^{*}_{t+1}-\x^{*}_{t}\|+\frac{NG}{2}\sum_{t=1}^{T}\eta_{t}.	
			\end{aligned}
		\end{equation*}
		From the Assumption~\ref{ass:4}, we also can show that $|F_{t}(\x)-F_{t}(\y)|\le G\|\x-\y\|$ for any $t\in[T]$ such that we have $|F_{t}\big(\sum_{i\in\N}\x_{t,i}\odot\one_{\V_{i}}\big)-F_{t}(\bar{\x}_{t})|\le G\|\sum_{i\in\N}\x_{t,i}\odot\one_{\V_{i}}-\bar{\x}_{t}\|\le G\sum_{i\in\N}\|\x_{t,i}-\bar{\x}_{t}\|\le G\Big(\sum_{\tau=1}^{t}N^{\frac{3}{2}}\beta^{t-\tau}\eta_{\tau}\Big)$. Thus, we have that
\begin{equation*}
			\begin{aligned}
				&\Big(\frac{1-e^{-c}}{c}\Big)\sum_{t=1}^{T}F_{t}(\x^{*}_{t})-\sum_{t=1}^{T}\E\Big(F_{t}(\sum_{i\in\N}\x_{t,i}\odot\one_{\V_{i}})\Big)\\
&\le \Big(\frac{1-e^{-c}}{c}\Big)\sum_{t=1}^{T}F_{t}(\x^{*}_{t})-\sum_{t=1}^{T}\E\Big(F_{t}(\bar{\x}_{t})\Big)+\big|\sum_{t=1}^{T}\E\Big(F_{t}\big(\sum_{i\in\N}\x_{t,i}\odot\one_{\V_{i}}\big)-F_{t}\big(\bar{\x}_{t}\big)\Big)\big|\\
            &\le (4G+LDG)\Big(\sum_{t=1}^{T}\sum_{\tau=1}^{t}N^{\frac{3}{2}}\beta^{t-\tau}\eta_{\tau}\Big)+\frac{NR^{2}}{\eta_{T+1}}+\sum_{t=1}^{T}\frac{KN} {\eta_{t+1}}\|\x^{*}_{t+1}-\x^{*}_{t}\|+\frac{NG}{2}\sum_{t=1}^{T}\eta_{t}.	
			\end{aligned}
		\end{equation*}
\end{proof}
From \citet{calinescu2011maximizing,chekuri2014submodular}, we know that the optimal value of continuous problem Eq.\eqref{equ:continuous_max} is equal to the optimal value of the corresponding discrete submodular maximization Eq.\eqref{equ:problem_t}, so we can set $\x^{*}_{t}:=\one_{\mathcal{A}_{t}^{*}}$ where $\mathcal{A}_{t}^{*}$ is the maximizer of Eq.\eqref{equ:problem_t}.

Next, we show a relationship between $\E\big(f_{t}\big(\cup_{i\in\N}\{a_{t,i}\}\big)\big)$ and $\E\big(F_{t}\big(\sum_{i\in\N}\x_{t,i}\odot\one_{\V_{i}}\big)\big)$.
	\begin{lemma}\label{lemma:rounding1}
			If the function $f_{t}$ is monotone submodular and $a_{t,i}$ is the action taken by the agent $i\in\N$ at time $t$, then we have 
		\begin{equation*}
		\E\Big(f_{t}\big(\cup_{i\in\N}\{a_{t,i}\}\big)\Big)\ge\E\Big(\big(F_{t}\big(\sum_{i\in\N}\x_{t,i}\odot\one_{\V_{i}}\big)\Big).  
		\end{equation*}
	\end{lemma}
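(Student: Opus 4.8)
The plan is to argue conditionally on the local beliefs $\{\x_{t,i}\}_{i\in\N}$ and then take an outer expectation. Write $\mathbf{p}:=\sum_{i\in\N}\x_{t,i}\odot\one_{\V_{i}}$ and $g_{i}:=\sum_{a\in\V_{i}}[\x_{t,i}]_{a}$; by Lemma~\ref{lemma:included_convex_constaint} each $\x_{t,i}\in\C$, hence $g_{i}\le1$, and by the rounding step agent $i$ selects $a_{t,i}=a$ with probability $q_{a}:=[\x_{t,i}]_{a}/g_{i}$, so $\sum_{a\in\V_{i}}q_{a}=1$ and $[\x_{t,i}]_{a}=g_{i}q_{a}$. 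By Definition~\ref{def1:multi-linear}, $F_{t}(\mathbf{p})=\E_{\mathcal{R}\sim\mathbf{p}}(f_{t}(\mathcal{R}))$; since the support of $\mathbf{p}$ is partitioned by the disjoint sets $\V_{i}$, the auxiliary random set decomposes as $\mathcal{R}=\cup_{i\in\N}\mathcal{R}_{i}$ with $\mathcal{R}_{i}:=\mathcal{R}\cap\V_{i}$ mutually independent and $\Pr(a\in\mathcal{R}_{i})=[\x_{t,i}]_{a}$. The idea is to replace, one group at a time, the independent rounding $\mathcal{R}_{i}$ by the one-action-per-agent draw $\{a_{t,i}\}$ that the algorithm actually performs, and show the expected utility never decreases.

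The key ingredient is the following single-group inequality: for any monotone submodular $f_{t}$, any $i\in\N$ and any set $B$ disjoint from $\V_{i}$,
\[
\E_{\mathcal{R}_{i}}\!\big(f_{t}(B\cup\mathcal{R}_{i})\big)\ \le\ \E_{a_{t,i}}\!\big(f_{t}(B\cup\{a_{t,i}\})\big)=\sum_{a\in\V_{i}}q_{a}\,f_{t}(B\cup\{a\}).
\]
I would prove this using subadditivity of marginal contributions (a consequence of submodularity): for every realization, $f_{t}(B\cup\mathcal{R}_{i})-f_{t}(B)\le\sum_{a\in\mathcal{R}_{i}}\big(f_{t}(B\cup\{a\})-f_{t}(B)\big)$, so taking expectations yields
\[
\E_{\mathcal{R}_{i}}\big(f_{t}(B\cup\mathcal{R}_{i})\big)\le f_{t}(B)+\sum_{a\in\V_{i}}[\x_{t,i}]_{a}\big(f_{t}(B\cup\{a\})-f_{t}(B)\big)=(1-g_{i})f_{t}(B)+\sum_{a\in\V_{i}}[\x_{t,i}]_{a}f_{t}(B\cup\{a\}).
\]
Now invoke monotonicity, $f_{t}(B)\le f_{t}(B\cup\{a\})$, together with $1-g_{i}\ge0$ and $\sum_{a\in\V_{i}}q_{a}=1$ to get $(1-g_{i})f_{t}(B)\le(1-g_{i})\sum_{a\in\V_{i}}q_{a}f_{t}(B\cup\{a\})$; substituting this bound and using $[\x_{t,i}]_{a}=g_{i}q_{a}$ collapses the right-hand side to $\sum_{a\in\V_{i}}q_{a}f_{t}(B\cup\{a\})$, as claimed.

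Finally I would telescope across the $N$ disjoint groups. Define hybrid product distributions interpolating between $\mathcal{R}=\cup_{i}\mathcal{R}_{i}$ and $\cup_{i}\{a_{t,i}\}$ by swapping the groups in the order $1,2,\dots,N$; at the $k$-th swap, condition on all remaining randomness and take $B=\{a_{t,1}\}\cup\dots\cup\{a_{t,k-1}\}\cup\mathcal{R}_{k+1}\cup\dots\cup\mathcal{R}_{N}$, which is disjoint from and independent of the part being rerandomized, so the single-group inequality applies pointwise in $B$ and hence in expectation. Chaining the $N$ inequalities gives $F_{t}(\mathbf{p})\le\E\big(f_{t}(\cup_{i\in\N}\{a_{t,i}\})\big)$ conditionally on $\{\x_{t,i}\}$; an outer expectation over the $\x_{t,i}$ then yields the statement, recalling that $\mathbf{p}=\sum_{i\in\N}\x_{t,i}\odot\one_{\V_{i}}$. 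The main thing to be careful about is the bookkeeping in the telescoping step — in particular that the frozen coordinates $B$ are independent of the group currently being swapped, so that the conditional single-group bound may be applied and then integrated; once that is set up, the whole argument rests only on submodularity (subadditivity of marginal gains) and monotonicity.
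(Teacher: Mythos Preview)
Your argument is correct and follows essentially the same route as the paper: both proofs peel off one block $\V_{i}$ at a time, use submodularity to bound the marginal contribution of the independent rounding $\mathcal{R}_{i}$ by $\sum_{a\in\V_{i}}[\x_{t,i}]_{a}\big(f_{t}(B\cup\{a\})-f_{t}(B)\big)$, and then use monotonicity together with $g_{i}\le1$ to pass to the single-draw distribution $\{q_{a}\}$. The paper phrases this as an induction on $N$ (and applies the monotonicity scaling $\x_{t,i}\mapsto\x_{t,i}/s_{i}$ for all groups up front), whereas you phrase it as a hybrid/telescoping argument with the monotonicity absorbed into each single-group step; unrolled, these are the same proof.
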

\begin{proof}
	We prove this lemma by induction on $N=|\N|$. When $N=1$, for any $t\in[T]$, we have that
	\begin{equation}\label{equ:rounding1}
		\begin{aligned}
			\E(F_{t}(\x_{t,1}))&\le \E(F_{t}(\frac{\x_{t,1}}{\|\x_{t,1}\|_{1}})\le \E_{\mathcal{R}\sim\frac{\x_{t,1}}{\|\x_{t,1}\|_{1}}}\Big(f_{t}(\mathcal{R})\Big)\\
			&\le \E_{\mathcal{R}\sim\frac{\x_{t,1}}{\|\x_{t,1}\|_{1}}}\Big(\sum_{a\in\mathcal{R}}f_{t}(a)\Big)\\
			&=\sum_{a\in\V}\frac{[\x_{t,1}]_{a}}{\|\x_{t,1}\|_{1}}f_{t}(a)\\
			&=\E(f_{t}(a_{t,1})),	\end{aligned}
	\end{equation} where the first inequality follows from the monotonicity of $f_{t}$ and $\|\x_{t,1}\|_{1}\le1$; the second one from  the submodularity of $f_{t}$ and Line 5-6 in Algorithm~\ref{alg:BDOMA}.
	
Now let $N>1$. For any $\mathcal{R}\subseteq\V$, we denote $\mathcal{R}_{1}=\mathcal{R}\setminus\V_{N}$ and $s_{i}=\sum_{a\in\V_{i}}[\x_{t,i}]_{a}$ for any $i\in\N$. Then, we have that\begin{equation*}
	\begin{aligned}
		\E\Big(F_{t}\big(\sum_{i\in\N}\x_{t,i}\odot\one_{\V_{i}}\big)\Big)&\le 	\E\Big(F_{t}\big(\sum_{i\in\N}\frac{\x_{t,i}}{s_{i}}\odot\one_{\V_{i}}\big)\Big)=\E_{\mathcal{R}\sim\sum_{i\in\N}\frac{\x_{t,i}}{s_{i}}\odot\one_{\V_{i}}}\Big(f_{t}(\mathcal{R})\Big)\\
		&=\E_{\mathcal{R}\sim\sum_{i\in\N}\frac{\x_{t,i}}{s_{i}}\odot\one_{\V_{i}}}\Big(f_{t}(\mathcal{R}_{1})+f_{t}(\mathcal{R})-f_{t}(\mathcal{R}_{1})\Big)\\
		&=\E_{\mathcal{R}\sim\sum_{i\in\N}\frac{\x_{t,i}}{s_{i}}\odot\one_{\V_{i}}}\Big(f_{t}(\mathcal{R})-f_{t}(\mathcal{R}_{1})\Big)+\E_{\mathcal{R}\sim\sum_{i\in\N}\frac{\x_{t,i}}{s_{i}}\odot\one_{\V_{i}}}\Big(f_{t}(\mathcal{R}_{1})\Big)\\
	&=\E\Bigg(\E_{\mathcal{R}\sim\sum_{i\in\N}\frac{\x_{t,i}}{s_{i}}\odot\one_{\V_{i}}}\Big(f_{t}(\mathcal{R})-f_{t}(\mathcal{R}_{1})\Big|\mathcal{R}_{1}\Big)\Bigg)+\E_{\mathcal{R}\sim\sum_{i\in\N}\frac{\x_{t,i}}{s_{i}}\odot\one_{\V_{i}}}\Big(f_{t}(\mathcal{R}_{1})\Big)\\
	&\le\E\Bigg(\E_{\mathcal{R}\sim\sum_{i\in\N}\frac{\x_{t,i}}{s_{i}}\odot\one_{\V_{i}}}\Big(f_{t}(\mathcal{R}_{1}\cup\{a_{t,N}\})-f_{t}(\mathcal{R}_{1})\Big|\mathcal{R}_{1}\Big)\Bigg)+\E_{\mathcal{R}\sim\sum_{i\in\N}\frac{\x_{t,i}}{s_{i}}\odot\one_{\V_{i}}}\Big(f_{t}(\mathcal{R}_{1})\Big)\\	&=\E_{\mathcal{R}\sim\sum_{i\in\N}\frac{\x_{t,i}}{s_{i}}\odot\one_{\V_{i}}}\Big(f_{t}(\mathcal{R}_{1}\cup\{a_{t,N}\})\Big)\\
	&\le \E\Big(f_{t}\big(\cup_{i\in\N}\{a_{t,i}\}\big)\Big),
	\end{aligned}
\end{equation*} where the first inequality follows from the monotonicity of $f_{t}$ and $s_{i}\le1$ and we get the second inequality follows from repeating the proof in Eq.\eqref{equ:rounding1} because $f_{t}(\mathcal{R})-f_{t}(\mathcal{R}_{1})=f_{t}(\mathcal{R})-f_{t}(\mathcal{R}\setminus\V_{N})$ is a submodular function over $\V_{N}$ for any fixed $\mathcal{R}\subseteq\V$ and $a_{t,N}$ is selected from the set $\V_{N}$ according to $\frac{[\x_{t,N}]_{\V_{N}}}{s_{N}}$.
\end{proof}
Merging Lemma~\ref{lemma:rounding1} into Lemma~\ref{thm:3}, we can get that
	\begin{equation*}
	\begin{aligned}
		&\Big(\frac{1-e^{-c}}{c}\Big)\sum_{t=1}^{T}f_{t}(\mathcal{A}^{*}_{t})-\sum_{t=1}^{T}\E\Big(f_{t}\big(\cup_{i\in\N}\{a_{t,i}\}\big)\Big)\\
		&\le(4G+LDG)\Big(\sum_{t=1}^{T}\sum_{\tau=1}^{t}N^{\frac{3}{2}}\beta^{t-\tau}\eta_{\tau}\Big)+\frac{NR^{2}}{\eta_{T+1}}+\sum_{t=1}^{T}\frac{KN} {\eta_{t+1}}\|\one_{\mathcal{A}_{t+1}^{*}}-\one_{\mathcal{A}_{t}^{*}}\|+\frac{NG}{2}\sum_{t=1}^{T}\eta_{t}.		
	\end{aligned}
\end{equation*}
We know that all norms in finite-dimensional real space are equivalent~\citep{lax2014functional,zhao2024minimax}, so $\|\x\|\le C_{2}\|\x\|_{1}$.
Finally,  we can verify the Eq.\eqref{equ:thm1_equation_uncomplete} in Theorem~\ref{thm:final_one}, that is,
	\begin{equation*}
	\begin{aligned}
		&\Big(\frac{1-e^{-c}}{c}\Big)\sum_{t=1}^{T}f_{t}(\mathcal{A}^{*}_{t})-\sum_{t=1}^{T}\E\Big(f_{t}\big(\cup_{i\in\N}\{a_{t,i}\}\big)\Big)\\
		&\le(4G+LDG)\Big(\sum_{t=1}^{T}\sum_{\tau=1}^{t}N^{\frac{3}{2}}\beta^{t-\tau}\eta_{\tau}\Big)+\frac{NR^{2}}{\eta_{T+1}}+\sum_{t=1}^{T}\frac{KNC_{2}} {\eta_{t+1}}|\mathcal{A}_{t+1}^{*}\Delta\mathcal{A}_{t}^{*}|+\frac{NG}{2}\sum_{t=1}^{T}\eta_{t}.		
	\end{aligned}
\end{equation*}
	\section{Proof of Theorem~\ref{thm:projection}}\label{append:proof4}
	In this section, we prove the Theorem~\ref{thm:projection}.
	\begin{proof}
	When $g(x)=x\log(x)$, we know that, for any $\mathbf{b},\mathbf{y}\in(0,1)^{m}$ 
	\begin{equation*}
	\D_{g,m}(\mathbf{b}, \mathbf{y})=\sum_{i=1}^{m}\Big(b_{i}\log(\frac{b_{i}}{y_{i}})\Big)-\sum_{i=1}^{m}b_{i}+\sum_{i=1}^{m}y_{i}.
	\end{equation*}
	Next, we consider the Lagrangian function, for any fixed $\y,\mathbf{z}\in(0,1)^{n}$,
	\begin{equation*}
L(\mathbf{b},\lambda)=\sum_{i=1}^{m}z_{i}b_{i}+\sum_{i=1}^{m}\Big(b_{i}\log(\frac{b_{i}}{y_{i}})\Big)-\sum_{i=1}^{m}b_{i}+\sum_{i=1}^{m}y_{i}+\lambda(\sum_{i=1}^{m}b_{i}-1).
	\end{equation*}
Then, we have that
\begin{equation}\label{equ:dual_lamba}
	\frac{\partial L(\mathbf{b},\lambda)}{\partial b_{i}}=z_{i}+\log(\frac{b_{i}}{y_{i}})+\lambda,\ \forall i\in[m].
\end{equation}
Setting all equations in Eq.\eqref{equ:dual_lamba} to $0$, we can get $b_{i}=y_{i}\exp(-z_{i})\exp(-\lambda)$ for any $i\in[m]$ such that $L(\lambda)=-\sum_{i=1}^{m}\Big(y_{i}\exp(-z_{i})\Big)\exp(-\lambda)-\lambda$. When $\sum_{i=1}^{m}\Big(y_{i}\exp(-z_{i})\Big)\le 1$, $L(0)=\max_{\lambda\ge0}L(\lambda)$ such that the optimal solution $b_{i}^{*}=y_{i}\exp(-z_{i})$. Similarly, when $\sum_{i=1}^{m}\Big(y_{i}\exp(-z_{i})\Big)>1$,
\begin{equation*}
L\Bigg(\log\Big(\sum_{i=1}^{m}\big(y_{i}\exp(-z_{i})\big)\Big)\Bigg)=\max_{\lambda\ge0}L(\lambda),
\end{equation*} so $b_{i}^{*}=y_{i}\exp(-z_{i})\exp\Bigg(-\log\Big(\sum_{i=1}^{m}\big(y_{i}\exp(-z_{i})\big)\Big)\Bigg)=\frac{y_{i}\exp(-z_{i})}{\sum_{i=1}^{m}\big(y_{i}\exp(-z_{i})\big)}$.
\end{proof}
	\section{Proof of Theorem~\ref{thm:final_one1}}\label{append:proof5}
In this  section, we  present the proof of Theorem~\ref{thm:final_one1}. Specially, we assume $\|\cdot\|$ is $l_{1}$ norm in this section.
Like the Appendix~\ref{appendix:1}, we can show that 
\begin{lemma}\label{lemma:included_convex_constaint2}
	In Algorithm~\ref{alg:BDOEA}, if we set the constraint $\C=\{\x\in[0,1]^{n}: \sum_{a\in\V_{i}}x_{a}\le1,\forall i\in\N\}$ and Assumption~\ref{ass:1} holds,we have that, for any $t\in[T]$ and $i\in\N$,  we have that, for any $t\in[T]$ and $i\in\N$, $\x_{t,i}\in\C$ and $\y_{t,i}\in\C$.
\end{lemma}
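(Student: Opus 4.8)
The plan is to mirror the induction in the proof of Lemma~\ref{lemma:included_convex_constaint}, now tracking the operations of one round of Algorithm~\ref{alg:BDOEA}: the mixing step (Line~7), the doubly-stochastic aggregation (Lines~8--9), the trivial copy for coordinates outside $\V_{i}$ (Line~13), and the closed-form entropic update for coordinates inside $\V_{i}$ (Lines~14--18). The induction is on the time index $t$. For the base case, the initialization in Line~2 gives $\sum_{a\in\V_{i}}[\x_{1,i}]_{a}=1$ and $[\x_{1,i}]_{a}=0$ for $a\notin\V_{i}$, so $\x_{1,i}\in\C$ for every $i\in\N$.

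First I would isolate the one genuinely new ingredient, namely that the mixing step preserves membership in $\C$: if $\x\in\C$, then for every $j\in\N$,
\begin{equation*}
\sum_{a\in\V_{j}}\Big[(1-\gamma)\x+\tfrac{\gamma}{n}\one_{n}\Big]_{a}=(1-\gamma)\sum_{a\in\V_{j}}x_{a}+\tfrac{\gamma}{n}|\V_{j}|\le(1-\gamma)+\tfrac{\gamma}{n}\cdot n=1,
\end{equation*}
where I used $|\V_{j}|\le n$, which holds since the sets $\V_{1},\dots,\V_{N}$ are disjoint and $\sum_{j\in\N}|\V_{j}|=|\V|=n$. Hence $\hat{\x}_{t,i}\in\C$ whenever $\x_{t,i}\in\C$. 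Since $\W$ is doubly stochastic (Assumption~\ref{ass:1}), $\sum_{j\in\N_{i}\cup\{i\}}w_{ij}=1$, so $\y_{t,i}$ in Line~9 is a convex combination of points of the convex set $\C$, giving $\y_{t,i}\in\C$.

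Next, assuming $\x_{t,i}\in\C$ (hence $\hat{\x}_{t,i}\in\C$ and $\y_{t,i}\in\C$) for all $i$, I would verify $\x_{t+1,i}\in\C$ block by block. For a block $\V_{j}$ with $j\neq i$, Line~13 gives $\sum_{a\in\V_{j}}[\x_{t+1,i}]_{a}=\sum_{a\in\V_{j}}[\y_{t,i}]_{a}\le1$. For the block $\V_{i}$ itself, Theorem~\ref{thm:projection} identifies $[\x_{t+1,i}]_{\V_{i}}$ produced by Lines~14--18 with the minimizer of the KL-regularized problem over $\{\mathbf{b}\in[0,1]^{n_{i}}:\|\mathbf{b}\|_{1}\le1\}$: in the case $\mathrm{SUM}_{1}\le1$ we get $\sum_{a\in\V_{i}}[\x_{t+1,i}]_{a}=\mathrm{SUM}_{1}\le1$, and in the complementary case the explicit normalization forces $\sum_{a\in\V_{i}}[\x_{t+1,i}]_{a}=1$. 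Non-negativity and the coordinatewise upper bound $\le1$ follow because each entry of $\y_{t,i}$ is non-negative and the exponential factors are positive, while the per-block sums are at most $1$. Thus $\x_{t+1,i}\in\C$, and applying once more the mixing observation and the doubly-stochastic averaging gives $\hat{\x}_{t+1,i}\in\C$ and $\y_{t+1,i}\in\C$, closing the induction.

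I do not expect any serious obstacle here: the proof is structurally the same as that of Lemma~\ref{lemma:included_convex_constaint}, and the only points requiring attention are (i) exploiting the partition identity $\sum_{j}|\V_{j}|=n$ so that blending in $\tfrac{\gamma}{n}\one_{n}$ never pushes a block sum above $1$, and (ii) substituting the closed-form characterization of Theorem~\ref{thm:projection} for the generic mirror-projection feasibility argument used for Algorithm~\ref{alg:BDOMA}.
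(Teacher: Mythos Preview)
Your proposal is correct and follows the same inductive route the paper intends: the paper does not write out a separate proof for this lemma but simply defers to the argument for Lemma~\ref{lemma:included_convex_constaint}, and your write-up supplies exactly that argument adapted to Algorithm~\ref{alg:BDOEA}, with the one new observation that the mixing step $\hat{\x}_{t,i}=(1-\gamma)\x_{t,i}+\tfrac{\gamma}{n}\one_{n}$ preserves membership in $\C$. The only minor remark is that Line~9 in the pseudocode has a typo (it should aggregate $\hat{\x}_{t,j}$ rather than $\x_{t,j}$, as confirmed by the proof of Lemma~\ref{lemma::3}); you implicitly read it the intended way, so nothing in your argument is affected.
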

\begin{lemma}\label{lemma:simplify2}
	In Algorithm~\ref{alg:BDOEA}, if we set the constraint $\C=\{\x\in[0,1]^{n}: \sum_{a\in\V_{i}}x_{a}\le1,\forall i\in\N\}$ and Assumption~\ref{ass:2} and \ref{ass:1} hold,we have that, for any $t\in[T]$ and $i\in\N$, 
	\begin{equation*}
		\x_{t+1,i}=\mathop{\arg\min}_{\x\in\C}\Big(-\langle\widetilde{\nabla} F_{t}^{s}(\x_{t,i})\odot\one_{\V_{i}},\x\rangle+\frac{1}{\eta_{t}}\mathcal{D}_{KL}(\x, \y_{t,i})\Big), 
	\end{equation*}where $\odot$ denotes the coordinate-wise multiplication, i.e.,the $i$-th element of vector $\x\odot\y$ is $x_{i}y_{i}$, and $\one_{\V_{i}}$ denotes a $n$-dimensional vector where the entries at $\V_{i}$ is equal to $1$ and all others are $0$.
\end{lemma}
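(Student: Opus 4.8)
\textbf{Proof proposal for Lemma~\ref{lemma:simplify2}.} The plan is to treat this exactly as the \textbf{MA-OSEA} counterpart of Lemma~\ref{lemma:simplify}: let $\mathbf{o}^{*}\in[0,1]^{n}$ denote the minimizer on the right-hand side of the claimed identity, and show block-by-block that $\mathbf{o}^{*}$ coincides with $\x_{t+1,i}$ as produced by Lines 13--18 of Algorithm~\ref{alg:BDOEA}. Since $\phi(\x)=\sum_{i=1}^{n}x_{i}\log x_{i}$ is coordinate-wise decomposable (Assumption~\ref{ass:2}), the KL divergence splits as $\mathcal{D}_{KL}(\x,\y_{t,i})=\mathcal{D}_{g,n_{i}}([\x]_{\V_{i}},[\y_{t,i}]_{\V_{i}})+\mathcal{D}_{g,n-n_{i}}([\x]_{\V\setminus\V_{i}},[\y_{t,i}]_{\V\setminus\V_{i}})$; the linear term $\langle\widetilde{\nabla}F_{t}^{s}(\x_{t,i})\odot\one_{\V_{i}},\x\rangle$ only touches the coordinates indexed by $\V_{i}$; and the feasible set $\C=\{\x\in[0,1]^{n}:\sum_{a\in\V_{j}}x_{a}\le1,\ \forall j\in\N\}$ decouples across the blocks $\{\V_{j}\}_{j\in\N}$. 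Hence the minimization defining $\mathbf{o}^{*}$ separates into two independent problems, one over $[\x]_{\V_{i}}$ and one over $[\x]_{\V\setminus\V_{i}}$.

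First I would handle the block outside $\V_{i}$: there the subproblem is just $\min\,\mathcal{D}_{g,n-n_{i}}([\x]_{\V\setminus\V_{i}},[\y_{t,i}]_{\V\setminus\V_{i}})$ subject to $\sum_{a\in\V_{j}}x_{a}\le1$ for every $j\ne i$. By Lemma~\ref{lemma:included_convex_constaint2} we have $\y_{t,i}\in\C$, so $[\y_{t,i}]_{\V\setminus\V_{i}}$ is feasible and drives this Bregman divergence to zero, giving $[\mathbf{o}^{*}]_{\V\setminus\V_{i}}=[\y_{t,i}]_{\V\setminus\V_{i}}$; this is precisely Line 13. For the block indexed by $\V_{i}$, the subproblem is $\min_{\sum_{a\in\V_{i}}b_{a}\le1,\ \b\in[0,1]^{n_{i}}}\big(-\langle[\widetilde{\nabla}F_{t}^{s}(\x_{t,i})]_{\V_{i}},\b\rangle+\tfrac{1}{\eta_{t}}\mathcal{D}_{g,n_{i}}(\b,[\y_{t,i}]_{\V_{i}})\big)$. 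Multiplying the objective by $\eta_{t}>0$ leaves the minimizer unchanged, and setting $\z:=-\eta_{t}[\widetilde{\nabla}F_{t}^{s}(\x_{t,i})]_{\V_{i}}$ (so that $\exp(-z_{a})=\exp(\eta_{t}[\widetilde{\nabla}F_{t}^{s}(\x_{t,i})]_{a})$) turns it into exactly the program of Theorem~\ref{thm:projection} with $m=n_{i}$, using that $\|\b\|_{1}=\sum_{a\in\V_{i}}b_{a}$ for nonnegative $\b$. Theorem~\ref{thm:projection} then gives: with $\mathrm{SUM}_{1}:=\sum_{a\in\V_{i}}[\y_{t,i}]_{a}\exp(\eta_{t}[\widetilde{\nabla}F_{t}^{s}(\x_{t,i})]_{a})$, if $\mathrm{SUM}_{1}\le1$ then $[\mathbf{o}^{*}]_{a}=[\y_{t,i}]_{a}\exp(\eta_{t}[\widetilde{\nabla}F_{t}^{s}(\x_{t,i})]_{a})$, otherwise $[\mathbf{o}^{*}]_{a}=[\y_{t,i}]_{a}\exp(\eta_{t}[\widetilde{\nabla}F_{t}^{s}(\x_{t,i})]_{a})/\mathrm{SUM}_{1}$ --- verbatim Lines 14--18. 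Combining the two blocks yields $\mathbf{o}^{*}=\x_{t+1,i}$, which is the claim.

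This argument is essentially bookkeeping, so there is no substantial obstacle; the points deserving the most care are (i) matching signs when reducing the ascent update ($+\eta_{t}\widetilde{\nabla}F_{t}^{s}$ in Algorithm~\ref{alg:BDOEA}) to the descent-form program $\langle\z,\b\rangle+\mathcal{D}_{g}(\b,\cdot)$ of Theorem~\ref{thm:projection}, and (ii) confirming that the closed-form solution furnished by Theorem~\ref{thm:projection} automatically lies in the box $[0,1]^{n_{i}}$ (each coordinate is a subnormalized or normalized exponential weight, hence $\le1$), so the extra constraint $\b\in[0,1]^{n_{i}}$ is inactive and does not alter the minimizer. A complete write-up would also invoke Lemma~\ref{lemma:included_convex_constaint2}; I would prove that by the same induction on $t$ as Lemma~\ref{lemma:included_convex_constaint}, using the double stochasticity of $\W$ from Assumption~\ref{ass:1} to push feasibility through the mixing step and the consensus averaging, and using that the $\V_{i}$-block update in Lines 13--18 keeps $\sum_{a\in\V_{i}}[\x_{t+1,i}]_{a}\le1$ (immediate from $\mathrm{SUM}_{1}$ and the normalization).
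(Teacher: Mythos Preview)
Your proposal is correct and follows the paper's approach: the paper does not spell out a proof of Lemma~\ref{lemma:simplify2} but simply writes ``Like the Appendix~\ref{appendix:1}, we can show that\ldots'', pointing to the block-decomposition argument of Lemma~\ref{lemma:simplify}. You carry out exactly that decomposition and, additionally, invoke Theorem~\ref{thm:projection} to verify that the closed-form update in Lines 14--18 of Algorithm~\ref{alg:BDOEA} realizes the $\V_{i}$-block mirror projection---a step the paper leaves implicit but which is the natural bridge between the abstract argmin and the concrete entropic update.
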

\begin{lemma}\label{lemma::2} In Algorithm~\ref{alg:BDOEA}, if we set the constraint $\C=\{\x\in[0,1]^{n}: \sum_{a\in\V_{i}}x_{a}\le1,\forall i\in\N\}$ and Assumption \ref{ass:2} and \ref{ass:4} hold, we have that
	\begin{equation*}
		\E(\|\mathbf{r}_{t,i}\|_{1})=\E(\|\x_{t+1,i}-\y_{t,i}\|_{1})\le G\eta_{t}.
	\end{equation*}
\end{lemma}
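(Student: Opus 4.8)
The plan is to replay, essentially verbatim, the argument used for Lemma~\ref{lemma:2} in the \textbf{MA-OSMA} analysis, now with the generic Bregman divergence replaced by the KL divergence $\D_{KL}$ of Algorithm~\ref{alg:BDOEA} and with $\|\cdot\|$ taken to be the $l_{1}$ norm (so $\|\cdot\|_{*}=\|\cdot\|_{\infty}$). Two preliminary facts are already available and I would cite them up front: Lemma~\ref{lemma:included_convex_constaint2}, which guarantees $\y_{t,i}\in\C$ so that $\y_{t,i}$ is an admissible comparator, and Lemma~\ref{lemma:simplify2}, which identifies the explicit exponential update of Lines~13--18 of Algorithm~\ref{alg:BDOEA} with the single constrained mirror step $\x_{t+1,i}=\arg\min_{\x\in\C}\big(-\langle\widetilde{\nabla}F_{t}^{s}(\x_{t,i})\odot\one_{\V_{i}},\x\rangle+\tfrac{1}{\eta_{t}}\D_{KL}(\x,\y_{t,i})\big)$ (itself a consequence of the closed form in Theorem~\ref{thm:projection}); note also that the uniform mixing in Algorithm~\ref{alg:BDOEA} keeps all iterates in the interior of the box, so every KL quantity appearing below is finite.

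First I would apply the standard mirror-descent inequality, Lemma~\ref{lemma:1}, to this step with $\mathbf{b}=-\eta_{t}\widetilde{\nabla}F_{t}^{s}(\x_{t,i})\odot\one_{\V_{i}}$ and comparator $\z=\y_{t,i}\in\C$. After rearranging and using $\D_{KL}(\y_{t,i},\y_{t,i})=0$ together with the $1$-strong convexity of the entropic regularizer with respect to $\|\cdot\|_{1}$ (Pinsker's inequality), this gives
\begin{equation*}
\eta_{t}\big\langle\x_{t+1,i}-\y_{t,i},\,\widetilde{\nabla}F_{t}^{s}(\x_{t,i})\odot\one_{\V_{i}}\big\rangle\ \ge\ \D_{KL}(\y_{t,i},\x_{t+1,i})+\D_{KL}(\x_{t+1,i},\y_{t,i})\ \ge\ \|\x_{t+1,i}-\y_{t,i}\|_{1}^{2}.
\end{equation*}

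Next I would bound the left-hand side from above by Hölder's inequality, $\langle\x_{t+1,i}-\y_{t,i},\widetilde{\nabla}F_{t}^{s}(\x_{t,i})\odot\one_{\V_{i}}\rangle\le\|\x_{t+1,i}-\y_{t,i}\|_{1}\,\|\widetilde{\nabla}F_{t}^{s}(\x_{t,i})\|_{\infty}$. Combining the two bounds and dividing by $\|\x_{t+1,i}-\y_{t,i}\|_{1}$ (the case $\x_{t+1,i}=\y_{t,i}$ being trivial) yields the pathwise estimate $\|\x_{t+1,i}-\y_{t,i}\|_{1}\le\eta_{t}\|\widetilde{\nabla}F_{t}^{s}(\x_{t,i})\|_{\infty}$. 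Taking expectations and applying Jensen's inequality with Assumption~\ref{ass:4} (for which $\|\cdot\|_{*}=\|\cdot\|_{\infty}$, so $\E\|\widetilde{\nabla}F_{t}^{s}(\x_{t,i})\|_{\infty}\le(\E\|\widetilde{\nabla}F_{t}^{s}(\x_{t,i})\|_{\infty}^{2})^{1/2}\le G$) gives $\E(\|\mathbf{r}_{t,i}\|_{1})=\E(\|\x_{t+1,i}-\y_{t,i}\|_{1})\le G\eta_{t}$.

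There is no substantive obstacle: the argument is a near-verbatim transcription of the proof of Lemma~\ref{lemma:2}. The only points that need care — and that justify restating the lemma separately for Algorithm~\ref{alg:BDOEA} — are the two bookkeeping results (Lemmas~\ref{lemma:included_convex_constaint2} and~\ref{lemma:simplify2}) that let the concrete exponentiated-gradient update be read as an abstract constrained mirror step on $\C$, and the fact that the relevant $1$-strong convexity of $\sum_{i}x_{i}\log x_{i}$ is with respect to $\|\cdot\|_{1}$ (Pinsker), which is precisely why the statement restricts to the $l_{1}$ norm.
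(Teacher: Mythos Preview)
Your proposal is correct and follows essentially the same route as the paper: the paper does not give a separate proof for this lemma but simply points to the analogous Lemma~\ref{lemma:2}, whose argument (mirror inequality with comparator $\y_{t,i}$, then strong convexity plus a Young/H\"older bound) you reproduce with the KL divergence and the $l_{1}$ norm. The only cosmetic difference is that you use H\"older and divide through, whereas the paper uses Young's inequality; both yield $\|\x_{t+1,i}-\y_{t,i}\|_{1}\le\eta_{t}\|\widetilde{\nabla}F_{t}^{s}(\x_{t,i})\|_{\infty}$ before taking expectations.
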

Moreover, we also define the following new symbols for Algorithm~\ref{alg:BDOEA}:
	\begin{equation*}
		\begin{aligned}
			&\bar{\x}_{t}=\frac{\sum_{i=1}^{N}\x_{t,i}}{N},\ \ \ \x_{t}^{cate}=[\x_{t,1};\x_{t,2};\dots;\x_{t,N}]\in\R^{n*N};\\
			&\bar{\y}_{t}=\frac{\sum_{i=1}^{N}\y_{t,i}}{N},\ \ \ \y_{t}^{cate}=[\y_{t,1};\y_{t,2};\dots;\y_{t,N}]\in\R^{n*N};\\
			&\overline{\hat{\x}}_{t}=\frac{\sum_{i=1}^{N}\hat{\x}_{t,i}}{N},\ \ \ \hat{\x}_{t}^{cate}=[\hat{\x}_{t,1};\hat{\x}_{t,2};\dots;\hat{\x}_{t,N}]\in\R^{n*N};\\
			&\mathbf{r}_{t,i}=\x_{t+1,i}-\y_{t,i},\ \ \ \mathbf{r}_{t}^{cate}=[\mathbf{r}_{t,1};\mathbf{r}_{t,2};\dots;\mathbf{r}_{t,N}]\in\R^{n*N};
		\end{aligned}
	\end{equation*}
Then, we also can show that
	\begin{lemma}\label{lemma::3}
		In Algorithm~\ref{alg:BDOEA}, under the Assumption \ref{ass:2}, \ref{ass:1}  and \ref{ass:4}, we have that, for any $t\in[T]$ and $i\in\N$, 
		\begin{equation*}
			\begin{aligned}
				&\E((\|\x_{t+1,i}-\bar{\x}_{t+1}\|_{1})\le \sum_{\tau=1}^{t}\sqrt{N}\beta^{t-\tau}(1-\gamma)^{t-\tau}\eta_{\tau}G,\\
				&\E((\|\y_{t+1,i}-\bar{\x}_{t+1}\|_{1})\le \sum_{\tau=1}^{t}\sqrt{N}\beta^{t-\tau}(1-\gamma)^{t+1-\tau}\eta_{\tau}G+\gamma D,
			\end{aligned}
		\end{equation*} where $\beta=\max(|\lambda_{2}(\W)|,|\lambda_{N}(\W)|)$ is the second largest magnitude of the eigenvalues of the weight matrix $\W$ and $D=\sup_{\x,\y\in\C}\|\x-\y\|_{1}$ where $\C=\{\x\in[0,1]^{n}: \sum_{a\in\V_{i}}x_{a}\le1,\forall i\in\N\}$.
	\end{lemma}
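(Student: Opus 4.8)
The plan is to mirror the proof of Lemma~\ref{lemma:3}, with the extra bookkeeping forced by the mixing step (Line~7 of Algorithm~\ref{alg:BDOEA}). Writing $\v:=\tfrac1n\one_n$ and $\mathbf{r}_{t,i}:=\x_{t+1,i}-\y_{t,i}$ (which is supported on $\V_i$ and obeys $\E\|\mathbf{r}_{t,i}\|_1\le G\eta_t$ by Lemma~\ref{lemma::2}), the three operations of one round stack as $\hat{\x}_t^{cate}=(1-\gamma)\x_t^{cate}+\gamma(\one_N\otimes\v)$, $\y_t^{cate}=(\W\otimes\mathbf{I}_n)\hat{\x}_t^{cate}$, and $\x_{t+1}^{cate}=\mathbf{r}_t^{cate}+\y_t^{cate}$. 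Using $\W\one_N=\one_N$, so that $(\W\otimes\mathbf{I}_n)(\one_N\otimes\v)=\one_N\otimes\v$, these combine into
\begin{equation*}
\x_{t+1}^{cate}=\mathbf{r}_t^{cate}+(1-\gamma)(\W\otimes\mathbf{I}_n)\x_t^{cate}+\gamma(\one_N\otimes\v).
\end{equation*}

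Unrolling this recursion gives $\x_{t+1}^{cate}=\sum_{\tau=1}^{t}(1-\gamma)^{t-\tau}(\W^{t-\tau}\otimes\mathbf{I}_n)\mathbf{r}_\tau^{cate}+\gamma(\one_N\otimes\v)\sum_{\tau=1}^{t}(1-\gamma)^{t-\tau}$, up to the initial-condition term, which is absorbed exactly as in Lemma~\ref{lemma:3}. Averaging (i.e.\ applying $\mathbf{J}\otimes\mathbf{I}_n$ with $\mathbf{J}:=\tfrac1N\one_N\one_N^{T}$, and using $\mathbf{J}\W^k=\mathbf{J}$, $\mathbf{J}\one_N=\one_N$) shows that $\bar{\x}_{t+1}^{cate}$ has the same form with $\W^{t-\tau}$ replaced by $\mathbf{J}$; subtracting yields, for each $i\in\N$,
\begin{equation*}
\x_{t+1,i}-\bar{\x}_{t+1}=\sum_{\tau=1}^{t}(1-\gamma)^{t-\tau}\sum_{j\in\N}\Bigl([\W^{t-\tau}]_{ij}-\tfrac1N\Bigr)\mathbf{r}_{\tau,j}.
\end{equation*}
Taking $\ell_1$-norms, using $\sum_{j}\bigl|[\W^{t-\tau}]_{ij}-\tfrac1N\bigr|\le\sqrt{N}\,\beta^{t-\tau}$ (Proposition~1 of \citet{nedic2009distributed}, i.e.\ Cauchy--Schwarz on the eigen-expansion of $\W$), taking expectations and invoking $\E\|\mathbf{r}_{\tau,j}\|_1\le G\eta_\tau$ gives the first bound $\E\|\x_{t+1,i}-\bar{\x}_{t+1}\|_1\le\sum_{\tau=1}^{t}\sqrt{N}\,\beta^{t-\tau}(1-\gamma)^{t-\tau}\eta_\tau G$.

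For the $\y$-bound the only new point is that, unlike in Lemma~\ref{lemma:3}, the mean of the consensus iterates is shifted: double stochasticity gives $\bar{\y}_{t+1}=\overline{\hat{\x}}_{t+1}=(1-\gamma)\bar{\x}_{t+1}+\gamma\v$. Hence
\begin{equation*}
\y_{t+1,i}-\bar{\x}_{t+1}=(1-\gamma)\sum_{j\in\N}w_{ij}\bigl(\x_{t+1,j}-\bar{\x}_{t+1}\bigr)+\gamma\bigl(\v-\bar{\x}_{t+1}\bigr),
\end{equation*}
and by convexity of $\|\cdot\|_1$ together with $\sum_j w_{ij}=1$ on the first term and $\|\v-\bar{\x}_{t+1}\|_1\le D$ on the second, $\E\|\y_{t+1,i}-\bar{\x}_{t+1}\|_1\le(1-\gamma)\max_{j}\E\|\x_{t+1,j}-\bar{\x}_{t+1}\|_1+\gamma D$; substituting the first bound produces the exponent $(1-\gamma)^{t+1-\tau}$ and the additive $\gamma D$ claimed in the statement.

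I expect the main obstacle to be the careful tracking of where the factors $(1-\gamma)$ accumulate: a fresh perturbation $\mathbf{r}_\tau$ enters round $\tau$ undiscounted but is then pushed through $t-\tau$ mixing-then-consensus steps, each contracting it by $(1-\gamma)\W$, so it reaches round $t+1$ with weight $(1-\gamma)^{t-\tau}\W^{t-\tau}$ — and $\y$ picks up one additional such factor from its final consensus round. Getting these exponents exactly right, and confirming that the non-identical initializations $\{\x_{1,i}\}$ are handled as in Lemma~\ref{lemma:3}, is where the argument is most delicate; the remainder is a routine repetition of that lemma's computation.
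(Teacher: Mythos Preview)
Your proposal is correct and follows essentially the same route as the paper: you derive the same one-step recursion $\x_{t+1}^{cate}=\mathbf{r}_t^{cate}+(1-\gamma)(\W\otimes\mathbf{I}_n)\x_t^{cate}+\gamma(\one_N\otimes\v)$, unroll it, subtract the average (the $\gamma(\one_N\otimes\v)$ term cancels), and then apply the same row-sum bound $\sum_j|[\W^{t-\tau}]_{ij}-\tfrac1N|\le\sqrt{N}\beta^{t-\tau}$ together with Lemma~\ref{lemma::2}. Your treatment of $\y_{t+1,i}$ via the decomposition $(1-\gamma)\sum_j w_{ij}(\x_{t+1,j}-\bar\x_{t+1})+\gamma(\v-\bar\x_{t+1})$ and convexity of $\|\cdot\|_1$ is exactly the paper's argument as well, including the handling of the initial term.
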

		\begin{proof}
		From the definition of $\mathbf{r}_{t,i}$ and $\hat{\x}_{t,i}$, we can conclude that 
		\begin{equation}\label{lemma:31}
			\begin{aligned}
				\x_{t+1,i}&=\mathbf{r}_{t,i}+\y_{t,i}=\mathbf{r}_{t,i}+\sum_{j\in\N_{i}\cup\{i\}}w_{ij}\hat{\x}_{t,j}\\
				&=\mathbf{r}_{t,i}+\sum_{j\in\N_{i}\cup\{i\}}w_{ij}\bigg((1-\gamma)\x_{t,i}+\frac{\gamma}{n}\one_{n}\bigg).
			\end{aligned}		
		\end{equation} where the final equality follows from step 8 in Algorithm~\ref{alg:BDOEA}. 
		
		As a result, from the Eq.\eqref{lemma:31}, we can show that
		\begin{equation}\label{lemma:32}
			\begin{aligned}
				&\x_{t+1}^{cate}=\mathbf{r}_{t}^{cate}+(\W\otimes\mathbf{I}_{n})\bigg((1-\gamma)\x_{t}^{cate}+\frac{\gamma}{n}\one_{nN}\bigg)\\
				&=\mathbf{r}_{t}^{cate}+(1-\gamma)(\W\otimes\mathbf{I}_{n})\x_{t}^{cate}+\frac{\gamma}{n}\one_{nN}\\
				&=\sum_{\tau=1}^{t}(1-\gamma)^{t-\tau}(\W\otimes\mathbf{I}_{n})^{t-\tau}\bigg(\mathbf{r}_{\tau}^{cate}+\frac{\gamma}{n}\one_{nN}\bigg)\\
				&=\sum_{\tau=1}^{t}(1-\gamma)^{t-\tau}(\W^{t-\tau}\otimes\mathbf{I}_{n})\mathbf{r}_{\tau}^{cate}+\sum_{\tau=1}^{t}\frac{\gamma(1-\gamma)^{t-\tau}}{n}\one_{nN}.
			\end{aligned}
		\end{equation}
		If we define $\bar{\x}_{t}^{cate}=[\bar{\x}_{t};\bar{\x}_{t};\dots;\bar{\x}_{t}]\in\R^{n*N}$ and from the Eq.\eqref{lemma:32}, we also have that
		\begin{equation}\label{lemma:33}
			\begin{aligned}
				&\bar{\x}_{t+1}^{cate}=(\frac{\one_{N}\one_{N}^{T}}{N}\otimes\mathbf{I}_{N})\x_{t+1}^{cate}\\
				&=\sum_{\tau=1}^{t}(1-\gamma)^{t-\tau}(\frac{\one_{N}\one_{N}^{T}}{N}\otimes\mathbf{I}_{n})\mathbf{r}_{\tau}^{cate}+\sum_{\tau=1}^{t}\frac{\gamma(1-\gamma)^{t-\tau}}{n}\one_{nN}.
			\end{aligned}
		\end{equation}
		Then, from the Eq.\eqref{lemma:32} and Eq.\eqref{lemma:33}, we have that , for any $i\in\N$,
		\begin{equation}\label{lemma:34}
			\x_{t+1,i}-\bar{\x}_{t+1}=\sum_{\tau=1}^{t}\sum_{j\in\N_{i}\cup\{i\}}(1-\gamma)^{t-\tau}([\W^{t-\tau}]_{ij}-\frac{1}{N})\mathbf{r}_{\tau,j}.
		\end{equation}
		Eq.\eqref{lemma:34} indicates that
		\begin{equation*}
			\begin{aligned}
				\E(\|\x_{t+1,i}-\bar{\x}_{t+1}\|_{1})&=\E(\|\sum_{\tau=1}^{t}\sum_{j\in\N_{i}\cup\{i\}}(1-\gamma)^{t-\tau}([\W^{t-\tau}]_{ij}-\frac{1}{N})\mathbf{r}_{\tau,j}\|_{1})\\
				&\le\E(\sum_{\tau=1}^{t}\sum_{j\in\N_{i}\cup\{i\}}(1-\gamma)^{t-\tau}|[\W^{t-\tau}]_{ij}-\frac{1}{N}|\|	\mathbf{r}_{\tau,j}\|_{1})\\
				&\le\sum_{\tau=1}^{t}\sum_{j\in\N_{i}\cup\{i\}}|[\W^{t-\tau}]_{ij}-\frac{1}{N}|(1-\gamma)^{t-\tau}\eta_{\tau}G\\
				&\le\sum_{\tau=1}^{t}\sqrt{N}\beta^{t-\tau}(1-\gamma)^{t-\tau}\eta_{\tau}G,
			\end{aligned}
		\end{equation*} where the second inequality comes from Lemma~\ref{lemma::2} and the final inequality follows from $\sum_{j\in\N_{i}\cup\{i\}}|[\W^{t-\tau}]_{ij}-\frac{1}{N}|\le\sqrt{N}\beta^{t-\tau}$.
		Due to $\y_{t+1,i}=\sum_{j\in\N_{i}\cup\{i\}}w_{ij}\hat{\x}_{t+1,j}$ we also can have 
		\begin{equation*}
			\begin{aligned}
				\E((\|\y_{t+1,i}-\bar{\x}_{t+1}\|_{1})&\le\sum_{j\in\N_{i}\cup\{i\}}w_{ij}\E((\|\hat{\x}_{t+1,j}-\bar{\x}_{t+1}\|_{1})\\
				&=\sum_{j\in\N_{i}\cup\{i\}}w_{ij}\E((\|\big((1-\gamma)\x_{t,i}+\frac{\gamma}{n}\one_{n}\big)-\bar{\x}_{t+1}\|_{1})\\
				&=(1-\gamma)\sum_{j\in\N_{i}\cup\{i\}}w_{ij}\E((\|\x_{t,i}-\bar{\x}_{t+1}\|_{1})+\gamma\sum_{j\in\N_{i}\cup\{i\}}w_{ij}\E(\|\frac{1}{n}\one_{n}-\bar{\x}_{t+1}\|_{1})\\
				&\le\sum_{\tau=1}^{t}\sqrt{N}\beta^{t-\tau}(1-\gamma)^{t+1-\tau}\eta_{\tau}G+\gamma D.
			\end{aligned}
		\end{equation*}
	\end{proof}
	Like the Lemma~\ref{lemma:4}, we can present a similar lemma for Algorithm~\ref{alg:BDOEA},  that is,
		\begin{lemma}\label{lemma::4} 
			Consider our proposed Algorithm~\ref{alg:BDOEA}, if Assumption \ref{ass:2},\ref{ass:1},\ref{ass:3},\ref{ass:4} hold and each set function $f_{t}$ is monotone submodular with curvature $c$ for any $t\in[T]$, then we can conclude that,
		\begin{equation*}
			\begin{aligned}
				&\Big(\frac{1-e^{-c}}{c}\Big)\sum_{t=1}^{T}F_{t}(\x^{*_{t}})-\sum_{t=1}^{T}\E\Big(F_{t}(\bar{\x}_{t})\Big)\le(2G+LD)\Big(\sum_{t=1}^{T}\sum_{i\in\N}\|\x_{t,i}-\bar{\x}_{t}\|_{1}\Big)\\&+G\Big(\sum_{t=1}^{T}\sum_{i\in\N}\|\y_{t,i}-\bar{\x}_{t}\|_{1}\Big)+\sum_{t=1}^{T}\sum_{i\in\N}\frac{1}{\eta_{t}}\E\Big(D_{KL}(\x^{*},\y_{t,i})-D_{KL}(\x^{*},\x_{t+1,i})\Big)+\frac{NG}{2}\sum_{t=1}^{T}\eta_{t},
			\end{aligned}
		\end{equation*} where $\x_{t}^{*}$ is the optimal solution of Eq.\eqref{equ:continuous_max} and $D=\sup_{\x,\y\in\C}\|\x-\y\|$ where $\C=\{\x\in[0,1]^{n}: \sum_{a\in\V_{i}}x_{a}\le1,\forall i\in\N\}$.
	\end{lemma}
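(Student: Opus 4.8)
The plan is to follow the same route as the proof of Lemma~\ref{lemma:4}, specializing to $\|\cdot\|=\|\cdot\|_{1}$ (so the dual norm is $\|\cdot\|_{\infty}$) and to the entropic mirror map $\phi(\x)=\sum_{i}x_{i}\log x_{i}$, whose Bregman divergence is $\mathcal{D}_{KL}$; the one structural difference is that I stop before substituting the consensus estimates, since in Algorithm~\ref{alg:BDOEA} these estimates (Lemma~\ref{lemma::3}) carry the extra geometric factor $(1-\gamma)^{t-\tau}$ and, for $\y_{t,i}$, an additive $\gamma D$ term, so keeping $\|\x_{t,i}-\bar{\x}_{t}\|_{1}$ and $\|\y_{t,i}-\bar{\x}_{t}\|_{1}$ symbolic is cleaner. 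Concretely, I would start from Theorem~\ref{thm:2} applied at $\y=\x^{*}_{t}$, $\x=\bar{\x}_{t}$, which gives $\big(\tfrac{1-e^{-c}}{c}\big)F_{t}(\x^{*}_{t})-F_{t}(\bar{\x}_{t})\le\langle\nabla F_{t}^{s}(\bar{\x}_{t}),\x^{*}_{t}-\bar{\x}_{t}\rangle$, and then split this inner product into the three pieces $\textcircled{1},\textcircled{2},\textcircled{3}$ exactly as in Eq.\eqref{equ:lemma_result_0}, by inserting $\sum_{i\in\N}\nabla F_{t}^{s}(\x_{t,i})\odot\one_{\V_{i}}$ and the local iterates $\x_{t,i}$.

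For $\textcircled{1}$, I would use H\"older's inequality with the pair $(\|\cdot\|_{1},\|\cdot\|_{\infty})$, the block structure of $\one_{\V_{i}}$ and the triangle inequality to reduce to $\sum_{i\in\N}\|\nabla F_{t}^{s}(\bar{\x}_{t})-\nabla F_{t}^{s}(\x_{t,i})\|_{\infty}$, then invoke the $L$-smoothness in Assumption~\ref{ass:4} together with $D=\sup_{\x,\y\in\C}\|\x-\y\|_{1}$ to obtain $\textcircled{1}\le LD\sum_{i\in\N}\|\x_{t,i}-\bar{\x}_{t}\|_{1}$. For $\textcircled{3}$, H\"older together with the bounded-gradient part of Assumption~\ref{ass:4} gives, in expectation, $\textcircled{3}\le G\sum_{i\in\N}\|\x_{t,i}-\bar{\x}_{t}\|_{1}$.

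The term $\textcircled{2}$ is where the mirror step enters. First I would replace $\nabla F_{t}^{s}(\x_{t,i})$ by the unbiased estimate $\widetilde{\nabla}F_{t}^{s}(\x_{t,i})$ under the conditional expectation given $\x_{t,i}$ (Assumption~\ref{ass:4}), then decompose $\x^{*}_{t}-\x_{t,i}=(\x^{*}_{t}-\x_{t+1,i})+(\x_{t+1,i}-\y_{t,i})+(\y_{t,i}-\x_{t,i})$ into sub-terms $\textcircled{4},\textcircled{5},\textcircled{6}$. By Lemma~\ref{lemma:simplify2}, $\x_{t+1,i}$ equals the single KL-mirror step $\arg\min_{\x\in\C}\big(-\langle\widetilde{\nabla}F_{t}^{s}(\x_{t,i})\odot\one_{\V_{i}},\x\rangle+\tfrac{1}{\eta_{t}}\mathcal{D}_{KL}(\x,\y_{t,i})\big)$, so Lemma~\ref{lemma:1} with $\z=\x^{*}_{t}$ and the $1$-strong convexity of $\mathcal{D}_{KL}$ with respect to $\|\cdot\|_{1}$ (Pinsker) yields $\textcircled{4}\le\tfrac{1}{\eta_{t}}\E\big(\mathcal{D}_{KL}(\x^{*}_{t},\y_{t,i})-\mathcal{D}_{KL}(\x^{*}_{t},\x_{t+1,i})\big)-\tfrac{1}{2\eta_{t}}\E\|\x_{t+1,i}-\y_{t,i}\|_{1}$; Young's inequality plus the bound on $\|\widetilde{\nabla}F_{t}^{s}\|_{\infty}$ gives $\textcircled{5}\le\tfrac{1}{2\eta_{t}}\E\|\x_{t+1,i}-\y_{t,i}\|_{1}+\tfrac{\eta_{t}NG}{2}$, which cancels the negative contribution of $\textcircled{4}$; and H\"older, the bounded gradient, and the triangle inequality through $\bar{\x}_{t}$ give $\textcircled{6}\le G\sum_{i\in\N}\big(\|\x_{t,i}-\bar{\x}_{t}\|_{1}+\|\y_{t,i}-\bar{\x}_{t}\|_{1}\big)$. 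Collecting the coefficients $LD+G+G=2G+LD$ multiplying $\|\x_{t,i}-\bar{\x}_{t}\|_{1}$ and $G$ multiplying $\|\y_{t,i}-\bar{\x}_{t}\|_{1}$, summing over $t\in[T]$ and rearranging produces exactly the claimed inequality.

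I do not expect a genuine obstacle here: the argument is a near-verbatim transcription of the proof of Lemma~\ref{lemma:4}. The two points that need care are (i) checking Lemma~\ref{lemma:simplify2}, i.e. that Lines 13--18 of Algorithm~\ref{alg:BDOEA} --- which implement the closed form of Theorem~\ref{thm:projection} on the coordinates $\V_{i}$ while copying $[\y_{t,i}]_{a}$ on the remaining coordinates --- indeed coincide with a single KL-mirror projection onto $\C=\{\x:\sum_{a\in\V_{i}}x_{a}\le1,\ \forall i\in\N\}$; and (ii) confirming that $\mathcal{D}_{KL}$ is $1$-strongly convex with respect to $\|\cdot\|_{1}$ so that Lemma~\ref{lemma:1} applies with the $l_{1}$ norm. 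The genuinely $\gamma$-specific work --- propagating the $(1-\gamma)^{t-\tau}$ factors and the additive $\gamma D$ error from Lemma~\ref{lemma::3}, and later bounding the telescoped $\mathcal{D}_{KL}$ terms by $C_{2}=N\log(n/\gamma)$ --- is deferred to the remainder of the proof of Theorem~\ref{thm:final_one1} following this lemma.
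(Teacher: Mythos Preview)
Your proposal is correct and follows essentially the same approach as the paper: the paper itself states Lemma~\ref{lemma::4} with the preface ``Like the Lemma~\ref{lemma:4}, we can present a similar lemma for Algorithm~\ref{alg:BDOEA}'' and gives no separate proof, so the intended argument is exactly the transcription of the proof of Lemma~\ref{lemma:4} with $\mathcal{D}_{\phi}$ specialized to $\mathcal{D}_{KL}$, the norm taken as $\|\cdot\|_{1}$, the mirror-step identity supplied by Lemma~\ref{lemma:simplify2}, and the consensus bounds from Lemma~\ref{lemma::3} left unsubstituted, precisely as you describe. Your bookkeeping of the coefficients ($LD$ from \textcircled{1}, $G$ from \textcircled{3}, and $G+G$ from \textcircled{6}, yielding $2G+LD$ on $\|\x_{t,i}-\bar{\x}_{t}\|_{1}$ and $G$ on $\|\y_{t,i}-\bar{\x}_{t}\|_{1}$) matches the stated bound, and the two care points you flag are exactly the places where Algorithm~\ref{alg:BDOEA} differs from Algorithm~\ref{alg:BDOMA}.
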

	Then, we derive an upper bound for  $\sum_{t=1}^{T}\sum_{i\in\N}\frac{1}{\eta_{t}}\E\Big(\D_{KL}(\x^{*}_{t},\y_{t,i})-\D_{KL}(\x^{*}_{t},\x_{t+1,i})\Big)$, i.e., 
		\begin{lemma}\label{lemma::5} 
				If Assumption \ref{ass:2},\ref{ass:1},\ref{ass:3} and \ref{ass:4} hold, we have that
		\begin{equation*}
			\begin{aligned}
			&\sum_{t=1}^{T}\sum_{i\in\N}\frac{1}{\eta_{t}}\E\Big(\D_{KL}(\x^{*}_{t},\y_{t,i})-\D_{KL}(\x^{*}_{t},\x_{t+1,i})\Big)\\&\le\sum_{t=1}^{T}\sum_{i\in\N}\frac{1}{\eta_{t}}\E\Bigg(\sum_{j=1}^{n}[\x_{t}^{*}]_{j}\log\left(\frac{[\hat{\x}_{t+1,i}]_{j}}{[\y_{t,i}]_{j}}\right)+\sum_{j=1}^{n}\Big([\y_{t,i}]_{j}-[\x_{t+1,i}]_{j}\Big)\Bigg)+\sum_{t=1}^{T}\frac{2N^{2}\gamma}{\eta_{t}}.
			\end{aligned}
		\end{equation*}
	\end{lemma}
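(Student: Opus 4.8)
\textbf{Proof plan for Lemma~\ref{lemma::5}.}
The plan is to obtain the bound by a direct telescoping computation of the two KL divergences, the only genuine work being the control of the error introduced by the uniform-mixing step of Algorithm~\ref{alg:BDOEA}. First I would expand both Bregman terms using the explicit formula for $g(x)=x\log x$, namely $\D_{KL}(\mathbf b,\mathbf c)=\sum_{j}\big(b_j\log(b_j/c_j)-b_j+c_j\big)$ (as already recorded in the proof of Theorem~\ref{thm:projection}). Subtracting, the comparator-entropy terms $[\x^{*}_{t}]_j\log[\x^{*}_{t}]_j$ cancel and so do the $-[\x^{*}_{t}]_j$ terms, which leaves the exact identity $\D_{KL}(\x^{*}_{t},\y_{t,i})-\D_{KL}(\x^{*}_{t},\x_{t+1,i})=\sum_{j}[\x^{*}_{t}]_j\log\big([\x_{t+1,i}]_j/[\y_{t,i}]_j\big)+\sum_{j}\big([\y_{t,i}]_j-[\x_{t+1,i}]_j\big)$. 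I would remark in passing that all logarithms here are finite: the mixing step (Line~7) forces every coordinate of $\hat{\x}_{t,i}$ to be at least $\gamma/n>0$, the weighted averaging over neighbors preserves strict positivity since $\W$ has a positive diagonal (Assumption~\ref{ass:1}), and the entropic update (Lines~14--18) multiplies by strictly positive exponentials and divides by a strictly positive normalizer, so $\y_{t,i}$ and $\x_{t+1,i}$ are coordinate-wise strictly positive; the comparator is handled with the standard convention $0\log 0=0$.

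Next I would trade $\x_{t+1,i}$ for $\hat{\x}_{t+1,i}$ inside the logarithm. Since $\hat{\x}_{t+1,i}=(1-\gamma)\x_{t+1,i}+\frac{\gamma}{n}\one_{n}\ge(1-\gamma)\x_{t+1,i}$ coordinate-wise, we get $\log[\x_{t+1,i}]_j\le\log[\hat{\x}_{t+1,i}]_j-\log(1-\gamma)$ for every $j$, and multiplying by the nonnegative weight $[\x^{*}_{t}]_j$ and summing over $j$ gives $\sum_{j}[\x^{*}_{t}]_j\log\big([\x_{t+1,i}]_j/[\y_{t,i}]_j\big)\le\sum_{j}[\x^{*}_{t}]_j\log\big([\hat{\x}_{t+1,i}]_j/[\y_{t,i}]_j\big)-\big(\sum_{j}[\x^{*}_{t}]_j\big)\log(1-\gamma)$. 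Because $\x^{*}_{t}$ is feasible for Eq.\eqref{equ:continuous_max} and the action sets $\{\V_{i}\}_{i\in\N}$ partition $\V$, we have $\sum_{j=1}^{n}[\x^{*}_{t}]_j=\sum_{i\in\N}\sum_{a\in\V_{i}}[\x^{*}_{t}]_a\le N$, so the extra term is at most $-N\log(1-\gamma)$ for each agent $i$, i.e.\ at most $-N^{2}\log(1-\gamma)$ after summing over $i\in\N$.

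Finally, using $-\log(1-\gamma)\le 2\gamma$ (valid for $\gamma\le\tfrac12$, which holds because we will eventually set $\gamma=O(T^{-2})$), the aggregated extra term is bounded by $2N^{2}\gamma$; dividing by $\eta_{t}$, taking expectations, and summing over $t\in[T]$ produces exactly the claimed remainder $\sum_{t=1}^{T}\frac{2N^{2}\gamma}{\eta_{t}}$, while the sum $\sum_{i\in\N}\sum_{j}[\x^{*}_{t}]_j\log\big([\hat{\x}_{t+1,i}]_j/[\y_{t,i}]_j\big)$ and the sum $\sum_{j}\big([\y_{t,i}]_j-[\x_{t+1,i}]_j\big)$ carry over verbatim. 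The main (and rather modest) obstacle is the bookkeeping around the uniform-mixing term — in particular ensuring every logarithm stays well defined — because, in contrast with Lemma~\ref{lemma:5}, the Lipschitz property of the Bregman divergence (Assumption~\ref{ass:3+}) is unavailable for the KL divergence and cannot be used here; accordingly the $\log\big([\hat{\x}_{t+1,i}]_j/[\y_{t,i}]_j\big)$ terms are deliberately kept unsimplified, to be combined afterwards with Lemma~\ref{lemma::3} and the separate-convexity assumption (Assumption~\ref{ass:3}) in the proof of Theorem~\ref{thm:final_one1}.
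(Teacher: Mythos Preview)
Your proposal is correct and follows essentially the same route as the paper: expand the generalized KL divergence, obtain the exact identity $\D_{KL}(\x^{*}_{t},\y_{t,i})-\D_{KL}(\x^{*}_{t},\x_{t+1,i})=\sum_{j}[\x^{*}_{t}]_{j}\log([\x_{t+1,i}]_{j}/[\y_{t,i}]_{j})+\sum_{j}([\y_{t,i}]_{j}-[\x_{t+1,i}]_{j})$, insert $\hat{\x}_{t+1,i}$ in the logarithm, and bound the mixing error by $2N^{2}\gamma/\eta_{t}$. The only difference is in how the ratio $\log([\x_{t+1,i}]_{j}/[\hat{\x}_{t+1,i}]_{j})$ is controlled: the paper performs a case split on whether $[\x_{t+1,i}]_{j}\le 1/n$ to get the per-coordinate bound $\le 2\gamma$ directly, whereas you use the cleaner one-line observation $\hat{\x}_{t+1,i}\ge(1-\gamma)\x_{t+1,i}$ combined with $-\log(1-\gamma)\le 2\gamma$ for $\gamma\le\tfrac12$; both arguments produce the same constant.
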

	\begin{proof}
From \cite{nesterov2013introductory}, we know that, for any two vector $\x,\y\in\R^{n}_{+}$, 
\begin{equation*}
	\D_{KL}(\x,\y)=\sum_{j=1}^{n}[\x]_{j}\log\left(\frac{[\x]_{j}}{[\y]_{j}}\right)-\sum_{j=1}^{n}[\x]_{j}+\sum_{j=1}^{n}[\y]_{j}.
\end{equation*}
Thus, we can show that
		\begin{equation*}
			\begin{aligned}
				&\sum_{t=1}^{T}\sum_{i\in\N}\frac{1}{\eta_{t}}\E\Big(\D_{KL}(\x^{*}_{t},\y_{t,i})-\D_{KL}(\x^{*}_{t},\x_{t+1,i})\Big)\\
				&=\sum_{t=1}^{T}\sum_{i\in\N}\frac{1}{\eta_{t}}\E\Bigg(\sum_{j=1}^{n}[\x_{t}^{*}]_{j}\log\left(\frac{[\x_{t+1,i}]_{j}}{[\y_{t,i}]_{j}}\right)+\sum_{j=1}^{n}\Big([\y_{t,i}]_{j}-[\x_{t+1,i}]_{j}\Big)\Bigg)\\
				&=\sum_{t=1}^{T}\sum_{i\in\N}\frac{1}{\eta_{t}}\E\Bigg(\sum_{j=1}^{n}[\x_{t}^{*}]_{j}\log\left(\frac{[\hat{\x}_{t+1,i}]_{j}}{[\y_{t,i}]_{j}}\right)+\sum_{j=1}^{n}[\x_{t}^{*}]_{j}\log\left(\frac{[\x_{t+1,i}]_{j}}{[\hat{\x}_{t+1,i}]_{j}}\right)+\sum_{j=1}^{n}\Big([\y_{t,i}]_{j}-[\x_{t+1,i}]_{j}\Big)\Bigg).
			\end{aligned}
		\end{equation*}
	From Line 7 in Algorithm~\ref{alg:BDOEA}, we know that $[\hat{\x}_{t+1,i}]_{j}=(1-\gamma)[\x_{t+1,i}]_{j}+\frac{\gamma}{n}$. So if  $[\x_{t+1,i}]_{j}\le\frac{1}{n}$, $[\x_{t+1,i}]_{j}\le[\hat{\x}_{t+1,i}]_{j}$ or $\log(\frac{[\x_{t+1,i}]_{j}}{[\hat{\x}_{t+1,i}]_{j}})\le0$. As for $[\x_{t+1,i}]_{j}>\frac{1}{n}$ and $\gamma\le\frac{1}{2}$,
	\begin{equation*}
	\begin{aligned}
		\log(\frac{[\x_{t+1,i}]_{j}}{[\hat{\x}_{t+1,i}]_{j}})&=	\log(\frac{[\x_{t+1,i}]_{j}}{(1-\gamma)[\x_{t+1,i}]_{j}+\frac{\gamma}{n}})=\log\Big(1+\frac{\gamma([\x_{t+1,i}]_{j}-\frac{1}{n})}{(1-\gamma)[\x_{t+1,i}]_{j}+\frac{\gamma}{n}}\Big)\\&\le \frac{\gamma([\x_{t+1,i}]_{j}-\frac{1}{n})}{(1-\gamma)[\x_{t+1,i}]_{j}+\frac{\gamma}{n}}\le2\gamma, 
	\end{aligned}
	\end{equation*} where the final inequality follows from 
		$[\x_{t+1,i}]_{j}-\frac{1}{n}\le[\x_{t+1,i}]_{j}\le2(1-\gamma)[\x_{t+1,i}]_{j}\le2\Big((1-\gamma)[\x_{t+1,i}]_{j}+\frac{\gamma}{n}\Big)$.
		
		Then, we have 
		\begin{equation*}
			\begin{aligned}
				&\sum_{t=1}^{T}\sum_{i\in\N}\frac{1}{\eta_{t}}\E\Big(\D_{KL}(\x^{*}_{t},\y_{t,i})-\D_{KL}(\x^{*}_{t},\x_{t+1,i})\Big)\\
				&\le\sum_{t=1}^{T}\sum_{i\in\N}\frac{1}{\eta_{t}}\E\Bigg(\sum_{j=1}^{n}[\x_{t}^{*}]_{j}\log(\frac{[\hat{\x}_{t+1,i}]_{j}}{[\y_{t,i}]_{j}})+\sum_{j=1}^{n}\Big([\y_{t,i}]_{j}-[\x_{t+1,i}]_{j}\Big)\Bigg)+\sum_{t=1}^{T}\frac{2N^{2}\gamma}{\eta_{t}}.
			\end{aligned}
		\end{equation*}
	\end{proof}
	\begin{lemma}\label{lemma::11} If Assumption \ref{ass:2},\ref{ass:1},\ref{ass:3} and \ref{ass:4} hold, we have that
	\begin{equation*}
		\begin{aligned}
		\sum_{t=1}^{T}\sum_{i\in\N}\frac{1}{\eta_{t}}\E\Bigg(\sum_{j=1}^{n}[\x_{t}^{*}]_{j}\log(\frac{[\hat{\x}_{t+1,i}]_{j}}{[\y_{t,i}]_{j}})\Bigg)\le\frac{N^{2}\log(\frac{n}{\gamma})}{\eta_{T+1}}+\sum_{t=1}^{T}\frac{N\log(\frac{n}{\gamma})}{\eta_{t+1}}\|\x_{t+1}^{*}-\x_{t}^{*}]\|_{1}.
		\end{aligned}
	\end{equation*}
\end{lemma}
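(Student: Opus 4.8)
The plan is to follow the telescoping scheme used in the proof of Lemma~\ref{lemma:5}, but to work directly with the ``entropic potential'' $\ell(\x,\v):=\sum_{j=1}^{n}[\x]_{j}\log[\v]_{j}$ in place of the Bregman divergence, and to replace the exact consensus identity that made the term $\text{\textcircled{3}}$ vanish there by a one-sided inequality coming from concavity of $\log$. Throughout I use (as elsewhere, e.g.\ the constant choice in Remark~\ref{Remark:final1}) that the step sizes $\eta_{t}$ are non-increasing. Two elementary facts are needed. First, the uniform mixing in Line~7 of Algorithm~\ref{alg:BDOEA} forces every coordinate of $\hat{\x}_{t,i}=(1-\gamma)\x_{t,i}+\tfrac{\gamma}{n}\one_{n}$ into $[\tfrac{\gamma}{n},1]$ (using $\x_{t,i}\in\C\subseteq[0,1]^{n}$ from Lemma~\ref{lemma:included_convex_constaint2}), and since $\y_{t,i}=\sum_{k\in\N_{i}\cup\{i\}}w_{ik}\hat{\x}_{t,k}$ is a convex combination of such vectors (Assumption~\ref{ass:1}), the same holds for $\y_{t,i}$; hence $|\log[\hat{\x}_{t,i}]_{j}|\le\log(n/\gamma)$, $|\log[\y_{t,i}]_{j}|\le\log(n/\gamma)$, and $\ell(\x,\hat{\x}_{t,i})\le 0$, $\ell(\x,\y_{t,i})\le 0$ for every $\x\ge\mathbf{0}$. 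Second, because $\x_{t}^{*}=\one_{\mathcal{A}_{t}^{*}}$ with $|\mathcal{A}_{t}^{*}\cap\V_{i}|\le 1$, we have $\|\x_{t}^{*}\|_{1}=|\mathcal{A}_{t}^{*}|\le N$, so $0\le-\ell(\x_{t}^{*},\v)\le\|\x_{t}^{*}\|_{1}\log(n/\gamma)\le N\log(n/\gamma)$ whenever $\v$ has all coordinates in $[\tfrac{\gamma}{n},1]$.

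The key step is the consensus inequality $\sum_{i\in\N}\ell(\x,\hat{\x}_{t+1,i})\le\sum_{i\in\N}\ell(\x,\y_{t+1,i})$ for any $\x\ge\mathbf{0}$: by concavity of $\log$, $\log[\y_{t+1,i}]_{j}=\log\big(\sum_{k}w_{ik}[\hat{\x}_{t+1,k}]_{j}\big)\ge\sum_{k}w_{ik}\log[\hat{\x}_{t+1,k}]_{j}$; multiplying by $[\x]_{j}\ge0$, summing over $j$ and over $i$, and using that $\W$ is column-stochastic ($\sum_{i}w_{ik}=1$) yields the claim. Writing the left-hand side of the lemma as $S=\sum_{t=1}^{T}\tfrac{1}{\eta_{t}}\sum_{i\in\N}\E\big(\ell(\x_{t}^{*},\hat{\x}_{t+1,i})-\ell(\x_{t}^{*},\y_{t,i})\big)$ and applying this gives $S\le\sum_{t}\tfrac{1}{\eta_{t}}\sum_{i}\E\big(\ell(\x_{t}^{*},\y_{t+1,i})-\ell(\x_{t}^{*},\y_{t,i})\big)$. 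I would then split the right-hand side exactly as $\text{\textcircled{1}},\text{\textcircled{2}},\text{\textcircled{4}}$ were split in Lemma~\ref{lemma:5}:
\begin{equation*}
\begin{aligned}
S\le{}&\sum_{t,i}\Big(\tfrac{1}{\eta_{t+1}}\ell(\x_{t+1}^{*},\y_{t+1,i})-\tfrac{1}{\eta_{t}}\ell(\x_{t}^{*},\y_{t,i})\Big)+\sum_{t,i}\tfrac{1}{\eta_{t+1}}\big(\ell(\x_{t}^{*},\y_{t+1,i})-\ell(\x_{t+1}^{*},\y_{t+1,i})\big)\\
&+\sum_{t,i}\Big(\tfrac{1}{\eta_{t}}-\tfrac{1}{\eta_{t+1}}\Big)\ell(\x_{t}^{*},\y_{t+1,i}),
\end{aligned}
\end{equation*}
where $\sum_{t,i}$ abbreviates $\sum_{t=1}^{T}\sum_{i\in\N}$; one checks that this is an identity (the cross terms cancel).

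To finish, I would bound the three pieces. The first telescopes (mirroring the treatment of $\text{\textcircled{1}}$ in Lemma~\ref{lemma:5}) to $\sum_{i}\big(\tfrac{1}{\eta_{T+1}}\ell(\x_{T+1}^{*},\y_{T+1,i})-\tfrac{1}{\eta_{1}}\ell(\x_{1}^{*},\y_{1,i})\big)\le\tfrac{1}{\eta_{1}}\sum_{i}\big(-\ell(\x_{1}^{*},\y_{1,i})\big)\le\tfrac{N^{2}\log(n/\gamma)}{\eta_{1}}$, using $\ell\le 0$ and the second fact. The second equals $\sum_{t,i}\tfrac{1}{\eta_{t+1}}\sum_{j}([\x_{t}^{*}]_{j}-[\x_{t+1}^{*}]_{j})\log[\y_{t+1,i}]_{j}$, which by the first fact is at most $\sum_{t}\tfrac{N\log(n/\gamma)}{\eta_{t+1}}\|\x_{t+1}^{*}-\x_{t}^{*}\|_{1}$ — precisely the second term of the claimed bound. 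The third: since $\tfrac{1}{\eta_{t}}-\tfrac{1}{\eta_{t+1}}\le 0$ and $\ell\le 0$, the second fact gives $\sum_{t,i}\big(\tfrac{1}{\eta_{t}}-\tfrac{1}{\eta_{t+1}}\big)\ell(\x_{t}^{*},\y_{t+1,i})\le N^{2}\log(n/\gamma)\sum_{t}\big(\tfrac{1}{\eta_{t+1}}-\tfrac{1}{\eta_{t}}\big)=N^{2}\log(n/\gamma)\big(\tfrac{1}{\eta_{T+1}}-\tfrac{1}{\eta_{1}}\big)$. Adding the first and third bounds collapses to $\tfrac{N^{2}\log(n/\gamma)}{\eta_{T+1}}$, which together with the bound on the second piece proves the lemma.

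The main obstacle is the sign bookkeeping in the last step: because $\ell\le 0$ and (for non-increasing $\eta$) $\tfrac{1}{\eta_{t}}-\tfrac{1}{\eta_{t+1}}\le 0$, neither the telescoping sum nor the step-size-correction sum admits a useful bound on its own, and only their combination collapses to the clean $1/\eta_{T+1}$ factor — which is exactly why the decomposition is arranged with the index shift $t\mapsto t+1$ inside the potential, as in Lemma~\ref{lemma:5}. A secondary point that must be checked carefully is that every iterate keeps all coordinates bounded below by $\gamma/n$, so that all logarithms are finite and uniformly controlled; this is exactly what the uniform-mixing step buys, and is the reason the projection-free entropic variant carries the extra parameter $\gamma$ at all.
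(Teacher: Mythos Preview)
Your proposal is correct and takes essentially the same approach as the paper: both arguments use the concavity of $\log$ together with double-stochasticity of $\W$ to pass from $\hat{\x}_{t+1,i}$ to $\y_{t+1,i}$, then telescope the entropic potential and bound the endpoints and the drift term via the uniform coordinate lower bound $\gamma/n$ (which gives the ubiquitous $\log(n/\gamma)$ factor). The only cosmetic difference is ordering: you apply the consensus inequality \emph{first} (with weight $1/\eta_{t}$) and then perform a three-term split on $\ell(\x_{t}^{*},\y_{t+1,i})-\ell(\x_{t}^{*},\y_{t,i})$, whereas the paper does a four-term split and handles the consensus step as its term \textcircled{3} (with weight $1/\eta_{t+1}$), leaving $\hat{\x}_{t+1,i}$ inside the step-size-correction term \textcircled{4}; the resulting bounds are identical.
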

\begin{proof}
	\begin{equation*}
		\begin{aligned}
			&\sum_{t=1}^{T}\sum_{i\in\N}\frac{1}{\eta_{t}}\E\Bigg(\sum_{j=1}^{n}[\x_{t}^{*}]_{j}\log(\frac{[\hat{\x}_{t+1,i}]_{j}}{[\y_{t,i}]_{j}})\Bigg)\\&=\sum_{t=1}^{T}\sum_{i\in\N}\frac{1}{\eta_{t}}\Bigg(\E\Big(\sum_{j=1}^{n}[\x_{t}^{*}]_{j}\log(\frac{1}{[\y_{t,i}]_{j}})\Big)-\E\Big(\sum_{j=1}^{n}[\x_{t}^{*}]_{j}\log(\frac{1}{[\hat{\x}_{t+1,i}]_{j}})\Big)\Bigg)\\
			&=\underbrace{\sum_{t=1}^{T}\sum_{i\in\N}\Bigg(\frac{1}{\eta_{t}}\E\Big(\sum_{j=1}^{n}[\x_{t}^{*}]_{j}\log(\frac{1}{[\y_{t,i}]_{j}})\Big)-\frac{1}{\eta_{t+1}}\E\Big(\sum_{j=1}^{n}[\x_{t+1}^{*}]_{j}\log(\frac{1}{[\y_{t+1,i}]_{j}})\Big)\Bigg)}_{\text{\textcircled{1}}}\\
			&+\underbrace{\sum_{t=1}^{T}\sum_{i\in\N}\frac{1}{\eta_{t+1}}\Bigg(\E\Big(\sum_{j=1}^{n}[\x_{t+1}^{*}]_{j}\log(\frac{1}{[\y_{t+1,i}]_{j}})\Big)-\E\Big(\sum_{j=1}^{n}[\x_{t}^{*}]_{j}\log(\frac{1}{[\y_{t+1,i}]_{j}})\Big)\Bigg)}_{\text{\textcircled{2}}}\\
			&+\underbrace{\sum_{t=1}^{T}\sum_{i\in\N}\frac{1}{\eta_{t+1}}\Bigg(\E\Big(\sum_{j=1}^{n}[\x_{t}^{*}]_{j}\log(\frac{1}{[\y_{t+1,i}]_{j}})\Big)-\E\Big(\sum_{j=1}^{n}[\x_{t}^{*}]_{j}\log(\frac{1}{[\hat{\x}_{t+1,i}]_{j}})\Big)\Bigg)}_{\text{\textcircled{3}}}\\
			&+\underbrace{\sum_{t=1}^{T}\sum_{i\in\N}\Big(\frac{1}{\eta_{t+1}}-\frac{1}{\eta_{t}}\Big)\E\Big(\sum_{j=1}^{n}[\x_{t}^{*}]_{j}\log(\frac{1}{[\hat{\x}_{t+1,i}]_{j}})\Big)}_{\text{\textcircled{4}}}.
		\end{aligned}
	\end{equation*}
	Firstly, from the  convexity of function $\log(\frac{1}{x})$, we have 
	\begin{equation*}
		\begin{aligned}
			&\text{\textcircled{3}}=\sum_{t=1}^{T}\sum_{i\in\N}\frac{1}{\eta_{t+1}}\Bigg(\E\Big(\sum_{j=1}^{n}[\x_{t}^{*}]_{j}\log(\frac{1}{[\y_{t+1,i}]_{j}})\Big)-\E\Big(\sum_{j=1}^{n}[\x_{t}^{*}]_{j}\log(\frac{1}{[\hat{\x}_{t+1,i}]_{j}})\Big)\Bigg)\\
			&=\sum_{t=1}^{T}\sum_{i\in\N}\frac{1}{\eta_{t+1}}\Bigg(\E\Big(\sum_{j=1}^{n}[\x_{t}^{*}]_{j}\log(\frac{1}{\sum_{k\in\N_{i}\cup\{i\}}w_{ik}[\hat{\x}_{t+1,k}]_{j}})\Big)-\E\Big(\sum_{j=1}^{n}[\x_{t}^{*}]_{j}\log(\frac{1}{[\hat{\x}_{t+1,i}]_{j}})\Big)\Bigg)\\
			&\le\sum_{t=1}^{T}\sum_{i\in\N}\frac{1}{\eta_{t+1}}\Bigg(\E\Big(\sum_{j=1}^{n}[\x_{t}^{*}]_{j}\sum_{k\in\N_{i}\cup\{i\}}w_{ik}\log(\frac{1}{[\hat{\x}_{t+1,k}]_{j}})\Big)-\E\Big(\sum_{j=1}^{n}[\x_{t}^{*}]_{j}\log(\frac{1}{[\hat{\x}_{t+1,i}]_{j}})\Big)\Bigg)\\
			&=\sum_{t=1}^{T}\sum_{i\in\N}\frac{1}{\eta_{t+1}}\Bigg(\Big(\sum_{k\in\N_{i}\cup\{i\}}w_{ki}\Big)\E\Big(\sum_{j=1}^{n}[\x_{t}^{*}]_{j}\log(\frac{1}{[\hat{\x}_{t+1,i}]_{j}})\Big)-\E\Big(\sum_{j=1}^{n}[\x_{t}^{*}]_{j}\log(\frac{1}{[\hat{\x}_{t+1,i}]_{j}})\Big)\Bigg)\\
			&=0.
		\end{aligned}
	\end{equation*}
	Then, for \textcircled{1}, we can show that
	\begin{equation*}
\text{\textcircled{1}}\le\sum_{i\in\N}\frac{1}{\eta_{1}}\E\Big(\sum_{j=1}^{n}[\x_{1}^{*}]_{j}\log(\frac{1}{[\y_{1,i}]_{j}})\Big)\le\frac{N^{2}\log(\frac{n}{\gamma})}{\eta_{1}},
	\end{equation*} where the final inequality follows from $[\y_{1,i}]_{j}\ge\frac{\gamma}{n}$ such that $\log(\frac{1}{[\y_{1,i}]_{j}})\le\log(\frac{n}{\gamma})$ and $\sum_{j=1}^{n}[\x_{1}^{*}]_{j}\le N$.
	
	As for \textcircled{2}, we can have that
	\begin{equation*}
		\begin{aligned}
			&\sum_{t=1}^{T}\sum_{i\in\N}\frac{1}{\eta_{t+1}}\Bigg(\E\Big(\sum_{j=1}^{n}[\x_{t+1}^{*}]_{j}\log(\frac{1}{[\y_{t+1,i}]_{j}})\Big)-\E\Big(\sum_{j=1}^{n}[\x_{t}^{*}]_{j}\log(\frac{1}{[\y_{t+1,i}]_{j}})\Big)\Bigg)\\
			&=\sum_{t=1}^{T}\sum_{i\in\N}\frac{1}{\eta_{t+1}}\Bigg(\E\Big(\sum_{j=1}^{n}\Big([\x_{t+1}^{*}]_{j}-[\x_{t}^{*}]_{j}\Big)\log(\frac{1}{[\y_{t+1,i}]_{j}})\Big)\Bigg)\\
			&\le\sum_{t=1}^{T}\frac{N\log(\frac{n}{\gamma})}{\eta_{t+1}}\|\x_{t+1}^{*}-\x_{t}^{*}]\|_{1}, 
		\end{aligned}
	\end{equation*} where the final inequality follows from  $\log(\frac{1}{[\y_{t+1,i}]_{j}})\le\log(\frac{n}{\gamma})$.
	Moreover, we have \textcircled{4}$\le N^{2}\log(\frac{n}{\gamma})\Big(\frac{1}{\eta_{T+1}}-\frac{1}{\eta_{1}}\Big)$. We finally get 
	\begin{equation*}
		\sum_{t=1}^{T}\sum_{i\in\N}\frac{1}{\eta_{t}}\E\Bigg(\sum_{j=1}^{n}[\x_{t}^{*}]_{j}\log(\frac{[\hat{\x}_{t+1,i}]_{j}}{[\y_{t,i}]_{j}})\Bigg)\le\frac{N^{2}\log(\frac{n}{\gamma})}{\eta_{T+1}}+\sum_{t=1}^{T}\frac{N\log(\frac{n}{\gamma})}{\eta_{t+1}}\|\x_{t+1}^{*}-\x_{t}^{*}]\|_{1}.
	\end{equation*}
\end{proof}
Next, from Line 13-18 in Algorithm~\ref{alg:BDOEA}, we know that, for any $i\in\N$, when $\sum_{a\in\V_{i}}\Big(	[\y_{t,i}]_{a}\exp(\eta_{t}[\tilde{\nabla} F_{t}^{s}(\x_{t,i})]_{a})\Big)\le1$, we can have $[\x_{t+1,i}]_{a}\ge[\y_{t,i}]_{a}$. As for $\sum_{a\in\V_{i}}\Big(	[\y_{t,i}]_{a}\exp(\eta_{t}[\tilde{\nabla} F_{t}^{s}(\x_{t,i})]_{a})\Big)>1$, we have $\sum_{a\in\V_{i}}[\x_{t+1,i}]_{a}=1$ such that $\sum_{j=1}^{n}\Big([\y_{t,i}]_{j}-[\x_{t+1,i}]_{j}\Big)=\sum_{a\in\V_{i}}[\y_{t+1,i}]_{a}-1\le0$. As a result,  we can conclude that
 \begin{equation*}
\sum_{t=1}^{T}\sum_{i\in\N}\frac{1}{\eta_{t}}\E\Bigg(\sum_{j=1}^{n}\Big([\y_{t,i}]_{j}-[\x_{t+1,i}]_{j}\Big)\Bigg)\le0.
 \end{equation*}

Finally, we get these result 
\begin{equation*}
	\begin{aligned}
		&\Big(\frac{1-e^{-c}}{c}\Big)\sum_{t=1}^{T}F_{t}(\x^{*}_{t})-\sum_{t=1}^{T}\E\Big(F_{t}(\bar{\x}_{t})\Big)\le(3G^{2}+LDG)\Big(\sum_{t=1}^{T}\sum_{\tau=1}^{t}N^{\frac{3}{2}}\beta^{t-\tau}(1-\gamma)^{t-\tau}\eta_{\tau}\Big)\\&+G\gamma D+\frac{N^{2}\log(\frac{n}{\gamma})}{\eta_{T+1}}+\sum_{t=1}^{T}\frac{N\log(\frac{n}{\gamma})}{\eta_{t+1}}\|\x_{t+1}^{*}-\x_{t}^{*}\|_{1}+\sum_{t=1}^{T}\frac{2N^{2}\gamma}{\eta_{t}}.+\frac{NG}{2}\sum_{t=1}^{T}\eta_{t}.
			\end{aligned}
\end{equation*}
Like Lemma~\ref{thm:3}, we also can verify that 
	\begin{lemma}\label{thm:31}
	In Algorithm~\ref{alg:BDOEA}, if Assumption \ref{ass:2},\ref{ass:1},\ref{ass:3},\ref{ass:4} hold and each set function $f_{t}$ is monotone submodular with curvature $c$ for any $t\in[T]$, then
	\begin{equation*}
		\begin{aligned}
			&\Big(\frac{1-e^{-c}}{c}\Big)\sum_{t=1}^{T}F_{t}(\x^{*}_{t})-\sum_{t=1}^{T}\E\Big(F_{t}\big(\sum_{i\in\N}\x_{t,i}\odot\one_{\V_{i}}\big)\Big)\le(4G^{2}+LDG)\Big(\sum_{t=1}^{T}\sum_{\tau=1}^{t}N^{\frac{3}{2}}\beta^{t-\tau}(1-\gamma)^{t-\tau}\eta_{\tau}\Big)+G\gamma D\\&+\frac{N^{2}\log(\frac{n}{\gamma})}{\eta_{T+1}}+\sum_{t=1}^{T}\frac{N\log(\frac{n}{\gamma})}{\eta_{t+1}}\|\x_{t+1}^{*}-\x_{t}^{*}\|_{1}+\sum_{t=1}^{T}\frac{2N^{2}\gamma}{\eta_{t}}+\frac{NG}{2}\sum_{t=1}^{T}\eta_{t}.
		\end{aligned}
	\end{equation*}
\end{lemma}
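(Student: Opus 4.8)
The plan is to assemble Lemma~\ref{thm:31} from the four auxiliary lemmas already established for Algorithm~\ref{alg:BDOEA}. I would start from Lemma~\ref{lemma::4}, which already reduces the surrogate-function regret against the running average, $\big(\frac{1-e^{-c}}{c}\big)\sum_{t}F_{t}(\x^{*}_{t})-\sum_{t}\E(F_{t}(\bar{\x}_{t}))$, to four ingredients: (i) the consensus term $(2G+LD)\sum_{t,i}\|\x_{t,i}-\bar{\x}_{t}\|_{1}$; (ii) the term $G\sum_{t,i}\|\y_{t,i}-\bar{\x}_{t}\|_{1}$; (iii) the telescoping KL term $\sum_{t,i}\frac{1}{\eta_{t}}\E\big(\D_{KL}(\x^{*}_{t},\y_{t,i})-\D_{KL}(\x^{*}_{t},\x_{t+1,i})\big)$; and (iv) the step-size sum $\frac{NG}{2}\sum_{t}\eta_{t}$. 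Ingredients (ii) and (iv) are carried through unchanged.

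For (i) and (ii) I would substitute the consensus bounds of Lemma~\ref{lemma::3}, which for $\x_{t,i}$ give $\sum_{\tau}\sqrt{N}\beta^{t-\tau}(1-\gamma)^{t-\tau}\eta_{\tau}G$ and for $\y_{t,i}$ the same form plus an extra $\gamma D$. Summing over $i\in\N$ and $t\in[T]$, collecting the $G$-factors, and bounding each extra power of $(1-\gamma)$ by $1$, the combined contribution becomes $(3G^{2}+LDG)\sum_{t}\sum_{\tau\le t}N^{3/2}\beta^{t-\tau}(1-\gamma)^{t-\tau}\eta_{\tau}+G\gamma D$. For (iii) I would chain Lemma~\ref{lemma::5} — which exploits the $\gamma$-mixing to split off a harmless $\sum_{t}2N^{2}\gamma/\eta_{t}$ and reduces the remainder to $\sum_{t,i}\frac{1}{\eta_{t}}\E\big(\sum_{j}[\x^{*}_{t}]_{j}\log([\hat{\x}_{t+1,i}]_{j}/[\y_{t,i}]_{j})+\sum_{j}([\y_{t,i}]_{j}-[\x_{t+1,i}]_{j})\big)$ — with Lemma~\ref{lemma::11}, which telescopes the logarithmic term (killing the information-aggregation cross term by separate convexity and $\sum_{k}w_{ki}=1$, and using $[\y_{t,i}]_{j}\ge\gamma/n$ so $\log(1/[\y_{t,i}]_{j})\le\log(n/\gamma)$) into $\frac{N^{2}\log(n/\gamma)}{\eta_{T+1}}+\sum_{t}\frac{N\log(n/\gamma)}{\eta_{t+1}}\|\x^{*}_{t+1}-\x^{*}_{t}\|_{1}$. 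The residual $\sum_{t,i}\frac{1}{\eta_{t}}\E\big(\sum_{j}([\y_{t,i}]_{j}-[\x_{t+1,i}]_{j})\big)$ is nonpositive: when $\text{SUM}_{1}\le1$ the update yields $[\x_{t+1,i}]_{a}\ge[\y_{t,i}]_{a}$ coordinatewise on $\V_{i}$, when $\text{SUM}_{1}>1$ it yields $\sum_{a\in\V_{i}}[\x_{t+1,i}]_{a}=1\ge\sum_{a\in\V_{i}}[\y_{t,i}]_{a}$, and the coordinates outside $\V_{i}$ match exactly. Collecting everything gives the bound on the surrogate regret against $\bar{\x}_{t}$.

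Finally I would upgrade $\bar{\x}_{t}$ to the rounded point $\sum_{i\in\N}\x_{t,i}\odot\one_{\V_{i}}$ exactly as in the proof of Lemma~\ref{thm:3}: Assumption~\ref{ass:4} gives $|F_{t}(\x)-F_{t}(\y)|\le G\|\x-\y\|_{1}$, hence $|F_{t}(\sum_{i}\x_{t,i}\odot\one_{\V_{i}})-F_{t}(\bar{\x}_{t})|\le G\sum_{i}\|\x_{t,i}-\bar{\x}_{t}\|_{1}$, which Lemma~\ref{lemma::3} again bounds by $G\sum_{\tau}N^{3/2}\beta^{t-\tau}(1-\gamma)^{t-\tau}\eta_{\tau}$; adding this over $t$ promotes the leading coefficient from $3G^{2}+LDG$ to $4G^{2}+LDG$, producing the claimed inequality. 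The only genuinely delicate part lives in Lemmas~\ref{lemma::5} and \ref{lemma::11}: because KL divergence violates the Lipschitz condition of Assumption~\ref{ass:3+}, the moving-comparator telescoping cannot be done naively, and it is precisely the uniform-mixing device that keeps every coordinate bounded below by $\gamma/n$ and makes $\log(n/\gamma)$ play the role of the constant $K$ of the Euclidean case; everything else here is bookkeeping of geometric sums and constants.
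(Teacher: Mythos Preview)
Your proposal is correct and follows essentially the same assembly as the paper: start from Lemma~\ref{lemma::4}, plug in the consensus bounds of Lemma~\ref{lemma::3}, handle the KL telescoping via Lemmas~\ref{lemma::5} and \ref{lemma::11} together with the observation that $\sum_{j}([\y_{t,i}]_{j}-[\x_{t+1,i}]_{j})\le 0$, and then upgrade $\bar{\x}_{t}$ to $\sum_{i}\x_{t,i}\odot\one_{\V_{i}}$ exactly as in Lemma~\ref{thm:3}. One minor descriptive slip: in Lemma~\ref{lemma::11} the cross term \textcircled{3} is killed by the convexity of $x\mapsto\log(1/x)$ (Jensen), not by the separate-convexity Assumption~\ref{ass:3}, but this does not affect your assembly since you are only invoking that lemma.
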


From \citet{calinescu2011maximizing,chekuri2014submodular}, we know that the optimal value of continuous problem Eq.\eqref{equ:continuous_max} is equal to the optimal value of the corresponding discrete submodular maximization Eq.\eqref{equ:problem_t}, so we can set $\x^{*}_{t}:=\one_{\mathcal{A}_{t}^{*}}$ where $\mathcal{A}_{t}^{*}$ is the maximizer of Eq.\eqref{equ:problem_t}.

Next, like Lemma~\ref{lemma:rounding1}, we also can show a relationship between $\E(f_{t}\big(\cup_{i\in\N}\{a_{t,i}\}\big))$ and $\E\big(F_{t}\big(\sum_{i\in\N}\x_{t,i}\odot\one_{\V_{i}}\big)$.
\begin{lemma}
	If the function $f_{t}$ is monotone submodular and $a_{t,i}$ is the action taken via the agent $i\in\N$ at time $t$, then we have 
	\begin{equation*}
		\E\Big(f_{t}\big(\cup_{i\in\N}\{a_{t,i}\}\big)\Big)\ge\E\Big(\big(F_{t}\big(\sum_{i\in\N}\x_{t,i}\odot\one_{\V_{i}}\big)\Big).  
	\end{equation*}
\end{lemma}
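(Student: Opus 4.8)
The plan is to mirror the structure of the proof of Lemma~\ref{lemma:rounding1} verbatim, since the only change is that Algorithm~\ref{alg:BDOEA} performs the mixing step (Line~7) and then uses the same rounding rule (Lines~5--6) to select $a_{t,i}$ from $\V_{i}$ with probability $[\x_{t,i}]_{a}/\text{SUM}$, where $\text{SUM}=\sum_{a\in\V_i}[\x_{t,i}]_a$. Crucially, the action $a_{t,i}$ is still drawn according to the \emph{un-mixed} local variable $\x_{t,i}$, so the rounding analysis of Lemma~\ref{lemma:rounding1} applies without modification. I would first establish the base case $N=1$: using monotonicity of $f_t$ together with $\|\x_{t,1}\|_1\le 1$ (which holds because $\x_{t,1}\in\C$ by Lemma~\ref{lemma:included_convex_constaint2}), I get $\E(F_t(\x_{t,1}))\le\E\big(F_t(\x_{t,1}/\|\x_{t,1}\|_1)\big)$; then by the definition of the multilinear extension and submodularity of $f_t$, $F_t(\x_{t,1}/\|\x_{t,1}\|_1)=\E_{\mathcal{R}\sim \x_{t,1}/\|\x_{t,1}\|_1}(f_t(\mathcal{R}))\le \E_{\mathcal{R}}\big(\sum_{a\in\mathcal{R}}f_t(a)\big)=\sum_{a\in\V}\frac{[\x_{t,1}]_a}{\|\x_{t,1}\|_1}f_t(a)=\E(f_t(a_{t,1}))$, where the last equality is exactly the rounding rule in Lines~5--6.

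For the inductive step $N>1$, I would fix $\V_N$, write $s_i=\sum_{a\in\V_i}[\x_{t,i}]_a\le 1$, and for a random set $\mathcal{R}\sim\sum_{i\in\N}\frac{\x_{t,i}}{s_i}\odot\one_{\V_i}$ set $\mathcal{R}_1=\mathcal{R}\setminus\V_N$. Starting from $\E\big(F_t(\sum_{i\in\N}\x_{t,i}\odot\one_{\V_i})\big)\le\E\big(F_t(\sum_{i\in\N}\frac{\x_{t,i}}{s_i}\odot\one_{\V_i})\big)=\E_{\mathcal{R}}(f_t(\mathcal{R}))$ by monotonicity, I decompose $f_t(\mathcal{R})=f_t(\mathcal{R}_1)+\big(f_t(\mathcal{R})-f_t(\mathcal{R}_1)\big)$, condition on $\mathcal{R}_1$, and observe that $\mathcal{R}\mapsto f_t(\mathcal{R})-f_t(\mathcal{R}\setminus\V_N)$ is a submodular (in fact monotone submodular) function over $\V_N$ for each fixed $\mathcal{R}_1$; applying the single-agent argument from the base case to this residual function over $\V_N$ bounds $\E_{\mathcal{R}}(f_t(\mathcal{R})-f_t(\mathcal{R}_1)\mid\mathcal{R}_1)\le \E_{\mathcal{R}}(f_t(\mathcal{R}_1\cup\{a_{t,N}\})-f_t(\mathcal{R}_1)\mid\mathcal{R}_1)$, since $a_{t,N}$ is sampled from $\V_N$ with probabilities $[\x_{t,N}]_{\V_N}/s_N$. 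Re-combining gives $\le\E_{\mathcal{R}}(f_t(\mathcal{R}_1\cup\{a_{t,N}\}))$, and then applying the inductive hypothesis to the remaining $N-1$ agents (whose local variables restricted to $\cup_{i<N}\V_i$ together with the fixed action $a_{t,N}$ form the relevant configuration) together with monotonicity yields $\le\E\big(f_t(\cup_{i\in\N}\{a_{t,i}\})\big)$.

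I do not expect a genuine obstacle here, since the mixing step does not enter the argument at all. The only point requiring a small amount of care is being explicit that Lemma~\ref{lemma:included_convex_constaint2} guarantees $\x_{t,i}\in\C$, hence $s_i\le 1$ and $\|\x_{t,1}\|_1\le1$, so that the monotonicity steps $F_t(\x)\le F_t(\x/\|\x\|_1)$ and $F_t(\sum_i\x_{t,i}\odot\one_{\V_i})\le F_t(\sum_i\frac{\x_{t,i}}{s_i}\odot\one_{\V_i})$ are valid; and noting that the induction is over the agent index, with the residual function $f_t(\cdot)-f_t(\cdot\setminus\V_N)$ inheriting monotone submodularity from $f_t$. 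The proof is therefore essentially a restatement of the proof of Lemma~\ref{lemma:rounding1}, and I would simply write ``The proof is identical to that of Lemma~\ref{lemma:rounding1}, since the rounding step in Lines~5--6 of Algorithm~\ref{alg:BDOEA} coincides with that of Algorithm~\ref{alg:BDOMA} and uses the un-mixed variable $\x_{t,i}$.''
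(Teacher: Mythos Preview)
Your proposal is correct and matches the paper's own treatment: the paper does not give a separate proof for this lemma but simply writes ``like Lemma~\ref{lemma:rounding1}'' and states the result, relying on the fact that the rounding step in Algorithm~\ref{alg:BDOEA} (Lines~5--6) is identical to that of Algorithm~\ref{alg:BDOMA} and uses the un-mixed variable $\x_{t,i}$. Your observation that Lemma~\ref{lemma:included_convex_constaint2} supplies $\x_{t,i}\in\C$ (hence $s_i\le 1$) is exactly the one adaptation needed, and your suggested one-line proof is precisely what the paper does.
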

Finally,  we get
\begin{equation*}
	\begin{aligned}
		&\Big(\frac{1-e^{-c}}{c}\Big)\sum_{t=1}^{T}f_{t}(\mathcal{A}^{*}_{t})-\sum_{t=1}^{T}\E\Big(F_{t}\big(\cup_{i\in\N}\{a_{t,i}\}\big)\Big)\le(4G^{2}+LDG)\Big(\sum_{t=1}^{T}\sum_{\tau=1}^{t}N^{\frac{3}{2}}\beta^{t-\tau}(1-\gamma)^{t-\tau}\eta_{\tau}\Big)+G\gamma D\\&+\frac{N^{2}\log(\frac{n}{\gamma})}{\eta_{T+1}}+\sum_{t=1}^{T}\frac{N\log(\frac{n}{\gamma})}{\eta_{t+1}}\|\x_{t+1}^{*}-\x_{t}^{*}\|_{1}+\sum_{t=1}^{T}\frac{2N^{2}\gamma}{\eta_{t}}+\frac{NG}{2}\sum_{t=1}^{T}\eta_{t}.
	\end{aligned}
\end{equation*}


\end{document}